\documentclass[sn-mathphys-num]{sn-jnl}

\usepackage[T1]{fontenc}
\usepackage{graphicx}
\usepackage{amsfonts,amsmath,amssymb,amsthm}
\usepackage[dvipsnames]{xcolor}
\usepackage{verbatim}
\usepackage[ruled,vlined]{algorithm2e}
\usepackage[shortlabels]{enumitem}
\usepackage{float}
\usepackage{fullpage}
\usepackage{tikz}
\usetikzlibrary{positioning}
\usetikzlibrary{arrows}
\usepackage{caption}
\usepackage{indentfirst}

\RequirePackage{zref-clever}
\zcsetup{
        cap=true, % capitalize types
        nameinlink,
}
\zcRefTypeSetup{def}{Name-sg=Definition, Name-pl=Definitions, name-sg=definition, name-pl=definitions}
\zcRefTypeSetup{lem}{
    Name-sg = Lemma, Name-pl=Lemmas, name-sg=lemma, name-pl=lemmas}
\zcRefTypeSetup{thm}{
    Name-sg = Theorem, Name-pl=Theorems, name-sg=theorem, name-pl=theorems}
\zcRefTypeSetup{prop}{
    Name-sg = Proposition, Name-pl=Propositions, name-sg=proposition, name-pl=propositions}
\zcRefTypeSetup{cor}{
    Name-sg = Corollary, Name-pl=Corollaries, name-sg=corollary, name-pl=corollaries}
\zcRefTypeSetup{algocf}{
    Name-sg = Algorithm, Name-pl=Algorithms, name-sg=algorithm, name-pl=algorithms}
\zcRefTypeSetup{claim}{
    Name-sg=Claim, Name-pl=Claims, name-sg=claim, name-pl=claims}
\zcRefTypeSetup{fact}{
    Name-sg=Fact, Name-pl=Facts, name-sg=fact, name-pl=facts}
\zcRefTypeSetup{maintheorem}{
    Name-sg=Main Theorem, Name-pl=Main Theorems, name-sg=main theorem, name-pl=main theorems}
\zcRefTypeSetup{conj}{
    Name-sg=Conjecture, Name-pl=Conjectures, name-sg=conjecture, name-pl=conjectures}
\zcRefTypeSetup{constr}{
    Name-sg=Construction, Name-pl=Constructions, name-sg=construction, name-pl=constructions}
\zcRefTypeSetup{assum}{
    Name-sg=Assumption, Name-pl=Assumptions, name-sg=assumption, name-pl=assumptions}
\zcRefTypeSetup{remark}{
    Name-sg=Remark, Name-pl=Remarks, name-sg=remark, name-pl=remark}

\usepackage[multiuser,inline,nomargin]{fixme}

\newcommand{\cref}[1]{\zcref{#1}}
\newcommand{\Cref}[1]{\zcref[S]{#1}}

\numberwithin{equation}{section}

\RequirePackage{mathtools}

\newtheorem{thm}{Theorem}[section]
\newtheorem{lem}{Lemma}[section]
\newtheorem{definition}{Definition}[section]
\newtheorem{constr}{Construction}[section]
\newtheorem{prop}{Proposition}[section]
\newtheorem{remark}{Remark}[section]
\newtheorem{cor}{Corollary}[section]

\newcommand{\F}{\mathbb{F}}
\newcommand{\N}{\mathbb{N}}

\newcommand{\veca}{\ensuremath{\mathbf{a}}}

\newcommand{\vecc}{\ensuremath{\mathbf{c}}}

\newcommand{\vece}{\ensuremath{\mathbf{e}}}

\newcommand{\vecg}{\ensuremath{\mathbf{g}}}

\newcommand{\vecr}{\ensuremath{\mathbf{r}}}

\newcommand{\vecu}{\ensuremath{\mathbf{u}}}
\newcommand{\vecv}{\ensuremath{\mathbf{v}}}

\newcommand{\veczero}{\ensuremath{\mathbf{0}}}
\newcommand{\vecone}{\ensuremath{\mathbf{1}}}

\newcommand{\cA}{\ensuremath{\mathcal{A}}}
\newcommand{\cB}{\ensuremath{\mathcal{B}}}

\newcommand{\cE}{\ensuremath{\mathcal{E}}}
 
\newcommand{\cG}{\ensuremath{\mathcal{G}}}

\newcommand{\cM}{\ensuremath{\mathcal{M}}}

\newcommand{\cO}{\ensuremath{\mathcal{O}}}

\newcommand{\cS}{\ensuremath{\mathcal{S}}}

\newcommand{\mA}{\ensuremath{\mathbf{A}}}
\newcommand{\mB}{\ensuremath{\mathbf{B}}}

\newcommand{\mD}{\ensuremath{\mathbf{D}}}

\newcommand{\mG}{\ensuremath{\mathbf{G}}}
\newcommand{\mH}{\ensuremath{\mathbf{H}}}
\newcommand{\mI}{\ensuremath{\mathbf{I}}}

\newcommand{\mM}{\ensuremath{\mathbf{M}}}

\newcommand{\mP}{\ensuremath{\mathbf{P}}}

\newcommand{\mS}{\ensuremath{\mathbf{S}}}

\newcommand{\defn}{\vcentcolon=}

\newcommand{\diag}{\text{diag}}

\renewcommand{\cal}[1]{\mathcal{#1}}
\newcommand{\code}{\cal{C}}

\renewcommand{\Im}{\operatorname{Im}}
\DeclareMathOperator{\supp}{supp} % support
\DeclareMathOperator{\Aut}{Aut} % (permutation) automorphism group
\DeclareMathOperator{\MAut}{MAut} % monomial automorphism group
\DeclareMathOperator{\Iso}{Iso} % isomorphism group
\DeclareMathOperator{\Isom}{Isom} % linear isometry group
\DeclareMathOperator{\rank}{rank} % rank
\DeclareMathOperator{\rref}{RREF} % reduced row echelon form
\DeclareMathOperator{\id}{id} % identity map

\let\emptyset\varnothing

\DeclarePairedDelimiter\inner{\langle}{\rangle}
\DeclarePairedDelimiter\bracks{[}{]}
\DeclarePairedDelimiterX\braket[2]{\langle}{\rangle}{#1 \delimsize\vert #2}
\DeclarePairedDelimiter\set{\{}{\}}
\DeclarePairedDelimiter\abs{\lvert}{\rvert}
\DeclarePairedDelimiter\parens{(}{)}

\newcommand{\class}[1]{\ensuremath{\mathsf{#1}}\xspace}

\newcommand{\CE}{\class{CE}}
\newcommand{\GI}{\class{GI}}
\newcommand{\LIP}{\class{LIP}}
\newcommand{\PCE}{\class{PCE}}
\newcommand{\LCE}{\class{LCE}}
\newcommand{\sPCE}{\class{search\text{-}PCE}}
\newcommand{\sLCE}{\class{search\text{-}LCE}}
\newcommand{\AGEN}{\class{AGEN}}
\newcommand{\MAGEN}{\class{MAGEN}}
\newcommand{\APART}{\class{APART}}
\newcommand{\Apart}{\class{APART}}
\newcommand{\MAPART}{\class{MAPART}}
\newcommand{\MApart}{\class{MAPART}}
\newcommand{\MAcount}{\class{MACOUNT}}
\newcommand{\MACOUNT}{\class{MACOUNT}}
\newcommand{\Acount}{\class{ACOUNT}}
\newcommand{\ACOUNT}{\class{ACOUNT}}

\newcommand{\ICOUNT}{\class{ICOUNT}}
\newcommand{\MICOUNT}{\class{MICOUNT}}

\newcommand{\Gpin}{\ensuremath{\mathbf{G}_\text{fixed}}}

\newcommand{\Sym}{\ensuremath{\text{Sym}}}

\title{Code Equivalence and Automorphism Problems for Codes}

\author[1]{\fnm{Jean-Fran\c{c}ois} \sur{Biasse}}\email{biasse@usf.edu}

\author*[1]{\fnm{Alexandra V.} \sur{Hostetler}}\email{avelichehostetler@usf.edu}

\author*[1]{\fnm{Anuvrat} \sur{Jaindungarwal}}\email{anuvrat@usf.edu}

\affil*[1]{\orgdiv{Department of Mathematics and Statistics}, \orgname{University of South Florida},\\ \orgaddress{\street{ 4202 E Fowler Avenue}, \city{Tampa}, \postcode{33620}, \state{FL}, \country{U.S.A.}}}

\abstract{
We study the complexity of the Code Equivalence problem and show that it is polynomially equivalent to several computational automorphism problems for codes.
These problems ask for the cardinality (ACOUNT), an orbit partition (APART), and a generating set (AGEN) for the permutation automorphism group of a code.
We present deterministic, polynomial-time reductions between Permutation Code Equivalence (PCE) and each of these problems, including a one-shot reduction from search-PCE to AGEN that makes a single oracle call.
We present similar reductions between Linear Code Equivalence (LCE) and analogous problems for the monomial automorphism group of a code.
All of our reductions work for any linear codes.
}

\keywords{code equivalence, automorphism group, permutation, monomial automorphism, linear isometry}

\begin{document}

\maketitle

\section{Introduction}%
\label{sec:intro}

A (linear) code $\code \subseteq \F_q^n$ over a finite field $\F_q$ is a linear subspace of $\F_q^n$.
The \emph{Code Equivalence} (\CE) problem asks whether two codes $\code_1$ and $\code_2$ are structurally equivalent in some way.
Variants of this problem specify the type of equivalence:
\emph{Permutation Code Equivalence} (\PCE) asks whether the codes are equivalent up to a permutation of coordinates of their codewords, while
\emph{Linear Code Equivalence} (\LCE) asks whether the codes are equivalent up to a permutation and non-zero scaling of the codeword coordinates.
Search variants of these problems (\sPCE and \sLCE) ask one to find such an equivalence map, if it exists.

% reductions between CE variants
A few prior works have studied the relationship between variants of \CE.
Biasse and Micheli in \cite{BM23} proved a deterministic search-to-decision reduction for \PCE that runs in time polynomial in $n$, the length of the input codes, and makes $n^2$ oracle calls.
This result was extended by \cite{BMPW26}, and independently by \cite{Mazumder26}, to a similar polynomial-time reduction from \sLCE to \LCE.
We restate their results below.

\begin{lem}[adapted from \cite{BM23}]
    \label{lem:search-to-decision-PCE}
    There is a deterministic reduction from search-\PCE on $n$-length codes to \PCE which runs in $O(n^3\log(q))$ time and makes at most $n^2$ oracle calls.
\end{lem}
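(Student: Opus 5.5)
We will follow the approach of \cite{BM23}. The plan is to fix the image of one coordinate at a time using a gadget that makes a chosen coordinate recognizable under any permutation. Then we query the \PCE oracle to decide which target coordinate it may map to.

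Concretely, let $\code_1,\code_2\subseteq\F_q^n$ with generator matrices $\mG_1,\mG_2$ of dimension $k$. First call the oracle once on $(\code_1,\code_2)$ and reject if they are inequivalent. For a pair $(i,j)$ we build extended codes $\code_1^{(i)},\code_2^{(j)}$ by appending to position $i$ of $\code_1$ (respectively $j$ of $\code_2$) a ``marker''. The simplest choice is to append $m$ fresh coordinates that are copies of coordinate $i$ (respectively $j$), with $m$ larger than any multiplicity already present, say $m=n+1$.

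The key lemma to establish is that $\code_1^{(i)}$ and $\code_2^{(j)}$ are permutation equivalent if and only if there is a permutation equivalence $\sigma$ from $\code_1$ to $\code_2$ with $\sigma(i)=j$. The forward direction is direct. For the converse, we use the fact that permutations preserve, for each coordinate, the class of coordinates whose projections coincide. A class of size at least $m+1$ exists only because of the marker, so it must map to the corresponding class in the other code. Restricting to the original coordinates then gives an equivalence sending the class of $i$ to the class of $j$. Any element of that class may be relabeled to be $i$ itself, since identical columns can be swapped freely. Care is needed with zero columns and with columns that already repeat. Handling these cases is the main obstacle, and the remedy is to work with the classes of identical columns throughout.

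With the lemma in hand, the algorithm processes $i=1,\dots,n$ in turn. For each $i$ it tries candidate targets $j$ not yet used, asking the oracle whether the two codes, augmented by markers for all previously fixed pairs together with $(i,j)$, are equivalent. Different fixed pairs need distinguishable markers, for instance distinct multiplicities $m_1<m_2<\cdots$. To keep the length polynomial, we instead absorb each fixed pair by replacing the coordinate with a marked block and continuing. Each $i$ uses at most $n$ queries, giving at most $n^2$ in total.

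Invariance guarantees that some $j$ always succeeds, so the greedy choices never dead-end, and the final permutation is verified directly. Each step builds a generator matrix of size polynomial in $n$ via field operations. Taking codes of length $O(n)$ throughout and counting $O(n)$ work per query over $n^2$ queries gives the $O(n^3\log q)$ bound, with $\log q$ the cost of writing field elements. This accounting, and keeping the augmented length $O(n)$ rather than quadratic, is the delicate part we would adapt from \cite{BM23}.
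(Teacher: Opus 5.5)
The paper does not prove this lemma; it imports it from \cite{BM23}. The technique it attributes to that work, and reuses in its own reductions from \ACOUNT and \AGEN to \PCE, is your frequency/marker argument, so the approach matches.

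Your correctness part is sound once you phrase everything in terms of classes of identical columns. An equivalence preserves the multiset of class sizes, so the unique oversized class must go to the unique oversized class, which also forces the original multiplicities of $i$ and $j$ to agree. Within-class swaps are automorphisms, so zero or repeated columns need no special treatment. One small fix: your initial equivalence check costs one call, so to stay within $n^2$ calls you should not query the last remaining candidate for each $i$.

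The gap is the complexity claim. This is the only quantitative content of the statement beyond polynomial time, and you explicitly defer it to \cite{BM23}. ``Absorb each fixed pair by replacing the coordinate with a marked block'' is not a defined operation. You yourself require pairwise distinct multiplicities for the fixed classes, and they are necessary: otherwise the oracle may answer \textsc{Yes} via an equivalence that swaps two fixed classes, and your invariant that some equivalence extends all previous choices fails. With distinct multiplicities, the $s$ classes fixed so far already occupy at least $1+2+\cdots+s$ columns, so the augmented codes cannot stay of length $O(n)$. In the worst case (e.g.\ all columns of $\mG_1$ distinct and the correct target always tried last), $\Theta(n^2)$ of your queries are on codes of length $\Omega(n^2)$. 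The paper's own uniform choice of $(l+1)m$ copies in $\Gpin \gets [\Gpin \mid \mA]$ can push the length to order $n^3$.

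Independently, ``$O(n)$ work per query'' ignores that each query is a $k\times L$ matrix over $\F_q$, which costs $\Theta(kL\log q)$ to write. Your argument therefore gives a deterministic polynomial-time reduction with at most $n^2$ calls, but not the $O(n^3\log q)$ bound. To obtain that bound you would need either a cost model in which oracle inputs are not charged in full (e.g.\ only the changed candidate block is charged), or a genuinely different individualizing gadget. The proposal supplies neither.
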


\begin{lem}[adapted from \cite{BMPW26}]
    \label{lem:search-to-decision-LCE}
    There is a deterministic reduction from search-\LCE on $n$-length codes to \LCE which runs in $O(n^3\log(q))$ time and makes at most $n^2$ oracle calls.
\end{lem}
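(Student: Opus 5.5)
The plan follows the strategy of \cite{BM23} for \PCE, adapted to the monomial setting. Suppose the \LCE oracle has confirmed that the input codes $\code_1,\code_2\subseteq\F_q^n$ are linearly equivalent. We build a monomial map $\mM=\mP\mD$ column by column, where $\mP$ is a permutation and $\mD=\diag(d_1,\dots,d_n)$. At each step we commit to one image (index together with scalar) of one coordinate, and keep the invariant that the current pair of codes is still linearly equivalent. The only thing that changes compared with the permutation case is the tool used to ``pin'' a coordinate.

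\textbf{Step 1 (the pinning gadget).} Given indices $i,j$, I want to modify $\code_1$ and $\code_2$ into codes $\code_1^{(i)}$ and $\code_2^{(j)}$ of length $n+m$ with the following property: $\code_1^{(i)}$ and $\code_2^{(j)}$ are linearly equivalent if and only if some monomial equivalence from $\code_1$ to $\code_2$ sends coordinate $i$ to coordinate $j$, possibly with a scalar. For a permutation this is done by appending copies of coordinate $i$ (respectively $j$), which gives a coordinate multiplicity that no other column can match. Under monomial maps, scalar multiples of a column are indistinguishable from the column itself, so I would compare columns projectively. Concretely, I would append $m$ copies of the $i$th column, with $m$ larger than the number of columns of $\code_1$ projectively equal to column $i$. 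Then the projective class of column $i$ in $\code_1^{(i)}$ has a distinguished size. Any monomial equivalence must preserve the multiset of projective column classes together with their multiplicities, so it must send this class to the enlarged class in $\code_2^{(j)}$.

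\textbf{Step 2 (search loop).} For $i=1,\dots,n$, try each $j$ that is still free and query the oracle on the pinned pair. This uses at most $n$ queries per coordinate, so at most $n^2$ in total. Once a match is found, keep the pinned codes so that earlier choices stay fixed; the multiplicities then grow, and I would choose them distinct at each stage, e.g.\ $m_i=(i+1)n$, so that different pinned classes cannot be swapped. Matching columns only up to projective class does not pin down the scalars $d_i$. After the permutation is fixed, I would therefore recover $\mD$ by linear algebra. Apply $\mP$, and then solve for the diagonal $\mD$ such that $\mG_1\mP\mD$ spans $\code_2$. This is a linear condition: writing $\mH_2$ for a parity-check matrix of $\code_2$, we need $\mG_1\mP\mD\mH_2^{T}=0$, which is a linear system in the unknowns $d_1,\dots,d_n$. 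Gaussian elimination solves it in $O(n^3)$ field operations, each costing $O(\log q)$ up to lower-order factors. Any solution with all $d_i\neq 0$ works. To make sure one exists, I would fix the scale of one column per connected component, or simply note that the solution space contains the $\mD$ of a true equivalence consistent with the fixed permutation.

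\textbf{Main obstacle.} The delicate step is Step 1 in the presence of repeated or zero columns, together with the guarantee in Step 2 that a nowhere-zero $\mD$ exists in the solution space. Choosing an arbitrary element of the solution space may produce zero entries. I would first try to resolve this by also pinning scalars: append the column $c$ and $\lambda c$ for candidate $\lambda$, which would cost too many queries. A better route is to handle each connected component of the code's matroid (under direct-sum decomposition) separately. Within a component the scalings are determined up to one global constant, so the solution space restricted to that component is one-dimensional and nonzero wherever it is nonzero. This gives the required $\mD$ without extra oracle calls.
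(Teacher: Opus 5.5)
The paper gives no proof of this lemma; it imports it from \cite{BMPW26}. Your argument therefore has to stand on its own. The outline is sensible: pin via projective column frequencies (\cref{lem:col-frequency-LCE}), then recover $\mD$ by linear algebra one connected component at a time. Once a correct permutation is fixed, the scalar step is right. If $\mathbf{d}^*$ is one valid scaling, the solution set of your system is $\mathbf{d}^*\odot\operatorname{span}\{\vecone_P\}$, with $P$ ranging over the components of $\pi(\code_1)$, so solving per component yields a nowhere-zero $\mathbf d$.

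\textbf{The step that fails is the pinning schedule.} Your gadget constrains only projective classes, yet you pin every coordinate $i=1,\dots,n$. A class $K$ with several coordinates therefore accumulates several blocks and has augmented size $\abs{K}+\sum_{i\in P_K}m_i$, where $P_K$ is the set of pinned coordinates in $K$. Distinct $m_i$ do not give distinct sums: with $m_i=(i+1)n$ we have $m_1+m_4=m_2+m_3$, so two pinned classes can become interchangeable.

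Concretely, let $u=(1,0)^T$, $v=(0,1)^T$, $\mG_1=[u\,v\,v\,u\,u\,v]$ and $\mG_2=[u\,v\,v\,u\,v\,u]$, so $n=6$ and $m_i=6(i+1)$.
\begin{enumerate}[(i)]
\item Your loop correctly accepts $1\mapsto 1,\dots,4\mapsto 4$. After that, the $u$-class and the $v$-class on both sides all have size $45$.
\item At $i=5$ the first free target $j=5$ is a $v$-column, yet it is accepted. The augmented codes have $(u,v)$-class sizes $(81,45)$ and $(45,81)$, and they are equivalent by exchanging the roles of $u$ and $v$.
\item The loop ends with $\pi=\id$. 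Then $\mG_1\mD\mH_2^T=0$ forces $\mathbf d=0$, so no valid scaling exists.
\end{enumerate}
Nothing here involves scalars; the same example breaks the \PCE version of your loop.

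\textbf{Repair.} Pin only one representative per projective class, as the paper does in \cref{alg:MACOUNT->LCE,alg:MAGEN->LCE}. There it uses the first-appearance indices $B'$ and multiplicity $(l+1)m$ for the $l$-th pinned class, with $m$ the maximal projective multiplicity. Each class then carries at most one block, and its augmented size identifies which pin it carries. Once all classes are matched, map the remaining members of each class bijectively, and arbitrarily, onto those of its partner. The result is the permutation part of some equivalence, because coordinates within a projective class can be exchanged by a monomial automorphism. Then solve for $\mD$ per component.

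\textbf{Smaller issues.}
\begin{enumerate}[(i)]
\item In Step~1, the number of appended copies must exceed the maximal projective multiplicity in both matrices, not only the size of the class of column $i$.
\item Your cost analysis does not establish the stated bound. With $m_i=(i+1)n$ the queries reach length $\Theta(n^3)$, so writing the $n^2$ queries alone exceeds $O(n^3\log q)$.
\item The system $\mG_1\mP\mD\mH_2^T=0$ has $k(n-k)$ equations, so naive elimination costs $O(n^4)$ field operations, not $O(n^3)$. An $O(n^3)$ route exists: compare $\rref(\mG_1\mP)$ with $\rref(\mG_2)$, whose entries must satisfy $h_{rc}=g_{rc}d_c/d_{p_r}$ as in the proof of \cref{lem:indecomposable-scaling}. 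Then propagate these ratios along a spanning forest of the support graph, fixing one scalar per component.
\end{enumerate}
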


In \cite{CSV25}, Cheraghchi, Shagrithaya, and Veliche (Hostetler) proved a Karp reduction from \PCE to \LCE that runs in time polynomial in $n$ and polylogarithmic in $q$, the field size.
The recent survey \cite{survey26} details the current state-of-the-art of algorithms for solving variants of \CE as well as the complexity.

%  relationship between isomorphism problems
\PCE and \LCE belong to a family of \emph{isomorphism problems}, a general class of problems that ask whether two objects are structurally the same.
These include the well-known Graph Isomorphism problem (\GI) and Lattice Isomorphism problem (\LIP).
\GI asks whether two graphs have an isomorphism that maps the vertices of one graph to the vertices of the other and preserves the edge structure.
\LIP asks whether two geometric lattices, described by two basis matrices, have an orthogonal linear transformation that maps one lattice to the other.
In \cite{PR06}, Petrank and Roth showed a polynomial-time reduction from \GI to \PCE.
Subsequent reductions from \GI to \CE variants were shown by Kaski and \"{O}sterg\aa rd in \cite{KO06} and Grochow in \cite{Grochow12}.
Bennett and Win in \cite{BW24} later studied the relationship between \GI, \LIP, and variants of \CE and showed a fine-grained dimension-preserving reduction from \GI to \LCE as well as a polynomial-time reduction from \LCE over prime fields to \LIP.

On its own, the \GI problem has received much attention in the last half-century.
Among others, Mathon \cite{Mathon79} showed that the problem of recognizing whether two graphs are isomorphic is just as hard as counting all the isomorphisms between the two graphs.
In particular, he proved that \GI is polynomially equivalent to 
\ACOUNT, which asks for the cardinality of a graph's automorphism group, 
\APART, which asks for a partition of the graph's vertex set into orbits under automorphisms, and
\AGEN, which asks for a generating set of automorphisms that generate the graph's automorphism group.

% cryptographic interest
Besides isomorphism problems being interesting in their own right, they form the foundation of several modern cryptographic schemes.
Variants of \CE have been used as the basis of zero-knowledge protocols \cite{Girault90, SS13} and the LESS identification and digital signature scheme \cite{BarenghiBNPS22,BarenghiBPS21,LESS20}.
\LIP is used as the basis of the HAWK digital signature scheme \cite{DucasW22,HAWK-paper}. 
The use of \CE in particular, as the hardness assumption of a post-quantum digital signature, was the motivation of many recent works on efficient algorithms for solving it \cite{BarenghiBPS23,DAlconzo24,BudroniCDSK24,ChouPS25,BennettBBDLLW25,Nowakowski25,BennettBBDLLW26,BardetBOSS26,BattagliolaMS26,BattagliolaHMMSSW26,BudroniE26,BudroniEFN26,AA26}.
Moreover, cryptosystems based on the hardness of isomorphism problems were presented to the NIST standardization process. 
The LESS scheme based on \CE made it to the second round~\cite{LESS-NIST}, while the HAWK proposal based on \LIP made it to the third round~\cite{HAWK-NIST}.
Given the recent and increasing interest in basing cryptographic schemes on the hardness of code- and lattice-based problems, it is all the more crucial to thoroughly understand the complexity of these computational problems relative to other more well-studied problems.
This will help inform future attempts to construct such cryptographic schemes.
\bigskip

% \subsection{Contributions}
% \label{subsec:intro-contributions}
\textbf{Contributions.}
This work gives polynomial-time reductions between variants of \CE and several automorphism problems for codes.
We define \ACOUNT, \APART, and \AGEN problems for codes analogously to those defined in \cite{Mathon79} for graphs; these ask for the cardinality, an orbit partition, and a generating set of the permutation automorphism group of a given code, respectively.
To match \cite{Mathon79}, we also define the \ICOUNT problem for codes, which asks for the cardinality of the isomorphism group of permutations that maps one code to the other;
this is closely related to \ACOUNT.
We give deterministic polynomial-time reductions that proves that \PCE is equivalent to \ACOUNT, \APART, \AGEN, and \ICOUNT.

In this work, we also prove a \emph{one-shot} deterministic polynomial-time reduction from \sPCE to \AGEN, where the algorithm solves \sPCE by making a single oracle call to \AGEN.
To our knowledge, there is no graph analog of this reduction.

We define analogous problems for the \emph{monomial automorphism} group of a code, which allows non-zero scaling of codeword coordinates in addition to permutation;
these problems are \MACOUNT, \MAPART, \MAGEN, and \MICOUNT.
We prove analogous deterministic polynomial-time reductions for all of the reductions for \PCE, showing that \LCE is equivalent to these monomial automorphism problems.

A few years ago, Grochow mentioned in \cite{Grochow20} the possibility of reducing \PCE to \AGEN for codes that cannot be decomposed into a direct sum of other codes.
This, together with the analogous result for \LCE, was stated informally in \cite{survey26}.
Our work provides formal proofs of these results with pseudocode for the algorithms and tight analysis of the runtime and number of oracle calls.
Moreover, our reductions work for \emph{any} linear codes.
\bigskip

\newlength{\mywidth}
\setlength{\mywidth}{3cm}

\begin{figure}[ht]
    \centering
    \begin{tikzpicture}%[scale=\tikzscale]
    
        \node (PCE) at (-\mywidth,0) {\PCE};
        \node (sPCE) [xshift=-\mywidth] at (canvas polar cs:radius=2cm,angle=90) {search-\PCE};
        \node (AGEN) [xshift=-\mywidth] at (canvas polar cs:radius=2.2cm,angle=18) {\AGEN};
        \node (APART) [xshift=-\mywidth] at (canvas polar cs:radius=2.4cm,angle=162) {\APART};
        \node (ICOUNT) [xshift=-\mywidth] at (canvas polar cs:radius=2cm,angle=234) {\ICOUNT};
        \node (ACOUNT) [xshift=-\mywidth] at (canvas polar cs:radius=2cm,angle=306) {\ACOUNT};

        \draw[-latex,cyan] (sPCE) to[bend left=10] node[right,rotate=0] {} (AGEN);
        \draw[-latex,cyan] (AGEN) to[bend left=10] node[right,rotate=0] {} (PCE);
        
        \draw[-latex] (sPCE) to[bend right=10] (PCE); % node[left,rotate=0]  {{\tiny \cite{BM23}}} (PCE);
        \draw[-latex] (PCE) to[bend right=10] node[right,rotate=0] {} (sPCE);

        \draw[-latex,cyan] (ACOUNT) to[bend right=10] node[left,rotate=0] {} (PCE);
        \draw[-latex,cyan] (PCE) to[bend right=10] node[right,rotate=0] {} (ACOUNT);

        \draw[-latex,cyan] (APART) to[bend right=10] node[left,rotate=0] {} (PCE);
        \draw[-latex,cyan] (PCE) to[bend right=10] node[right,rotate=0] {} (APART);

        \draw[-latex,cyan] (ICOUNT) to[bend right=10] node[left,rotate=0] {} (ACOUNT);
        \draw[-latex,cyan] (PCE) to[bend right=10] node[right,rotate=0] {} (ICOUNT);
        
    % \end{tikzpicture}
    % \hspace{1cm}
    % \begin{tikzpicture}%[scale=\tikzscale]
    
        \node (LCE) at (\mywidth,0) {\LCE};
        \node (sLCE) [xshift=\mywidth] at (canvas polar cs:radius=2cm,angle=90) {search-\LCE};
        \node (MAGEN) [xshift=\mywidth] at (canvas polar cs:radius=2.2cm,angle=18) {\MAGEN};
        \node (MAPART) [xshift=\mywidth] at (canvas polar cs:radius=2.4cm,angle=162) {\MAPART};
        \node (MICOUNT) [xshift=\mywidth] at (canvas polar cs:radius=2cm,angle=230) % 234 
        {\MICOUNT};
        \node (MACOUNT) [xshift=\mywidth] at (canvas polar cs:radius=2cm,angle=306) {\MACOUNT};

        \draw[-latex,cyan] (sLCE) to[bend left=10] node[right,rotate=0] {} (MAGEN);
        \draw[-latex,cyan] (MAGEN) to[bend left=10] node[right,rotate=0] {} (LCE);
        
        \draw[-latex] (sLCE) to[bend right=10] (LCE); % node[left,rotate=0]  {{\tiny \cite{BMPW26}}} (LCE);
        \draw[-latex] (LCE) to[bend right=10] node[right,rotate=0] {} (sLCE);

        \draw[-latex,cyan] (MACOUNT) to[bend right=10] node[left,rotate=0] {} (LCE);
        \draw[-latex,cyan] (LCE) to[bend right=10] node[right,rotate=0] {} (MACOUNT);

        \draw[-latex,cyan] (MAPART) to[bend right=10] node[left,rotate=0] {} (LCE);
        \draw[-latex,cyan] (LCE) to[bend right=10] node[right,rotate=0] {} (MAPART);

        \draw[-latex,cyan] (MICOUNT) to[bend right=10] node[left,rotate=0] {} (MACOUNT);
        \draw[-latex,cyan] (LCE) to[bend right=10] node[right,rotate=0] {} (MICOUNT);

        \draw[-latex,dashed] (PCE) to (LCE); % to[bend right=5] node[above,rotate=0]  {{\tiny \cite{CSV25}}} (LCE);
        % \draw[-latex] (LCE) to[bend right=5] (PCE);        
    \end{tikzpicture}
    \caption{
        A map of reductions between Permutation Code Equivalence (\PCE), Linear Code Equivalence (\LCE) and computational code problems relating to automorphisms.
        The reductions proved in this work are represented by the {\color{cyan} blue} arrows, and the remaining arrows represent reductions shown in prior works.
    }
    \label{fig:reduction-map}
\end{figure}

% \subsection{Technical Overview}
% \label{subsec:intro-overview}
\textbf{Technical Overview.}
The isomorphism and automorphism problems for codes are defined formally in \cref{sec:prelims}.
We present auxiliary lemmas that describe the relationship between code isomorphisms and automorphisms, and related properties.

Our reductions rely on the inherent structure of the given code(s), namely its decomposability into a direct sum of other codes and the properties of the generator matrix used to describe the code.
In \cref{sec:code-decomp}, we present a framework for code decomposition and show that any code can be uniquely decomposed into a direct sum of indecomposable codes up to a permutation of the summands and (monomial) equivalence of each code in the sum.
We prove several useful properties of indecomposable codes in \cref{subsec:properties-decomp} and provide an efficient algorithm for finding the decomposition of any code in \cref{subsec:how-to-decompose}.

In \cref{subsec:setwise-stab-chain}, we present an algebraic structure that is a key component in several of our reduction algorithms.
Given a generator matrix of a code, its columns can be partitioned into redundancy classes containing column vectors that are the same, either exactly or up to scaling.
This defines a \emph{setwise stabilizer chain} of the automorphism group of a code.
An important property of this chain is that the automorphism group can be described entirely by the coset representatives of consecutive pairs of subgroups in the chain.

\cref{sec:APART} describes our reductions from \PCE to \APART and in reverse, from \APART to \PCE.
The analogous reductions between \LCE and \MAPART are also presented herein.
Similarly, \cref{sec:ACOUNT} describes our reductions from \PCE to \ACOUNT and vice versa, together with the analogous reductions between \LCE and \MACOUNT.
\cref{sec:AGEN} describes a reduction from \sPCE to \AGEN and from \AGEN to \PCE, as well as the analogous reductions from \sLCE to \MAGEN and from \MAGEN to \LCE.
Both \cref{sec:ACOUNT,sec:AGEN} make use of the reduction algorithms for \APART.

In this work, all codes are linear and all generator matrices are implicitly full-rank.
All our reductions are deterministic and polynomial in the size of the input, unless otherwise stated.

%%% Local Variables:
%%% mode: latex
%%% TeX-master: "main"
%%% End:

\section{Preliminaries}%
\label{sec:prelims}

For any positive integer $n$, we denote $[n] \defn \set{1,\ldots, n}$ and let $\cS_n$ be the symmetric group consisting of permutations on $[n]$.
For any subset $T \subset \cS_n$, we denote its symmetric group by $\Sym(T)$; we extend this notation in the natural way to vectors indexed by $T$.

For any prime power $q$, let $\F_q$ denote the $q$-ary finite field.
All codes considered in this work are linear, so we simply call them codes throughout.
For any code $\code \subseteq \F_q^n$ of length $n$ and dimension $k \leq n$, we say that $\code$ is a \emph{$[n,k]_q$-code}.
For any matrix $\mG\in \F_q^{k\times n}$, we use $\code(\mG) \subseteq \F_q^n$ to denote the code it generates.
We assume that any generator matrix $\mG$ of a code has full row rank. 
We call the set of indices of the coordinates of $\code$ its \emph{index set} and denote this by $X \defn [n]$.

The \emph{support} of a vector $\vecc \in \F_q^n$ is the set of indices containing non-zero coordinates of $\vecc$, i.e.
$\supp(\vecc) \defn \set{ i\in [n] \mid c_i \neq 0}$.
We extend this notation to sets in the natural way: for any $\code \subseteq \F_q^n$, we write $\supp(\code) \defn \bigcup_{\vecc \in \code} \supp(\vecc) \subseteq [n]$.
Applying Gauss-Jordan elimination to a matrix $\mG$ yields a matrix in reduced row echelon form (RREF), which we denote by $\rref(\mG) = \mG_{\rref}$.
The \emph{pivot} of a row in this RREF matrix refers to the leading entry of 1 and is the unique non-zero entry in its column.

We will use the notion of a \emph{support graph of a code}, which contains an edge for every pair of non-zero entries in a row of the generator matrix in RREF.

\begin{constr}[Support Graph]
 \label{constr:supp-graph}
    Given a generator matrix $\mG$ of a $[n,k]$-code $\code$, let $\mG_{\rref}$ be its corresponding RREF matrix.
    The \emph{support graph} $\Gamma(\mG) = (V,E)$ is the graph with $n$ vertices $V \defn [n]$ which contains an edge $(i,j) \in E$ if and only if there is a row $\vecr = (r_1,\ldots,r_n)$ in $\mG_{\rref}$ that has non-zero entries $r_i \neq 0$ and $r_j \neq 0$.
\end{constr}

Note that the support graph of a code is unique.   
This follows immediately from the fact that a linear code has a uniquely determined generator matrix in reduced row echelon form (RREF).
Hence, we use $\Gamma(\code)$ to denote the unique support graph of a code $\code$.

For any $[n,k]_q$-code $\code$, constructing its support graph from its $k\times n$ generator matrix takes $O(kn^2 \log(q))$ time.
This is because adding all the edges requires iterating through all $k$ rows, each of which can contain up to $n$ non-zero entries in $\F_q$.

\begin{definition}[Direct sum of codes]
    \label{def:direct-sum}
     The \emph{direct sum} of two codes $\code_1, \code_2$ generated by matrices $\mG_1 \in \F_q^{k_1\times n_1}$ and $\mG_2 \in \F_q^{k_2\times n_2}$, respectively, is the code $\code \defn \code_1 \oplus \code_2 \subseteq \F_q^n$ generated by the matrix
    \[
        \mG \defn
        \begin{bmatrix}
            \mG_1 & \veczero \\ \veczero & \mG_2
        \end{bmatrix}
        \in \F_q^{k\times n},
    \]
    where $n \defn n_1 + n_2$ and $k \defn k_1 + k_2$.
    $\code$ contains all vectors of the form $\bracks{\vecc_1\ \vecc_2}$, where $\vecc_1 \in \code_1$ and $\vecc_2 \in \code_2$.
\end{definition}

Throughout this paper, we only take the direct sum of two codes in the context of solving \PCE or \LCE.
For this use, we can assume without loss of generality that the generator matrices $\mG_1$ and $\mG_2$ of the input codes do not contain all-zero columns; this follows from Lemma 3.4 in \cite{CSV25}.
For any such direct sum code $\code = \code_1 \oplus \code_2 \subseteq \F_q^n$, we define $X_1$ and $X_2$ to be the disjoint supports of the subcodes $\code_1$ and $\code_2$, respectively, i.e. $X_i \defn \cup_{\vecc \in \code_i} \supp(\vecc) \subseteq [n]$. 
We call $X_1$ and $X_2$ the \emph{index sets of $\code_1$ and $\code_2$ embedded in $X$}.
By definition of direct sum, these have cardinality $\abs{X_i} = n_i$.
Denoting the projection of $\code$ onto the set $X_i$ by $\code|_{X_i}$, we can write $\code = \code|_{X_1} \oplus \code|_{X_2}$.
We sometimes abuse notation and use $\code_i$ to denote the embedded code $\set{\veczero} + \dots + \code_i + \dots + \set{\veczero}$ in a direct sum $\code_1 \oplus \dots \oplus \code_i \oplus \dots \oplus \code_m$.

\begin{definition}[Bridging Element]
 \label{def:bridge}
    Let $\code$ be a code whose index set $X$ has a non-trivial partition $X = Y \cup Z$. 
    A codeword $\vecc \in \code$ is a \emph{bridging element} with respect to $(Y,Z)$ if $\supp(\vecc) \cap Y \neq \emptyset \neq \supp(\vecc) \cap Z$ and there is no pair $\vecu, \vecv \in \code$ such that $\vecc = \vecu + \vecv$ and $\supp(\vecu) \subseteq Y$ and $\supp(\vecv) \subseteq Z$. 
\end{definition}

For example, consider the binary repetition code $\code = \set{00, 11}$ and the partition $Y=\set{1}$ and $Z=\set{2}$, the codeword $11$ is a bridging element since the code does not have the elements $10$ and $01$ required to decompose it across the partition.

% ===========================================
\subsection{Computational Code Problems}
\label{subsec:problem-defs}

We now define the computational problems for codes that are the focus of study in this paper.

% \paragraph{Isomorphism Problems.}
Two codes $\code_1$ and $\code_2$ are called \emph{permutation equivalent} if the coordinates of $\code_1$ can be permuted to obtain $\code_2$. 
They are said to be \emph{isometric} if there exists an $\F_q$-linear isomorphism between them that strictly preserves the Hamming weight of each codeword.
We formalize and generalize these notions in the following two problems.

\begin{definition}[PCE]
    For $n,k \in \N $ and finite field $\F_q$, the \emph{Permutation Code Equivalence (\PCE)} problem is the following decision problem: 
    Given a pair of generator matrices $\mG_1, \mG_2 \in \F_q^{k\times n}$ for codes $\code_1 = \code(\mG_1), \code_2 = \code(\mG_2) \subseteq \F_q^n$, decide whether there exists a permutation isomorphism $\pi \in \cS_n$ such that $\pi(\code_1) = \code_2$.
\end{definition}

\begin{definition}[LCE]
    For $n,k \in \N $ and finite field $\F_q$, the \emph{Linear Code Equivalence (\LCE)} problem is the following decision problem: 
    Given a pair of generator matrices $\mG_1, \mG_2 \in \F_q^{k\times n}$ for codes $\code_1 = \code(\mG_1), \code_2 = \code(\mG_2) \subseteq \F_q^n$, decide whether there exists a linear isometry (a.k.a. monomial isomorphism) $\mu = (\vecv, \pi) \in (\F_q^*)^n \rtimes \cS_n$ such that $\mu(\code_1) = \code_2$.
\end{definition}

When two codes $\code_1, \code_2$ are permutation equivalent, their generator matrices $\mG_1,\mG_2$ form a \PCE instance; we denote this by $\code_1 \cong \code_2$ and say simply that they are \emph{equivalent}.
If they are \emph{monomially equivalent}, they form an \LCE instance, and we denote this by $\code_1 \cong_{\cM} \code_2$.
We will use the following observation about the column frequency of generator matrices of equivalent codes.

\begin{lem}[Lemma~3.3 , \cite{CSV25}]
    \label{lem:col-frequency}
    Let $k$ and $n$ be positive integers and $\F_q$ be a finite field.
    If any two matrices $\mG, \mH \in \F_q^{k\times n}$ form a \PCE instance, then no column of $\mG$ appears in $\mG$ more times than a column of $\mH$ appears in $\mH$.
\end{lem}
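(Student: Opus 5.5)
The plan is to show something slightly stronger than the statement: if $\mG$ and $\mH$ form a \PCE instance, then the multiset of column multiplicities of $\mG$ equals that of $\mH$. The stated inequality is then immediate, and it also holds with the roles of $\mG$ and $\mH$ swapped.

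First, I translate the equivalence into a matrix identity. Suppose $\pi \in \cS_n$ satisfies $\pi(\code(\mG)) = \code(\mH)$, and let $\mP$ be the corresponding $n\times n$ permutation matrix. Then $\mG\mP$ generates $\code(\mH)$, so every row of $\mG\mP$ lies in the row space of $\mH$. Hence there is a matrix $\mA \in \F_q^{k\times k}$ with $\mG\mP = \mA\mH$. Both $\mG\mP$ and $\mH$ have full row rank $k$, by the standing assumption on generator matrices. Therefore $\rank(\mA) \ge \rank(\mA\mH) = k$, so $\mA$ is invertible.

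Second, I compare columns. Let $\vecg_1,\dots,\vecg_n$ be the columns of $\mG$ and $\mathbf{h}_1,\dots,\mathbf{h}_n$ those of $\mH$. The identity $\mG\mP = \mA\mH$ says that, up to reindexing by $\pi$, the columns of $\mG$ are $\mA\mathbf{h}_1,\dots,\mA\mathbf{h}_n$. Two observations finish the argument:
\begin{itemize}
\item Right-multiplying by $\mP$ only reorders columns, so it preserves how often each column occurs.
\item Because $\mA$ is injective, $\mA\mathbf{h}_i = \mA\mathbf{h}_j$ holds if and only if $\mathbf{h}_i = \mathbf{h}_j$.
\end{itemize}
So the map $\mathbf{h} \mapsto \mA\mathbf{h}$ is a bijection from the set of distinct columns of $\mH$ to the set of distinct columns of $\mG$, and it preserves multiplicities. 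In particular, a column appearing $m$ times in $\mG$ is the image of a column appearing exactly $m$ times in $\mH$. This gives the claim.

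The only step needing any care is the first: deducing $\mG\mP = \mA\mH$ with $\mA$ invertible. This is exactly where the full-rank assumption on generator matrices is used. Everything after that is bookkeeping about multisets of columns.
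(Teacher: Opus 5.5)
Your proof is correct. The paper gives no proof of its own here; the lemma is simply imported from \cite{CSV25}. Your argument is the standard one: rewrite the equivalence as $\mS\mG\mP=\mH$ with $\mS$ invertible (the same formulation the paper uses when describing $\Aut(\code)$), then note that an invertible left factor is injective on columns while $\mP$ only reorders them.

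Two remarks. First, the column correspondence in $\mG\mP=\mA\mH$ is realized coordinatewise by $\pi$. So you have actually shown that $\pi$ carries each class of identical columns of $\mG$ onto a class of identical columns of $\mH$ of the same size. This stronger form is what the paper really relies on later, for example in \cref{alg:APART->PCE}, to force the appended block onto the appended block.

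Second, the full-rank hypothesis is not needed. Since $\vecg_i=\vecg_j$ holds if and only if $c_i=c_j$ for every $c\in\code(\mG)$, the partition into identical-column classes is an invariant of the code itself. It is therefore transported by $\pi$ directly, without going through the invertibility of $\mA$.
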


\begin{lem}[Remark 1 , search to decision LCE]
    \label{lem:col-frequency-LCE}
    Let $k$ and $n$ be positive integers and $\F_q$ be a finite field.
    If any two matrices $\mG, \mH \in \F_q^{k\times n}$ form an \LCE instance, then the columns of $\mG$ have the same frequency as the columns of $\mH$, up to non-zero scaling.
\end{lem}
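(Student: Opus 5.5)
The plan is to fix generator matrices on both sides and compare their columns directly. Since $\mG$ and $\mH$ form an \LCE instance, there is a monomial map $\mu = (\vecv,\pi)$ with $\mu(\code(\mG)) = \code(\mH)$. Equivalently, the matrix $\mG\mP\mD$ generates $\code(\mH)$, where $\mP$ is the permutation matrix of $\pi$ and $\mD = \diag(\vecv)$ with all $v_i \in \F_q^*$. Because both $\mG\mP\mD$ and $\mH$ are full-rank $k\times n$ generator matrices of the same code, there is an invertible $\mS \in \F_q^{k\times k}$ with
\[
    \mH = \mS\,\mG\,\mP\,\mD .
\]
Reading this off column by column, the $j$-th column of $\mH$ is
\[
    \vech_j = v_j\,\mS\,\vecg_{\pi^{-1}(j)},
\]
up to the convention for how $\pi$ acts on coordinates.

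Next I define the relevant equivalence: on non-zero columns, set $\vecu \sim \vecw$ if $\vecu = \lambda\vecw$ for some $\lambda\in\F_q^*$. The zero column forms its own class. The statement then says that the multiset of $\sim$-classes of columns of $\mG$ matches that of $\mH$, where matching is through the bijection induced by $\mS$. Concretely, for each class $[\vecu]$, the number of columns of $\mG$ lying in $[\vecu]$ equals the number of columns of $\mH$ lying in $[\mS\vecu]$.

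To prove this I check two facts. First, $\mS$ is invertible and linear, so $\vecu\mapsto\mS\vecu$ preserves both $\sim$ and $\not\sim$, since $\mS\vecu=\lambda\mS\vecw$ holds iff $\vecu=\lambda\vecw$. It therefore induces a bijection on classes and fixes the zero class. Second, non-zero scaling by $v_j$ keeps a vector in its class. Hence $\pi$ gives a bijection from the columns of $\mG$ in class $[\vecu]$ to the columns of $\mH$ in class $[\mS\vecu]$, and the frequencies agree. In particular, the sorted list of class sizes of $\mG$ equals that of $\mH$, which is the coordinate-free form of the statement.

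The main subtlety is interpretive rather than technical. Column vectors themselves are not invariants, because the change of basis $\mS$ alters them. The correct reading, as in \cref{lem:col-frequency}, is that the frequency profile, i.e. the multiset of redundancy-class sizes up to scaling, is preserved. I will state explicitly that this is the intended meaning and that the matching of classes is via $\mS$. If the authors instead intend a fixed common $\mS = \mI$, for example when both matrices are in RREF with equal systematic parts, the same argument applies verbatim with $\mS=\mI$.
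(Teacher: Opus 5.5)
Your proof is correct. The paper does not prove this lemma; it imports it from \cite{BMPW26}. Your argument is the standard justification: write $\mH = \mS\mG\mP\mD$, note that the invertible $\mS$ induces a bijection on projective classes of columns, and note that $\mP\mD$ only permutes and rescales columns. You also prove the stronger form, namely that the permutation part of any isometry carries the columns of $\mG$ in a class $[\vecu]$ bijectively onto the columns of $\mH$ in the class $[\mS\vecu]$. That stronger form is exactly what the paper uses later, when it argues that the unique projective class of multiplicity greater than $m$ in $\mG_A$ (the one containing the appended copies of $\vecg_i$) must be mapped onto the corresponding class in $\mG_B$.
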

 
The following lemma states that isometry and monomial equivalence are equivalent notions.
We thus use both terms interchangeably throughout the paper.

\begin{lem}[\cite{BGG78}, Corollary 1]
    \label{lem:ext_McWilliams}
    Any two codes $\code_1$ and $\code_2$ are monomially equivalent ($\code_1 \cong_{\cM} \code_2$) if and only if $\code_1$ is isometric to $\code_2$.
\end{lem}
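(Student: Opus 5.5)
The statement is the MacWilliams extension theorem: monomial equivalence coincides with isometry. One direction is immediate. The other direction is the substantive one, and the plan is the classical character-theoretic argument of Bogart, Goldberg and Gordon, or equivalently Ward and Wood.

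\emph{Easy direction.} Suppose $\mu=(\vecv,\pi)$ is monomial with $\mu(\code_1)=\code_2$. Restricting $\mu$ to $\code_1$ gives an $\F_q$-linear bijection onto $\code_2$. It preserves Hamming weight, since scaling a coordinate by $v_i\neq 0$ and permuting coordinates both leave $\abs{\supp(\vecc)}$ unchanged.

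\emph{Hard direction: setup.} Let $f\colon\code_1\to\code_2$ be a weight-preserving linear isomorphism. Fix generator matrices and write each coordinate functional as $\lambda_i(\vecc)=c_i$ on $\code_1$ and $\lambda'_i(\vecc)=f(\vecc)_i$ on $\code_1$ (pulling $\code_2$ back via $f$). These are linear functionals on the $k$-dimensional space $\code_1$, i.e.\ elements of $\code_1^*$, and each is nonzero when there are no zero columns. Zero columns can be treated separately, since their number is $n$ minus the size of the support, and weight preservation forces the supports to match in count. Weight preservation then reads $\#\{i:\lambda_i(\vecc)\neq0\}=\#\{i:\lambda'_i(\vecc)\neq0\}$ for every $\vecc\in\code_1$.

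\emph{Hard direction: counting along lines.} Group functionals by the projective point $\langle\lambda\rangle\in\mathbb{P}(\code_1^*)$. Let $m(P)$ and $m'(P)$ be the number of indices with $\langle\lambda_i\rangle=P$, respectively $\langle\lambda'_i\rangle=P$. For a point $P$, let $H_P^\perp=\{\vecc: P(\vecc)=0\}$. The weight condition says, for every $\vecc$,
\[
\sum_{P}\bigl(m(P)-m'(P)\bigr)\,[P(\vecc)\neq 0]=0 .
\]
Let $d=m-m'$ as a function on the points. Then $\sum_{P\not\ni \vecc^\perp} d(P)=0$ for all $\vecc\neq0$, and also for $\vecc=0$ trivially, which gives $\sum_P d(P)=\sum_{P:P(\vecc)=0}d(P)$. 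This says the incidence map from point-functions to hyperplane-functions kills $d$ modulo constants. The total $\sum_P d(P)=0$ follows from comparing $n$ with $n$, or from summing over $\vecc$ with a double count.

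\emph{Key step: invertibility of the point–hyperplane incidence matrix.} This is the main obstacle. I would show the incidence matrix $A$ of $\mathbb{P}^{k-1}(\F_q)$ is nonsingular over $\mathbb{Q}$ by computing $AA^T=(\theta_{k-2}-\theta_{k-3})I+\theta_{k-3}J$, where $\theta_j$ is the number of points of $\mathbb{P}^j$. Since $\theta_{k-2}>\theta_{k-3}$, this is a positive definite matrix. Alternatively I would use characters: $\sum_{a\in\F_q^*}\chi(a\,P(\vecc))$ separates the indicator functions. Invertibility gives $d\equiv0$, so $m=m'$. Hence there is a bijection $\pi$ of indices with $\lambda'_{\pi(i)}=v_i\lambda_i$ for scalars $v_i\in\F_q^*$.

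\emph{Conclusion.} This says $f(\vecc)_{\pi(i)}=v_ic_i$, so $f$ is the restriction of the monomial map $(\vecv,\pi)$, giving $\code_1\cong_{\cM}\code_2$.
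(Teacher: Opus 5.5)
Your proof is correct. The paper gives no argument for this lemma: it quotes it from Bogart--Goldberg--Gordon and uses it as a black box. So your proposal does not compete with an internal proof. It replaces a citation with a self-contained derivation of the MacWilliams extension theorem, via the multiplicities $m(P)$, $m'(P)$ of coordinate functionals at the points of $\mathbb{P}(\code_1^*)$. The citation buys the paper brevity; your route buys a proof whose only input is linear algebra over $\F_q$ plus one rank computation.

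You can make it even more elementary. For $k\ge 2$, the step you call the main obstacle disappears if you sum the weight identity over the codewords of the hyperplane $V=\ker P\subseteq\code_1$, instead of using it one codeword at a time. For a point $Q$, $\#\{\vecc\in V: Q(\vecc)\neq 0\}$ equals $q^{k-1}-q^{k-2}$ if $Q\neq P$ and $0$ if $Q=P$. Summing therefore gives $(q^{k-1}-q^{k-2})\sum_{Q\neq P}d(Q)=0$. Combined with your double count $\sum_Q d(Q)=0$, this yields $d(P)=0$ directly. This hyperplane sum is exactly the explicit inverse of the point--hyperplane incidence relation, so no determinant or positive-definiteness argument is needed.

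If you keep your version, two small repairs are needed.
\begin{enumerate}
\item The identity $AA^T=(\theta_{k-2}-\theta_{k-3})I+\theta_{k-3}J$ holds only for $k\ge 2$ (with $\theta_{-1}=0$). For $k=1$ the unique hyperplane of $\mathbb{P}^0$ is empty and $A$ is the $1\times 1$ zero matrix. In that case $\sum_P d(P)=0$ alone already gives $d=0$, but the case should be stated.
\item The claim that the numbers of zero columns agree should be justified explicitly. It follows from the same double count $\sum_{\vecc\in\code_1}\mathrm{wt}(\vecc)=(q^k-q^{k-1})\cdot\#\{i:\lambda_i\neq 0\}$ that you invoke later for $\sum_P d(P)=0$, together with its counterpart for the $\lambda'_i$.
\end{enumerate}
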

Recall that \emph{isometric} here refers to the existence of an $\F_q$-linear isometry.

We will also discuss the search variants of \PCE and \LCE, which ask to find the isomorphism or linear isometry between a given pair of codes, if it exists.

\begin{definition}[Search-\PCE]
    For $n,k \in \N$ and finite field $\F_q$, \emph{search-\PCE} is the following search problem: 
    Given a pair of generator matrices $\mG_1, \mG_2 \in \F_q^{k \times n}$ for codes $\code_1 = \code(\mG_1), \code_2 = \code(\mG_2) \subseteq \F_q^n$, find a permutation $\pi \in \cS_n$ for which $\pi(\code_1) = \code_2$, if it exists.
\end{definition}

\begin{definition}[Search-\LCE]
    For $n,k \in \N$ and finite field $\F_q$, \emph{search-\LCE} is the following search problem: 
    Given a pair of generator matrices $\mG_1, \mG_2 \in \F_q^{k \times n}$ for codes $\code_1 = \code(\mG_1), \code_2 = \code(\mG_2) \subseteq \F_q^n$, find a monomial permutation $\mu \in (\F_q^*)^n \rtimes \cS_n$ for which $\mu(\code_1) = \code_2$, if it exists.
\end{definition}

Note that permutation equivalence implies monomial equivalence.
It is also natural to ask for information about the set of isomorphisms or linear isometries between a given pair of codes.
For any pair of codes $\code_1,\code_2 \subseteq \F_q^n$, we denote the set of all permutation isomorphisms between them by
\begin{equation}
    \Iso(\code_1,\code_2)
    = \set{\pi \in \cS_n \mid \pi(\code_1) = \code_2}.
\end{equation}
Similarly, we denote the set of all linear isometries between them by
\begin{equation}
    \Isom(\code_1,\code_2)
    = \set{\mu \in (\F_q^*)^n \rtimes \cS_n \mid \mu(\code_1) = \code_2}.
\end{equation}
It is immediate that $\Iso(\code_1,\code_2) \subseteq \Isom(\code_1,\code_2)$.
We are interested in computing the cardinalities of these sets, as defined in the following problems.

\begin{definition}[\ICOUNT, \MICOUNT]
    \label{def:(M)ICOUNT}
    For $n,k \in \N$ and finite field $\F_q$, the \emph{Isomorphism Counting (\ICOUNT)} (resp. \emph{Monomial Isomorphism Counting (\MICOUNT)}) problem is the following search problem: 
    Given two generator matrices $\mG_1, \mG_2 \in \F_q^{k \times n}$ for codes $\code_1, \code_2 \subseteq \F_q^n$, compute the cardinality $\abs{\Iso(\code_1, \code_2)}$ (resp. $\abs{\Isom(\code_1, \code_2)}$) of the set of (resp. monomial) isomorphisms between $\code_1$ and $\code_2$. 
    If $\code_1 \not\cong \code_2$ (resp. $\code_1 \not\cong_\cM \code_2$), output $0$. 
\end{definition}

% \paragraph{Automorphism Problems.}
Any permutation on the coordinates of a code $\code \subseteq \F_q^n$ induces a map on its codewords; 
for this reason, we use $\pi(\code)$ to denote the code obtained by applying the permutation $\pi \in \cS_n$ to $\code$.
We denote the \emph{group of permutation automorphisms} of $\code$ by
\begin{equation}
    \label{def:Aut(C)}
    \Aut(\code) \defn \set{\pi \in \cS_n \mid \pi(\code) = \code} \subseteq \cS_n.
\end{equation} 
This group corresponds to the group of permutation matrices $\mP$ that satisfy $\mS \mG \mP = \mG$ for any generator $\mG$ of the code and some invertible change-of-basis matrix $\mS$.
Similarly, we denote the \emph{group of monomial automorphisms} of $\code$ by
\begin{equation}
    \label{def:MAut(C)}
    \MAut(\code) \defn \set{\mu \in (\F_q^*)^n \rtimes \cS_n \mid \mu(\code) = \code} 
    \subseteq (\F_q^*)^n \rtimes \cS_n.
\end{equation}
Note that $\Aut(\code) \subseteq \MAut(\code)$.
Any monomial transformation $(\vecv_\mu,\mu) \in \MAut(\code)$ acts on the ambient space $\F_q^n$. 
So, for any subcode $\cA \subseteq \code$, the set $\mu(\cA)$ is the image of the subspace $\cA$ under this transformation, i.e. $\mu(\cA) \defn \set{ \mu(\veca) \mid \veca \in \cA} \subseteq \mu(\code) \subseteq \F_q^n$. 
The group of monomial automorphisms $\MAut(\code)$ corresponds to the group of monomial matrices $\mM$ that satisfy $\mS \mG \mM = \mG$ for any generator $\mG$ of $\code$ and for some invertible change-of-basis matrix $\mS$.

Isomorphisms between two codes are closely related to automorphisms of their direct sum code.
In the following lemma, we show that the set of all isomorphisms between two codes is a coset of the automorphism group of each code.

\begin{lem}[folklore]
\label{lem:iso=aut}
    For any pair of equivalent codes $\code_1, \code_2 \subseteq \F_q^n$, let $\Iso(\code_1, \code_2)$ be the set of all isomorphisms from $\code_1$ to $\code_2$. 
    For any fixed isomorphism $\pi \in \Iso(\code_1, \code_2)$,
    \begin{equation*}
        \Iso(\code_1, \code_2) 
        = \Aut(\code_2) \circ \pi 
        = \set{\alpha \circ \pi \mid \alpha \in \Aut(\code_2)}.
    \end{equation*}
    Furthermore, $\abs{\Iso(\code_1, \code_2)} = \abs{\Aut(\code_2)}$.
\end{lem}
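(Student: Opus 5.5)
The plan is to prove the two inclusions $\Aut(\code_2)\circ\pi \subseteq \Iso(\code_1,\code_2)$ and $\Iso(\code_1,\code_2)\subseteq \Aut(\code_2)\circ\pi$ directly from the definitions. Both rest on a single fact: the action of $\cS_n$ on subsets of $\F_q^n$ is a group action. That is, $(\sigma\circ\tau)(\code) = \sigma(\tau(\code))$ and $\tau^{-1}(\tau(\code)) = \code$ for all $\sigma,\tau\in\cS_n$. This holds because permuting coordinates is a bijective linear map on $\F_q^n$, and composing these maps corresponds to composing the permutations. I will fix this convention at the start, using the same order of composition as in the statement, so that the two inclusions are purely formal.

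For the first inclusion, I take $\alpha\in\Aut(\code_2)$ and compute
\[
(\alpha\circ\pi)(\code_1) = \alpha(\pi(\code_1)) = \alpha(\code_2) = \code_2 .
\]
So $\alpha\circ\pi \in \Iso(\code_1,\code_2)$.

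For the reverse inclusion, I take any $\sigma\in\Iso(\code_1,\code_2)$ and set $\alpha \defn \sigma\circ\pi^{-1}$. Since $\pi(\code_1)=\code_2$, we have $\pi^{-1}(\code_2)=\code_1$. Therefore
\[
\alpha(\code_2) = \sigma(\pi^{-1}(\code_2)) = \sigma(\code_1) = \code_2 ,
\]
so $\alpha\in\Aut(\code_2)$ and $\sigma = \alpha\circ\pi$. This proves the set equality.

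For the cardinality claim, I note that right multiplication by $\pi$, namely $\alpha\mapsto\alpha\circ\pi$, is injective on $\cS_n$, since $\pi$ is invertible. By the set equality, this map restricts to a bijection from $\Aut(\code_2)$ onto $\Iso(\code_1,\code_2)$, which gives $\abs{\Iso(\code_1,\code_2)}=\abs{\Aut(\code_2)}$.

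There is no real obstacle here. The only point needing care is to state the left-action convention consistently, so that $\pi^{-1}(\code_2)=\code_1$ follows from $\pi(\code_1)=\code_2$ and the composition order matches the statement. The monomial analogue, with $\Isom$ and $\MAut$, follows by the identical argument in the group $(\F_q^*)^n\rtimes\cS_n$.
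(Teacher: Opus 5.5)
Your proposal is correct and follows essentially the same route as the paper. The paper's argument for this lemma, spelled out in its proof of the monomial analogue \cref{lem:isom=maut}, proves the two inclusions via $\alpha\circ\pi$ and $\psi\circ\pi^{-1}$ and then reads off the cardinality; your justification of the count by injectivity of right multiplication by $\pi$, and your remark on fixing the action convention, make explicit what the paper leaves to ``properties of group actions.''
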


% \begin{proof}
%     Since $\code_1 \cong \code_2$, there exists at least one isomorphism $\pi \in \Iso(\code_1, \code_2)$.
%     % We show both inclusions of the claimed equality.
%     % 
%     First, we show that $\Aut(\code_2) \circ \pi \subseteq \Iso(\code_1, \code_2)$. 
%     Consider any automorphism $\alpha \in \Aut(\code_2)$. 
%     Since $\pi(\code_1) = \code_2$ and $\alpha(\code_2) = \code_2$, the composition of $\pi$ and $\alpha$ satisfies $\alpha \circ \pi (\code_1) = \code_2$.
%     Hence, $\alpha \circ \pi \in \Iso(\code_1, \code_2)$.
%     % 
%     Now, we show that $\Iso(\code_1, \code_2) \subseteq \Aut(\code_2) \circ \pi$. 
%     Consider any isomorphism $\psi \in \Iso(\code_1, \code_2)$. 
%     % and the corresponding map $\beta \defn \psi \circ \pi^{-1}$. 
%     Since $\pi^{-1}$ is an isomorphism from $\code_2$ to $\code_1$ and $\psi(\code_1) = \code_2$, their composition satisfies $\psi \circ \pi^{-1} (\code_2) = \code_2$. 
%     Hence, $\psi \circ \pi^{-1} \in \Aut(\code_2)$ and $\psi = \psi \circ \pi^{-1} \circ \pi \in \Aut(\code_2) \circ \pi$.
%     % 
%     Therefore, $\Iso(\code_1, \code_2) = \Aut(\code_2) \circ \pi$. 
%     Moreover, by the properties of group actions, we obtain that $\abs{\Iso(\code_1, \code_2)} = \abs{\Aut(\code_2)}$.
% \end{proof}

\begin{lem}
    \label{lem:isom=maut}
    For any pair of monomially equivalent codes $\code_1, \code_2 \subseteq \F_q^n$, let $\Isom(\code_1, \code_2)$ be the set of all linear isometries from $\code_1$ to $\code_2$. 
    For any fixed linear isometry $\mu \in \Isom(\code_1, \code_2)$,
    \begin{equation*}
        \Isom(\code_1, \code_2) 
        = \MAut(\code_2) \circ \mu 
        = \set{\alpha \circ \mu \mid \alpha \in \MAut(\code_2)}.
    \end{equation*}
    Furthermore, $\abs{\Isom(\code_1, \code_2)} = \abs{\MAut(\code_2)}$.
\end{lem}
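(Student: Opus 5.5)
The plan mirrors the argument for \cref{lem:iso=aut}, with permutations replaced by monomial transformations. The one ingredient to check first is that the monomial group $(\F_q^*)^n \rtimes \cS_n$ acts on $\F_q^n$ by invertible linear maps. Then the action is compatible with composition, $(\alpha\circ\mu)(\code) = \alpha(\mu(\code))$. Moreover, every $\mu$ has an inverse $\mu^{-1}$ in the same group, and $\mu(\code_1)=\code_2$ is equivalent to $\mu^{-1}(\code_2)=\code_1$. This follows because each monomial map is a bijection of $\F_q^n$ (a permutation matrix times an invertible diagonal matrix), and the group is closed under composition and inversion.

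Fix $\mu \in \Isom(\code_1,\code_2)$; such a $\mu$ exists since $\code_1 \cong_\cM \code_2$. We prove the set equality by showing both inclusions.
\begin{enumerate}[(i)]
    \item For $\alpha \in \MAut(\code_2)$, we have $(\alpha\circ\mu)(\code_1) = \alpha(\code_2) = \code_2$, so $\MAut(\code_2)\circ\mu \subseteq \Isom(\code_1,\code_2)$.
    \item For $\psi \in \Isom(\code_1,\code_2)$, set $\alpha \defn \psi\circ\mu^{-1}$. Then $\alpha(\code_2) = \psi(\mu^{-1}(\code_2)) = \psi(\code_1) = \code_2$, so $\alpha \in \MAut(\code_2)$. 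Hence $\psi = \alpha\circ\mu \in \MAut(\code_2)\circ\mu$.
\end{enumerate}

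For the cardinality statement, right multiplication $\alpha \mapsto \alpha\circ\mu$ is a bijection from $\MAut(\code_2)$ onto $\MAut(\code_2)\circ\mu$, with inverse $\beta \mapsto \beta\circ\mu^{-1}$, by cancellation in the group $(\F_q^*)^n\rtimes\cS_n$. Therefore $\abs{\Isom(\code_1,\code_2)} = \abs{\MAut(\code_2)}$.

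The only step needing care is the convention for composing elements $(\vecv,\pi)$ of the semidirect product. This matters because the set equality is stated with $\alpha$ applied after $\mu$. I would avoid the issue by identifying monomial transformations with monomial matrices acting on row vectors, where composition is matrix multiplication. Alternatively, one can work directly with the induced bijections of $\F_q^n$, where composition is composition of functions. Either way, no other obstacle arises.
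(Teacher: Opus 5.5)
Your proof is correct and follows essentially the same route as the paper: you prove both inclusions, one by composing with $\alpha\in\MAut(\code_2)$ and the other by showing $\psi\circ\mu^{-1}\in\MAut(\code_2)$, and then count. Your explicit bijection $\alpha\mapsto\alpha\circ\mu$, with inverse $\beta\mapsto\beta\circ\mu^{-1}$, makes the cardinality step more precise than the paper's appeal to ``properties of group actions.''
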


\begin{proof}
    Since $\code_1 \cong_\cM \code_2$, there exists at least one isomorphism $\pi \in \Isom(\code_1, \code_2)$.
    First, we show that $\MAut(\code_2) \circ \pi \subseteq \Isom(\code_1, \code_2)$. 
    Consider any monomial automorphism $\alpha \in \MAut(\code_2)$. 
    Since $\pi(\code_1) = \code_2$ and $\alpha(\code_2) = \code_2$, the composition of $\pi$ and $\alpha$ satisfies $\alpha \circ \pi (\code_1) = \code_2$.
    Hence, $\alpha \circ \pi \in \Isom(\code_1, \code_2)$.
    Now, we show that $\Isom(\code_1, \code_2) \subseteq \MAut(\code_2) \circ \pi$. 
    Consider any isomorphism $\psi \in \Isom(\code_1, \code_2)$. 
    Since $\pi^{-1}$ is a monomial isomorphism from $\code_2$ to $\code_1$ and $\psi(\code_1) = \code_2$, their composition satisfies $\psi \circ \pi^{-1} (\code_2) = \code_2$. 
    Hence, $\psi \circ \pi^{-1} \in \MAut(\code_2)$ and $\psi = \psi \circ \pi^{-1} \circ \pi \in \MAut(\code_2) \circ \pi$.
    Therefore, $\Isom(\code_1, \code_2) = \MAut(\code_2) \circ \pi$. 
    Moreover, by the properties of group actions, we obtain that $\abs{\Isom(\code_1, \code_2)} = \abs{\MAut(\code_2)}$.
\end{proof}

The following automorphism problems were presented in \cite{Mathon79} for graphs; we define their code counterparts here.
These problems ask for various descriptors of the automorphism group of a given code, such as its cardinality, a set of generators, and an orbit partition.

For any coordinate $i\in [n]$ of code $\code \subseteq \F_q^n$, we denote its \emph{orbit} under the automorphism group $\Aut(\code)$ by $O_i \defn \set{j\in [n] \mid \alpha(i) = j \text{ for some } \alpha \in \Aut(\code)}$.
Note that trivially $i\in O_i$.
We use a similar notation for \emph{monomial orbits} of coordinates under the monomial automorphism group $\MAut(\code)$.

\begin{definition}[\APART, \MAPART]
    For $n,k \in \N$ and finite field $\F_q$, the \emph{Automorphism Partition (\APART)} (resp. \emph{Monomial Automorphism Partition (\MAPART)}) problem is the following search problem: 
    Given a generator matrix $\mG \in \F_q^{k \times n}$ of a code $\code$, find the partition of its index set $[n]$ into (resp. monomial) orbits $O_1,\ldots,O_n \subseteq [n]$,
    where $O_i = O_j$ if and only if there exists $\alpha \in \Aut(\code)$ (resp. $\alpha = (\vecv_\alpha,\pi_\alpha)\in \MAut(\code)$) such that $\alpha(i) = j$ (resp. $\pi_\alpha(i) = j$).
\end{definition}

\begin{remark}
    The monomial orbit of an index is defined only with respect to the permutation component of any monomial automorphism in $\MAut(\code)$.
    Defining orbits with respect to index-scalar pairs is superfluous, because for any coordinate permutation, scaling every coordinate by the same scalar gives a monomial automorphism.
    To see why, consider any generator matrix $\mG$.
    Multiplication by any non-zero scalar $\lambda \in \F_q^*$ gives the identity $\mD \mG = \mG (\lambda \mI_n)$.
    Isolating $\mG$ on the right yields
    $(\lambda^{-1} \mI_k) \mG (\lambda \mI_n) = \mG$.
    The diagonal matrices $\mS = \lambda^{-1} \mI_k$ and $\mM = \lambda \mI_n$ satisfy the monomial equivalence condition $\mS \mG \mM = \mG$.
    We can decompose $\mM$ into the permutation matrix $\mP = \mI_n$ and diagonal matrix $\mD = \diag(\lambda, \ldots, \lambda)$.
    The associated automorphism maps every coordinate $i$ to itself, since $\mP_{i,i} = 1$, and multiplies each $i$-th column of $\mG$ by $\lambda$, since $\mD_{i,i} = \lambda$. 
    Therefore, any index-scalar pair $(i, 1)$ would have the same orbit as $(i, \lambda)$ for all $\lambda \in \F_q^*$.
\end{remark}

\begin{definition}[\ACOUNT, \MACOUNT]
    For $n,k \in \N$ and finite field $\F_q$, the \emph{Automorphism Counting (\ACOUNT)} (resp. \emph{Monomial Automorphism Counting (\MACOUNT)}) problem is the following computational problem: 
    Given a generator matrix $\mG \in \F_q^{k \times n}$ of a code $\code = \code(\mG)$, compute the order of its permutation automorphism group $\abs{\Aut(\code)}$ (resp. monomial automorphism group $\abs{\MAut(\code)}$).
\end{definition}

\begin{definition}[\AGEN, \MAGEN]
    For $n,k \in \N$ and finite field $\F_q$, the \emph{Automorphism Generation (\AGEN)} (resp. \emph{Monomial Automorphism Generation (\MAGEN)}) problem is the following search problem: 
    Given a generator matrix $\mG \in \F_q^{k \times n}$ of a code $\code$, find a set of permutations (resp. linear isometries) $S$ that generates the group $\inner{S} = \Aut(\code)$ (resp. $\inner{S} = \MAut(\code)$).
\end{definition}

It is natural to define similar computational problems relating to isomorphisms of a given pair of codes.
The \ICOUNT problem was defined in \cite{Mathon79} as the analog of \ACOUNT.
% For completeness, we discuss the natural analogs of \AGEN and \APART here.
The isomorphism analogs of \AGEN and \APART arise as a direct consequence of \cref{lem:iso=aut}.

Suppose we want to find the set of all isomorphisms between two equivalent codes $\code_1$ and $\code_2$.
If we can find a generating set for the automorphism group $\Aut(\code_2)$, it is enough to find one isomorphism $\pi$ from $\code_1$ to $\code_2$ to obtain the set of isomorphisms $\Iso(\code_1,\code_2)$.
In this way, this problem of finding all isomorphisms between two equivalent codes trivially reduces to solving \AGEN and \sPCE.

Similarly, if we want to find the orbits of the indices of a code $\code_1$ under any isomorphism from $\code_1$ to $\code_2$, it is enough to find the orbit partition of the index set of $\code_1$ and an isomorphism $\pi$ from $\code_1$ to $\code_2$.
In this way, this problem of finding an orbit partition of the index set of two equivalent codes under isomorphisms trivially reduces to solving \APART and \sPCE.

\subsection{Stabilizer Chains and Generating Sets}
\label{subsec:stab-chain}

Sims introduced in \cite{Sims70} the notion of a stabilizer chain determined by a base as a data structure for efficiently manipulating permutation groups.
We define similar structures in \cref{subsec:setwise-stab-chain} and use them for our reductions in \cref{subsec:ACOUNT->PCE,subsec:AGEN->PCE}.
For reference, we include the definitions and some key properties of the orginal notion here.

\begin{definition}[Base, Pointwise Stabilizer chain]
    \label{def:stabilizer-chain}
    For any positive integer $n$ and permutation group $G \leq \cS_n$, a sequence of elements $B = \set{\beta_1,\ldots,\beta_s} \subseteq [n]$ is called a \emph{base} if the only element of $G$ that fixes $B$ is the identity map.
    The \emph{(pointwise) stabilizer chain} for $G$ determined by base $B$ is the chain of nested subgroups
    \[
       G =\vcentcolon G^{(0)} \geq G^{(1)} \geq \ldots \geq G^{(s)} = \inner{\id},
    \]
    where $G^{(i)} \defn G_{\beta_1,\ldots,\beta_i}$ is the subgroup that stabilizes $\set{\beta_1,\ldots,\beta_i}$ pointwise, for each $i \in [s]$.
\end{definition}

\begin{lem}[adapted from \cite{Seress03}]
    \label{lem:stab-chain-order}
    For any $n\in \N$ and permutation group $G \leq \cS_n$ with stabilizer chain $G =\vcentcolon G^{(0)} \geq G^{(1)} \geq \ldots \geq G^{(s)} = \inner{\id}$, 
    \[
        \abs{G} = \prod_{i=0}^{s-1} \abs{G^{(i)} : G^{(i+1)}}.
    \]
\end{lem}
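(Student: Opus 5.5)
The plan is to telescope the chain using Lagrange's theorem. The chain $G = G^{(0)} \geq G^{(1)} \geq \ldots \geq G^{(s)} = \inner{\id}$ consists of nested subgroups of a finite group, so every index $\abs{G^{(i)} : G^{(i+1)}}$ is a positive integer.

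First, for each $i \in \set{0,\ldots,s-1}$, I note that $G^{(i+1)}$ is a subgroup of $G^{(i)}$. Indeed, any element fixing $\beta_1,\ldots,\beta_{i+1}$ pointwise also fixes $\beta_1,\ldots,\beta_i$ pointwise, and both sets are closed under composition and inverses. Lagrange's theorem then gives
\[
\abs{G^{(i)}} = \abs{G^{(i)} : G^{(i+1)}} \cdot \abs{G^{(i+1)}}.
\]

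Second, I induct downward on $i$ (equivalently, unroll the recursion) to obtain
\[
\abs{G^{(0)}} = \Bigl(\prod_{i=0}^{s-1} \abs{G^{(i)} : G^{(i+1)}}\Bigr) \cdot \abs{G^{(s)}}.
\]
The final step uses the defining property of a base: $G^{(s)}$ is the pointwise stabilizer of all of $B$, so it is trivial and $\abs{G^{(s)}} = 1$. Substituting this gives the claimed formula.

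There is no real obstacle here; the only points needing care are that the inclusions are genuine subgroup inclusions and that the last term is trivial by the definition of a base.

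As an aside, which I would mention but not rely on, each index equals the length of the orbit of $\beta_{i+1}$ under $G^{(i)}$, by the orbit–stabilizer theorem. This is presumably the form that gets used later, for counting via orbit partitions.
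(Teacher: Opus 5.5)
Your proposal is correct and follows the paper's argument: apply Lagrange's theorem to each consecutive pair $G^{(i+1)} \leq G^{(i)}$, telescope the product, and use $\abs{G^{(s)}} = 1$. You are a bit more explicit than the paper about why the inclusions are genuine subgroup inclusions and why the last term is trivial, and your orbit--stabilizer aside is accurate though not needed.
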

\vspace{-7pt}

\begin{proof}
    By Lagrange's Theorem, the order of each subgroup $G^{(i)}$ can be written as $\abs{G^{(i)}} = \abs{G^{(i)} : G^{(i+1)}} \cdot \abs{G^{(i+1)}}$. 
    Rewriting each term in the product and canceling like terms gives
    the claimed equality.
    % \begin{equation*}
    %     \prod_{i=0}^{s-1} \abs{G^{(i)}} / \abs{G^{(i+1)}}
    %     = \abs{G^{(0)}} / \abs{G^{(s)}}
    %     = \abs{G} / \abs{\set{id}}
    %     = \abs{G}.
    % \end{equation*}
\end{proof}

\begin{definition}[Strong Generating Set]
    \label{def:strong-gen-set}
    For any $n\in \N$ and group $G \leq \cS_n$, let $B$ be a base that defines the stabilizer chain $G =\vcentcolon G^{(0)} \geq G^{(1)} \geq \ldots \geq G^{(s)} = \inner{\id}$.
    A \emph{strong generating set} for $G$ with respect to base $B$ is a set $S$ that generates $\inner{S} = G$ and satisfies the property $\inner{S \cap G^{(i)}} = G^{(i)}$ for all $i\in [s]$.
\end{definition}

\begin{lem}
    \label{lem:subgroup-gener}
    For any groups $H \leq G$, let $S_H$ be a generating set for $H$ and $R_{G/H}$ be the set of coset representatives of the quotient $G/H$.
    Then, $G$ can be expressed as
    \[
        G = \inner{S_H \cup R_{G/H}}.
    \]
\end{lem}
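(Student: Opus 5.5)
The plan is to prove the two inclusions $\inner{S_H \cup R_{G/H}} \subseteq G$ and $G \subseteq \inner{S_H \cup R_{G/H}}$ separately. We read $G/H$ as the set of left cosets $gH$; the right-coset convention is handled symmetrically.

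The first inclusion is immediate. By definition $S_H \subseteq H \subseteq G$. Each coset representative $r \in R_{G/H}$ is an element of $G$, because it is chosen from some coset $gH \subseteq G$. Since $G$ is a group, it is closed under products and inverses, so the subgroup generated by $S_H \cup R_{G/H}$ is contained in $G$.

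For the reverse inclusion, take an arbitrary $g \in G$.
\begin{enumerate}[(i)]
    \item The left cosets of $H$ partition $G$, so $g$ lies in exactly one coset $gH$.
    \item This coset has a representative $r \in R_{G/H}$ with $rH = gH$, so $r^{-1} g \in H$.
    \item Because $\inner{S_H} = H$, the element $h \defn r^{-1}g$ can be written as a finite product of elements of $S_H$ and their inverses.
    \item Then $g = r h$ is a product of elements of $R_{G/H} \cup S_H$ and their inverses, so $g \in \inner{S_H \cup R_{G/H}}$.
\end{enumerate}
Combining the two inclusions gives $G = \inner{S_H \cup R_{G/H}}$.

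No step here is a real obstacle. The only points to state explicitly are that a coset representative is an element of $G$ (so that it may be used as a generator), and that every coset has a representative in $R_{G/H}$, which holds because $R_{G/H}$ is a full set of representatives for the quotient. This holds for arbitrary groups and does not need finiteness. In our setting, however, all groups are finite subgroups of $\cS_n$ or of $(\F_q^*)^n \rtimes \cS_n$, so inverses could even be avoided by taking powers.
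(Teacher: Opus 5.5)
Your proof is correct. The paper states this lemma without proof as a standard fact, and your two-inclusion argument supplies exactly the standard justification: write each $g \in G$ as $g = r\cdot(r^{-1}g)$ with $r \in R_{G/H}$ and $r^{-1}g \in H = \langle S_H\rangle$. You are also right to make explicit that $R_{G/H}$ must contain one representative for every coset, which is the reading the paper relies on when it applies the lemma iteratively along its setwise stabilizer chain.
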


Before showing how we adapt the structure of a stabilizer set for our purpose of counting automorphisms of a code, we need to introduce code decomposition.

%%% Local Variables:
%%% mode: latex
%%% TeX-master: "main"
%%% End:

\section{Code Decomposition}%
\label{sec:code-decomp}

For the reductions between \GI and graph automorphism problems, Mathon in \cite{Mathon79} relies on the assumption that the input graphs are connected.
Even if a graph unconnected graph, the same reductions can be performed on its complement, which is a connected graph.
The code analog of connectedness is its decomposability into a direct sum of indecomposable codes.
In this section, we define code (in)decomposability and prove some key properties that are essential for our reductions.
To ensure that our reductions work for \emph{any} codes and not just indecomposable codes, we provide an efficient method of decomposing a linear code.

\begin{definition}[Decomposable Codes]
    \label{def:decomposable}
    A code $\code$ is \emph{(direct sum) decomposable} if it is permutation (resp. monomially) equivalent to a direct sum of two or more codes, i.e. $\code \cong \code_1 \oplus \ldots \oplus \code_m$ (resp. $\code \cong_\cM \code_1 \oplus \ldots \oplus \code_m$) for some codes $\code_1,\ldots,\code_m$.
    Otherwise, we say that $\code$ is \emph{indecomposable}.
\end{definition}

\subsection{Existence and Uniqueness}
\label{subsec:Slepian}

The existence and uniqueness of the direct sum decomposition of linear codes was originally established by Slepian in \cite{Slepian60}.
We restate these facts here and extend them to monomial equivalence.
 
\begin{lem}[\cite{Slepian60}, Theorem 2]
    \label{lem:Slepian}
    Any $[n,k]$-code $\code$ is equivalent to a direct sum of indecomposable codes, i.e.
    \begin{equation*}
        \code \cong \cA_1 \oplus \dots \oplus \cA_m,
    \end{equation*}
    for some indecomposable $\cA_1, \ldots, \cA_m$. 
    Furthermore, this decomposition is unique up to permutation equivalence and reordering of the summands, i.e. if
    % \begin{equation*}
        $\code \cong \cB \oplus \dots \oplus \cB_{m'},$
    % \end{equation*}
    for indecomposable codes $\cB_1,\ldots, \cB_{m'}$, then $m=m'$ and $\cB_i \cong \cA_{\sigma(i)}$ for all $i \in [m]$ for some index permutation $\sigma \vcentcolon [m] \to [m]$.
\end{lem}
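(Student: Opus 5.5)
Existence goes by induction on $n$ and uniqueness by a canonical description of the summands: the connected components of the support graph $\Gamma(\code)$ from \cref{constr:supp-graph}.

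\textbf{Existence.} If $\code$ is indecomposable, take $m=1$. Otherwise, by \cref{def:decomposable}, $\code \cong \code_1 \oplus \code_2$ with both summands of strictly smaller length. Each summand is either indecomposable or decomposes further by the induction hypothesis. Composing the permutations, and using that a direct sum of permuted codes is a permutation of the direct sum, gives $\code \cong \cA_1 \oplus \dots \oplus \cA_m$. As in the paper, we assume there are no zero columns; otherwise we treat zero coordinates as trivial length-one summands, handled via Lemma 3.4 of \cite{CSV25}.

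\textbf{Canonical description.} The key claim is that for a code $\code$ without zero coordinates, $\code$ is indecomposable if and only if $\Gamma(\code)$ is connected. More generally, the vertex sets of the connected components $X_1,\dots,X_t$ of $\Gamma(\code)$ satisfy $\code = \code|_{X_1}\oplus\dots\oplus \code|_{X_t}$ with each $\code|_{X_j}$ indecomposable.

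\emph{First direction.} Each RREF row is supported inside a single component, so $\code$ splits along the components.

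\emph{Second direction.} Suppose $\code = \code|_{Y}\oplus\code|_{Z}$ for a nontrivial partition $Y\cup Z$. Then $\code$ has a basis of vectors each supported in $Y$ or in $Z$. The RREF is unique and is obtained from this basis by Gauss–Jordan elimination within each block, up to interleaving columns. Hence every RREF row lies in $Y$ or in $Z$, no edge crosses the partition, and $\Gamma(\code)$ is disconnected.

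A cleaner way to phrase this uses \cref{def:bridge}: $\code$ splits along $(Y,Z)$ iff it has no bridging element. Moreover, if some RREF row $r$ meets both $Y$ and $Z$, then $r$ is a bridging element, because its pivot coordinates pin down any decomposition $r=u+v$ to contradict splitting.

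\textbf{Uniqueness.} Suppose $\code \cong \cB_1\oplus\dots\oplus\cB_{m'}$ via a permutation $\pi$, with each $\cB_i$ indecomposable. The embedded index sets of the $\cB_i$, pulled back by $\pi^{-1}$, give a partition $W_1,\dots,W_{m'}$ of $[n]$ along which $\code$ splits, with each $\code|_{W_i} \cong \cB_i$ indecomposable. The main obstacle is to show that any such partition coincides with the component partition $\{X_j\}$.

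The family of partitions along which $\code$ splits is closed under common refinement: if $\code$ splits along $\{W_i\}$ and along $\{X_j\}$, then it splits along $\{W_i\cap X_j\}$. To see this, project a codeword $c$ onto $W_i$; that projection lies in $\code$, and projecting it again onto $X_j$ stays in $\code$. So $c$ is the sum of its projections onto the sets $W_i\cap X_j$, each of which lies in $\code$.

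Consequently, $\code|_{W_i}$ splits along $\{W_i\cap X_j\}_j$. By indecomposability of $\code|_{W_i}$, each $W_i$ lies in a single $X_j$. Symmetrically, each $X_j$ lies in a single $W_i$ (using indecomposability of $\code|_{X_j}$). Therefore the two partitions are identical, $m=m'$, and $\cB_i \cong \code|_{W_i} = \code|_{X_{\sigma(i)}} \cong \cA_{\sigma(i)}$ for a suitable bijection $\sigma$.

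Since permutation equivalence commutes with taking components, this proves the statement. The same argument with monomial maps gives the monomial version.
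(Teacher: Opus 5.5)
Your proof is correct, and it takes a genuinely different route from the paper. The paper gives no proof of this lemma; it simply cites Slepian's Theorem 2.

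Your argument shows that the connected components $X_1,\dots,X_t$ of $\Gamma(\code)$ form the finest partition along which $\code$ splits. Each RREF row lies in one component, no RREF row crosses a splitting partition, and splitting partitions are closed under common refinement. You then observe that any decomposition into indecomposable summands must pull back to exactly this partition. The ingredients overlap with the paper's \cref{lem:exist_bridge} and \cref{lem:decomp-matrix-structure}(ii), but you make the argument self-contained.

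This buys two things the citation does not:
\begin{enumerate}[(i)]
\item It justifies the claim in the correctness proof of \cref{lem:alg-decompose} that the components of $\Gamma(\code)$ are exactly the supports of the indecomposable summands. The paper asserts this without proof.
\item Scaling coordinates does not change which partitions a code splits along. So your argument gives the monomial version \cref{lem:monomial_decomp} directly, without the closure machinery of \cref{lem:closure-equivalence,lem:closure-properties}.
\end{enumerate}

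Three small points on presentation:
\begin{enumerate}[(a)]
\item The claim that each $\code|_{X_j}$ is indecomposable deserves one explicit sentence. If $\code|_{X_j}$ split along $(Y, X_j\setminus Y)$, then $\code$ would split along $(Y,[n]\setminus Y)$. By your second direction, no edge of $\Gamma(\code)$ would then leave $Y$, contradicting connectivity of $X_j$.
\item The restriction to codes without zero columns, and the appeal to Lemma 3.4 of \cite{CSV25}, are unnecessary. A zero coordinate is just an isolated vertex of $\Gamma(\code)$ and automatically becomes a length-one indecomposable summand.
\item The induction for existence is redundant once you have the component decomposition.
\end{enumerate}
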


Indecomposable codes have the following useful property.

\begin{lem}
 \label{lem:exist_bridge}
    A code $\code$ is indecomposable if and only if for every non-trivial partition of its index set $X = Y \cup Z$, there exists at least one bridging element $\vecc \in \code$ with respect to $(Y, Z)$.
\end{lem}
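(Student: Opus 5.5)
We prove the contrapositive in both directions, using one key reformulation: for a partition $X = Y \cup Z$, define the projections $\code|_Y$ and $\code|_Z$ and the \emph{shortened} subcodes $\code_Y \defn \set{\vecc \in \code \mid \supp(\vecc) \subseteq Y}$ and $\code_Z$ analogously. The absence of bridging elements with respect to $(Y,Z)$ should be equivalent to $\code = \code_Y + \code_Z$, and this in turn to $\code$ being literally the direct sum of a code on $Y$ and a code on $Z$. Since a permutation of coordinates moves $Y$ to an initial segment, this last condition is exactly decomposability in the sense of \cref{def:decomposable}.

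\textbf{($\Leftarrow$, contrapositive).} Suppose some non-trivial partition $X = Y\cup Z$ admits no bridging element. We show every $\vecc\in\code$ lies in $\code_Y + \code_Z$.
\begin{itemize}
\item If $\supp(\vecc)\subseteq Y$ or $\supp(\vecc)\subseteq Z$, this is immediate.
\item Otherwise $\vecc$ meets both parts. Since it is not bridging, $\vecc = \vecu + \vecv$ with $\vecu,\vecv\in\code$, $\supp(\vecu)\subseteq Y$ and $\supp(\vecv)\subseteq Z$.
\end{itemize}
Hence $\code = \code_Y \oplus \code_Z$, the sum being direct because the supports are disjoint. Let $\pi$ be a permutation sending $Y$ to $\set{1,\ldots,\abs{Y}}$. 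Then $\pi(\code)$ is generated by a block-diagonal matrix as in \cref{def:direct-sum}, so $\code\cong \code_Y|_Y \oplus \code_Z|_Z$ is decomposable.

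\textbf{($\Rightarrow$, contrapositive).} Suppose $\code\cong\code_1\oplus\code_2$ via some $\pi$. Let $Y = \pi^{-1}(X_1)$ and $Z=\pi^{-1}(X_2)$, where $X_1,X_2$ are the embedded index sets, so the partition is non-trivial when both summands have positive length. Take any $\vecc\in\code$ meeting both $Y$ and $Z$. Then $\pi(\vecc) = [\vecc_1\ \vecc_2]$ with each embedded $\vecc_i$ in $\pi(\code)$. Pulling back gives $\vecu = \pi^{-1}([\vecc_1\ \veczero])$ and $\vecv = \pi^{-1}([\veczero\ \vecc_2])$, both in $\code$, with supports in $Y$ and $Z$ respectively. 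So $\vecc$ is not bridging, and no bridging element exists for this partition.

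\textbf{Main obstacle: degenerate cases.} A summand could be of dimension $0$, or there could be zero columns.
\begin{itemize}
\item A zero coordinate $i$ gives the partition $\set{i}\cup(X\setminus\set{i})$, which has no bridging element. This is consistent with decomposability, since $\code\cong \set{0}\oplus\code'$, provided length-one zero codes count as summands.
\item Conversely, if $\code_Y$ is $\set{\veczero}$, the argument above still yields a direct sum with a zero-dimensional summand of positive length.
\end{itemize}
So the proof goes through once we fix the convention that summands have positive length but may be zero-dimensional. This matches \cref{def:direct-sum} and the paper's remark that zero columns are removed without loss of generality (Lemma 3.4 of \cite{CSV25}). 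The monomial version follows verbatim, since scaling preserves supports.
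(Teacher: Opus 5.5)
Your proof is correct and is essentially the paper's argument: both implications go by contraposition. One uses the absence of bridging elements to split every codeword as $\vecu+\vecv$ across $(Y,Z)$, so that $\code$ is a direct sum of codes on $Y$ and $Z$. The other pulls back the block structure of $\code_1\oplus\code_2$ to a partition with no bridging element; your pullback through $\pi$ and your use of shortened subcodes are slightly more careful than the paper's direct identification.

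Two small remarks. Your labels $(\Leftarrow)$ and $(\Rightarrow)$ are swapped relative to the statement: your first block proves indecomposable $\Rightarrow$ bridging elements exist. Also, your convention that summands have positive length but may be zero-dimensional is exactly what the paper uses implicitly, and it is genuinely needed: e.g.\ $\set{00,10}\subseteq\F_2^2$ has no bridging element for $(\set{1},\set{2})$, and it is decomposable only under that convention.
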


\begin{proof}
    $(\Rightarrow)$ Let $\code$ be indecomposable.
    Assume for contradiction that there exists a non-trivial partition $X = Y \sqcup Z$ for which there is no bridging element in $\code$.
    Then, every codeword $\vecc \in \code$ whose support intersects both $Y$ and $Z$ can be expressed as a sum of two vectors with supports in $Y$ and $Z$, i.e. $\vecc = \vecu + \vecv$ for some $\vecu, \vecv \in \code$ such that $\supp(\vecu) \subseteq Y$ and $\supp(\vecv) \subseteq Z$.
    Hence, $\code$ can be written as a direct sum $\code = \code|_Y \oplus \code|_Z$, which contradicts the indecomposability of $\code$.
    
    $(\Leftarrow)$ We prove the contrapositive. 
    Suppose $\code$ is decomposable and let $X$ denote its index set.
    Then $\code \cong \code_1 \oplus \code_2$ for some codes $\code_1, \code_2$.
    By definition of direct sum, $\code_1, \code_2$ are supported index sets $Y, Z$ that form a strict, non-trivial partition $X = Y \sqcup Z$. 
    Also every codeword $\vecc \in \code$ can be uniquely written as $\vecc = [\vecu\ \vecv]$ for some $\vecu \in \code_1$ and $\vecv \in \code_2$. 
    As a result of the partition, we have $\supp(\vecu) \subseteq Y$ and $\supp(\vecv) \subseteq Z$.
    Consequently, for this partition $(Y, Z)$, every codeword in $\code$ decomposes into a sum of elements, each with non-intersecting support across the partition. 
    Thus, there is no bridging element with respect to $(Y,Z)$. 
    This proves the contrapositive claim.
\end{proof}

We prove an analogous statement to \cref{lem:Slepian} for monomial equivalence.
To do this, we use the notion of the \emph{closure of a code} introduced by Sendrier and Simos in \cite{SS13b}.
Fix an ordering of the elements in $\F_q^*$.
For any code $\code \subseteq \F_q^n$, its closure $\widetilde{\code} \subseteq \F_q^{n(q-1)}$ is the set defined by taking every codeword and multiplying each of its $n$ coordinates with all the non-zero field elements to produce a new vector of length $n(q-1)$. 
Formally, 
\begin{equation*}
    \widetilde{\code} \defn 
    \set{(\alpha_{1}\cdot c_{i},\ldots,\alpha_{q-1}\cdot c_i)_{i \in [n]} \mid \vecc = (c_1,\ldots,c_n) \in \code, \alpha_1,\ldots,\alpha_{q-1} \in \F_q^*}
    \subseteq \F_q^{n(q-1)}.
\end{equation*}

\begin{lem}[Theorem 1~\cite{SS13b}]
    \label{lem:closure-equivalence}
    For any two codes $\code_1, \code_2 \subseteq \F_q^n$, $\code_1 \cong_\cM \code_2$ if and only if $\widetilde{\code_1} \cong \widetilde{\code_2}$. 
\end{lem}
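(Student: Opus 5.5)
The plan is to transport a monomial map on $\F_q^n$ to a block permutation on $\F_q^{n(q-1)}$ for one direction, and to build a monomial map from any permutation between closures for the other.

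\emph{Forward direction.} Suppose $\mu = (\vecv,\pi)$ satisfies $\mu(\code_1) = \code_2$. Index the coordinates of $\F_q^{n(q-1)}$ by pairs $(i,\alpha)$ with $i \in [n]$ and $\alpha \in \F_q^*$. The entry of $\widetilde{\vecc}$ at $(i,\alpha)$ is $\alpha c_i$.

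Define $\tilde\pi(i,\alpha) \defn (\pi(i), \alpha v_i^{-1})$. This is well defined because multiplication by $v_i^{-1}$ permutes $\F_q^*$. Since $\mu(\vecc)_{\pi(i)} = v_i c_i$, the entry of the image at $(\pi(i),\beta)$ is $\beta v_i c_i$. Setting $\beta = \alpha v_i^{-1}$ recovers $\alpha c_i$, so $\tilde\pi$ maps the closure vector of $\vecc$ to the closure vector of $\mu(\vecc)$. This gives $\tilde\pi(\widetilde{\code_1}) = \widetilde{\code_2}$.

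\emph{Backward direction.} This is the main obstacle, because a permutation $\sigma$ with $\sigma(\widetilde{\code_1}) = \widetilde{\code_2}$ need not respect the blocks $\{i\}\times\F_q^*$. I would not try to force block structure. Instead I would use \cref{lem:ext_McWilliams} and argue via isometry.

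First I would regard the closure as the image of the injective linear map $\phi \colon \code \to \F_q^{n(q-1)}$ sending $\vecc$ to the vector of all $\alpha c_i$. This presumes the closure is read as the image of $\vecc \mapsto (\alpha c_i)_{i,\alpha}$, with all $\alpha \in \F_q^*$ used for each coordinate. Under that reading $\widetilde{\code}$ is a linear code and $\phi$ is $\F_q$-linear.

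Moreover $\phi$ scales weights exactly: $\mathrm{wt}(\phi(\vecc)) = (q-1)\,\mathrm{wt}(\vecc)$. Each nonzero $c_i$ contributes $q-1$ nonzero entries, and each zero $c_i$ contributes none.

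Now take the composite $\phi_2^{-1} \circ \sigma \circ \phi_1 \colon \code_1 \to \code_2$. It is $\F_q$-linear and bijective: $\sigma$ is a linear bijection between the closures, and dimensions agree since each $\phi_j$ is injective. It preserves Hamming weight, because $\sigma$ preserves weight and both $\phi_j$ scale weight by the same factor $q-1$.

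Hence $\code_1$ and $\code_2$ are isometric, and \cref{lem:ext_McWilliams} gives $\code_1 \cong_\cM \code_2$. The remaining routine checks are that $\sigma$, being a coordinate permutation, is linear and weight-preserving, and that injectivity of $\phi_1$ and $\phi_2$ makes the composite well defined.
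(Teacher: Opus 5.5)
Your proof is correct. The paper does not prove this lemma itself; it imports it as Theorem~1 of \cite{SS13b}, so your argument serves as a self-contained substitute rather than something to be matched against an in-paper proof.

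\textbf{Forward direction.} The block permutation $(i,\alpha)\mapsto(\pi(i),\alpha v_i^{-1})$ works as you compute it.

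\textbf{Backward direction.} The map $\vecc\mapsto\widetilde{\vecc}$ is a linear bijection $\code_j\to\widetilde{\code_j}$ that multiplies Hamming weight by $q-1$. Hence $\phi_2^{-1}\circ\sigma\circ\phi_1$ is a weight-preserving linear isomorphism $\code_1\to\code_2$, and \cref{lem:ext_McWilliams} then gives monomial equivalence. That lemma comes independently from \cite{BGG78}, so the argument is not circular.

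\textbf{The definition.} Your caveat is well placed. The closure must be read as the image of this map, with $\alpha$ ranging over the fixed enumeration of $\F_q^*$. If $\alpha_1,\dots,\alpha_{q-1}$ in the paper's set-builder are read as free, the set is not closed under addition and is not a linear code.

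\textbf{An alternative route.} The backward direction can also be proved without the extension theorem.
\begin{itemize}
\item Let $\widetilde{\mG_j}$ be the matrix with columns $\alpha\vecg^{(j)}_i$. From $\sigma(\widetilde{\code_1})=\widetilde{\code_2}$ you get an invertible $\mS$ with $\mS\widetilde{\mG_1}\mP=\widetilde{\mG_2}$.
\item Each column $\vecg_i$ contributes exactly the nonzero points of its line $\F_q^*\vecg_i$ to $\widetilde{\mG_j}$, or $q-1$ zero columns if $\vecg_i=0$.
\item So equality of column multisets forces every projective point to carry equally many columns of $\mS\mG_1$ and of $\mG_2$.
\item Matching these columns gives $\mS\mG_1=\mG_2\mM$ with $\mM$ monomial.
\end{itemize}
This route needs only counting and produces the monomial map explicitly from $\sigma$. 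Yours is shorter and cleaner, but it hands the real content to MacWilliams' theorem.
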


\begin{lem}
    \label{lem:closure-properties}
    The closure operation has the following properties.
    \begin{enumerate}[(i)]
        \item It distributes over direct sums, i.e. any direct sum code $\code = \bigoplus_{i=1}^m \cA_i$ has closure
        $\widetilde{\code} = \bigoplus_{i=1}^m \widetilde{\cA_i}$.
        
        \item $\code$ is indecomposable if and only if its closure $\widetilde{\code}$ is indecomposable.
    \end{enumerate}
\end{lem}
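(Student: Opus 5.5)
Part (i) follows from how the closure acts on coordinates, and part (ii) follows from the characterization of indecomposability by bridging elements (\cref{lem:exist_bridge}), applied through a correspondence between partitions of $X=[n]$ and partitions of the closure index set $\widetilde X \defn [n]\times[q-1]$. One caveat: the closure as literally written is not linear, so I will read it as the code generated by the displayed vectors. Equivalently, a generator matrix $\widetilde{\mG}$ is obtained by replacing each column $\vecg_i$ of $\mG$ with the $q-1$ columns $\alpha_1\vecg_i,\ldots,\alpha_{q-1}\vecg_i$. Under this reading $\widetilde{\code}=\{(\alpha_j c_i)_{i,j} : \vecc\in\code\}$, the map $\vecc\mapsto\widetilde{\vecc}$ is a linear bijection $\code\to\widetilde{\code}$, and $\supp(\widetilde{\vecc}) = \supp(\vecc)\times[q-1]$.

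\textbf{Part (i).} If $\code=\bigoplus_i\cA_i$ has block-diagonal generator matrix $\mG=\diag(\mG_1,\ldots,\mG_m)$, then expanding each column into its $q-1$ scalar multiples keeps every column inside its own block. So $\widetilde{\mG}$ is block diagonal with blocks $\widetilde{\mG_i}$, which gives $\widetilde{\code}=\bigoplus_i\widetilde{\cA_i}$. The same conclusion follows from the support formula: the embedded copies of the $\widetilde{\cA_i}$ live on the disjoint sets $X_i\times[q-1]$. The result holds up to the natural reordering of coordinates, which is all that matters since the statement is used up to equivalence.

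\textbf{Part (ii), $(\Leftarrow)$, by contrapositive.} Suppose $\code$ is decomposable, say monomially equivalent to $\cA_1\oplus\cA_2$. A monomial map $\mu$ induces a permutation of the closure coordinates, because scaling coordinate $i$ by $\lambda$ permutes the multiples $\alpha_j c_i$. Hence $\widetilde{\code}\cong\widetilde{\cA_1\oplus\cA_2}$, which is also what \cref{lem:closure-equivalence} gives. By part (i) this equals $\widetilde{\cA_1}\oplus\widetilde{\cA_2}$, so $\widetilde{\code}$ is decomposable. Both summands are nonzero because we assume no zero columns.

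\textbf{Part (ii), $(\Rightarrow)$, the main obstacle.} Suppose $\widetilde{\code}$ is decomposable. Then some nontrivial partition $\widetilde X=\widetilde Y\sqcup\widetilde Z$ has no bridging element. The difficulty is that this partition may split a fibre $\{i\}\times[q-1]$. I would rule this out directly. Take $\widetilde{\vecc}$ with $c_i\neq 0$; its support contains the whole fibre of $i$. Since $\widetilde{\vecc}$ is not bridging, $\widetilde{\vecc}=\widetilde{\vecu}+\widetilde{\vecv}$ with $\supp(\widetilde{\vecu})\subseteq\widetilde Y$ and $\supp(\widetilde{\vecv})\subseteq\widetilde Z$, where both are closures of codewords $\vecu,\vecv\in\code$ because the closure map is a bijection. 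Every closure vector has support that is a union of full fibres. So if the fibre of $i$ met both $\widetilde Y$ and $\widetilde Z$, we would get $u_i=v_i=0$, contradicting $c_i\neq0$. Since $\code$ has no zero coordinates, the partition is a union of fibres and comes from a nontrivial partition $X=Y\sqcup Z$.

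Pulling back along the bijection, every $\vecc\in\code$ splits as $\vecu+\vecv$ with supports in $Y$ and $Z$. So there is no bridging element, and \cref{lem:exist_bridge} says $\code$ is decomposable. The remaining care points are checking that the fibre argument really does use the support structure of closure vectors, and handling the permutation-versus-monomial notion of decomposability consistently.
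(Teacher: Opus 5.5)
Your proposal is correct and follows the paper's proof: part (i) comes from the block-diagonal column expansion of $\mG$, and the main direction of (ii) shows that a splitting partition of the closure's index set cannot cut a fibre (the paper's projective class $E_j$), so the partition descends to a splitting partition of $[n]$. Your write-up is tighter than the paper's in two places: your $u_i=v_i=0$ argument makes precise the paper's vague ``linearly dependent coordinates cannot lie in two summands'' step, and rightly flags that the absence of zero columns is needed there; and for the converse you use part (i) plus the lift of monomial maps to coordinate permutations of the closure, which supplies what the paper dismisses as ``a symmetric argument''.
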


\begin{proof}
    \emph{(i)} 
    Let $\code = \bigoplus_{i=1}^m \cA_i$ be a direct sum of indecomposable codes $\cA_1,\ldots,\cA_m$ and let $\mG$ be a $k\times n$ generator matrix for $\code$.
    Then, by definition of direct sum code, $\mG$ can be written in block-diagonal form, where each $i$-th block $\mG_i$ on the diagonal is a generator matrix for $\cA_i$, for all $i\in [m]$.
    By definition of closure, a generator matrix for $\widetilde{\code}$ can be obtained by replacing each column $\vecg_j$ of $\mG$ with the matrix containing its non-zero scalar multiples, $\bracks{\alpha_1 \vecg_j\ \dots\ \alpha_{q-1} \vecg_j}$, where $\F_q^* = \set{\alpha_1,\ldots,\alpha_{q-1}}$. 
    Since scalar multiplication leaves any $0$ entry unchanged, this column expansion produces a $k\times n(q-1)$ matrix $\widetilde{\mG}$ in block-diagonal form, where the $i$-th block on the diagonal is exactly the expansion of $\mG_i$, for all $i\in [m]$. 
    Therefore, by definition of direct sum, $\widetilde{\mG}$ generates the direct sum code $\bigoplus_{i=1}^m \widetilde{\cA_i}$.
    \smallskip

    \emph{(ii)}
    Let $X = [n]$ be the index set of the indecomposable code $\code$. 
    By definition, the closure operation expands each coordinate of the code into a subspace of its scalar multiples.
    Hence, every coordinate of $\code$ with index $j\in X$ corresponds to a projective class in $\widetilde{\code}$ whose index set is $E_j = \set{j(q-1)-q+2,\ldots,j(q-1)}$. 
    % 1, 2,..., q-1
    % q, q+1,..., 2q-1
    % 2q, 2q+1,..., 3q-1,...

    % First we prove the forwar direction.
    Suppose for contradiction that $\code$ is indecomposable and its closure $\widetilde{\code}$ is decomposable.
    Then by \cref{lem:exist_bridge}, the index set of $\widetilde{\code}$ has a non-trivial partition $Y' \sqcup Z' = [n(q-1)]$ such that $\widetilde{\code} = \widetilde{\code}\mid_{Y'} \oplus\ \widetilde{\code}\mid_{Z'}$. 
    For any $j\in X$, all coordinates of $\widetilde{\code}$ with index in $E_j$ are scalar multiples of each other, and so are linearly dependent. 
    Then, by definition of direct sum, these coordinates cannot be expressed in terms of more than one indecomposable summand of $\widetilde{\code}$.
    Hence, the entire set $E_j$ must be contained either entirely within $Y'$ or entirely within $Z'$.
    This implies that there must be a non-trivial partition $Y \sqcup Z = [n]$ of the original index set, such that $j \in Y \iff E_j \subseteq Y'$ and $j \in Z \iff E_j \subseteq Z'$. 
    Then for this partition, we have $\code = \code\mid_Y \oplus \code\mid_Z$, which contradicts the assumption that $\code$ is indecomposable.
    The converse follows by a symmetric argument.
\end{proof}

\begin{lem}
    \label{lem:monomial_decomp}
    Every code $\code \subseteq \F_q^n$ is monomially equivalent to a direct sum of indecomposable codes, i.e.
    \begin{equation*}
        \code \cong_{\cM} \cA_1 \oplus \dots \oplus \cA_m, %\bigoplus_{i=1}^m A_i
    \end{equation*}
    for some indecomposable $\cA_1, \ldots, \cA_m$.
    Furthermore, this decomposition is unique up to monomial equivalence and reordering of the summands, i.e. if
    % \begin{equation*}
        $\code \cong_\cM \cB \oplus \dots \oplus \cB_{m'}$,
    % \end{equation*}
    for indecomposable codes $\cB_1,\ldots, \cB_{m'}$, then $m = m'$ and $\cB_i \cong_\cM \cA_{\sigma(i)}$ for all $i \in [m]$ for some index permutation $\sigma \vcentcolon [m] \to [m]$.
\end{lem}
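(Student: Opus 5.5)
Existence is the easy half, and I would handle it by the same route as in \cref{lem:Slepian}. By \cref{lem:Slepian}, $\code \cong \cA_1 \oplus \dots \oplus \cA_m$ with each $\cA_i$ indecomposable, where indecomposability is in the permutation sense. Permutation equivalence implies monomial equivalence, so $\code \cong_\cM \cA_1\oplus\dots\oplus\cA_m$. The only thing to check is that each $\cA_i$ is also indecomposable in the monomial sense. Suppose $\cA_i \cong_\cM \cB\oplus\cB'$. Applying the inverse monomial map gives a non-trivial partition $Y\sqcup Z$ of the index set of $\cA_i$ with $\cA_i = \cA_i|_Y \oplus \cA_i|_Z$, because scaling coordinates does not change supports. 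This contradicts \cref{lem:exist_bridge}. Hence the two notions of indecomposability coincide.

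For uniqueness I would pass to closures. Suppose
\[
\code \cong_\cM \cA_1\oplus\dots\oplus\cA_m \quad\text{and}\quad \code\cong_\cM \cB_1\oplus\dots\oplus\cB_{m'} ,
\]
with all summands indecomposable. Then $\cA_1\oplus\dots\oplus\cA_m \cong_\cM \cB_1\oplus\dots\oplus\cB_{m'}$. By \cref{lem:closure-equivalence}, the closures of these two sums are permutation equivalent. By \cref{lem:closure-properties}(i), these closures are $\bigoplus_i \widetilde{\cA_i}$ and $\bigoplus_j \widetilde{\cB_j}$. By \cref{lem:closure-properties}(ii), every $\widetilde{\cA_i}$ and every $\widetilde{\cB_j}$ is indecomposable. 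So we have two decompositions of one code, up to permutation equivalence, into indecomposables, and the uniqueness part of \cref{lem:Slepian} gives $m=m'$ and a permutation $\sigma$ with $\widetilde{\cB_i}\cong\widetilde{\cA_{\sigma(i)}}$. Applying \cref{lem:closure-equivalence} again, in the reverse direction, gives $\cB_i\cong_\cM\cA_{\sigma(i)}$.

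The main obstacle is the bookkeeping around the closure. First, \cref{lem:closure-equivalence} is stated for codes of the same length; here that holds, since all the codes involved are monomially equivalent to $\code$, and each matched pair $\cB_i,\cA_{\sigma(i)}$ has equal length because its closures are permutation equivalent. Second, \cref{lem:Slepian} has to be applied to a code that is \emph{equal} to both closure sums up to permutation. I would do this by fixing one closure sum as the reference code $\widetilde{\cA_1}\oplus\dots\oplus\widetilde{\cA_m}$ and transporting the other decomposition to it through the permutation supplied by \cref{lem:closure-equivalence}. Third, the closure of a code, as defined, may need to be read as the linear span of the expanded vectors; I would check that the proof of \cref{lem:closure-properties} handles this consistently via the expanded generator matrix.
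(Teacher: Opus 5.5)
Your proposal is correct and follows essentially the same route as the paper. Existence comes from \cref{lem:Slepian} together with the fact that permutation equivalence implies monomial equivalence; uniqueness passes to closures via \cref{lem:closure-equivalence} and \cref{lem:closure-properties}, applies the uniqueness part of \cref{lem:Slepian}, and pulls the matching back with \cref{lem:closure-equivalence}, while your check via \cref{lem:exist_bridge} that permutation- and monomial-indecomposability coincide fills a small step the paper's proof leaves implicit.
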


\begin{proof}
    (\emph{Existence})
    By \cref{lem:Slepian}, every code $\code$ is equivalent to a direct sum of indecomposable codes, i.e.
    $\code \cong \cA_1 \oplus \dots \oplus \cA_m$
    for some indecomposable codes $A_1,\ldots,A_m$.
    Because permutation equivalence implies monomial equivalence, it trivially follows that
    $\code \cong_{\cM} \cA_1 \oplus \dots \oplus \cA_m$.
    \smallskip

    % \noindent
    (\emph{Uniqueness})
    Let $\code_A \defn \cA_1 \oplus \dots \oplus \cA_m$ and $\code_B \defn \cB_1 \oplus \dots \oplus \cB_{m'}$.
    Suppose $\code \cong_\cM \code_A$ and $\code \cong_\cM \code_B$. By transitivity, $\code_A \cong_\cM \code_B$.
    By \cref{lem:closure-equivalence}, $\code_A \cong_\cM \code_B$ implies that their closure codes satisfy $\widetilde{\code_A} \cong \widetilde{\code_B}$.
    By \cref{lem:closure-properties}, %the closure operation distributes over direct sums and strictly preserves indecomposability. Thus, 
    we have
    $\widetilde{\code_A} = \widetilde{\cA_1} \oplus \dots \oplus \widetilde{\cA_m}$
    and $\widetilde{\code_B} = \widetilde{\cB_1} \oplus \dots \oplus \widetilde{\cB_{m'}}$,
    % \begin{equation*}
    %     \widetilde{\code_A} = \bigoplus_{i=1}^m \widetilde{\cA_i}
    %     \quad \text{and}\quad
    %     \widetilde{\code_B} = \bigoplus_{j=1}^{m'} \widetilde{\cB_j},
    % \end{equation*}
    where each summand $\widetilde{\cA_i}$ and $\widetilde{\cB_j}$ is an indecomposable code.
    Substituting these direct sum identities into the equivalence yields 
    \begin{equation*}
        \widetilde{\code} = \widetilde{\cA_1} \oplus \dots \oplus \widetilde{\cA_m}
        \cong \widetilde{\cB_1} \oplus \dots \oplus \widetilde{\cB_{m'}} = \widetilde{\code_B}.
    \end{equation*}
    % $\widetilde{\code} = \bigoplus_{i=1}^m \widetilde{\cA_i}
    % \cong \bigoplus_{j=1}^{m'} \widetilde{\cB_j} = \widetilde{\code_B}$.
    By \cref{lem:Slepian}, the decomposition of a code into indecomposable summands is unique up to permutation equivalence. 
    This implies that the number of summands is identical, so $m = m'$, and there exists a bijection $\sigma \vcentcolon [m] \to [m']$ such that $\widetilde{\cA_i} \cong \widetilde{\cB_{\sigma(i)}}$ for all $i \in [m]$.
    By applying the reverse direction of \cref{lem:closure-equivalence} to the individual summands, the permutation equivalence $\widetilde{\code_A} \cong \widetilde{\code_B}$ implies the monomial equivalence $\cA_i \cong_\cM \cB_{\sigma(i)}$, for each $i\in [m]$.
    This proves the claim.
\end{proof}

\subsection{Properties of Indecomposable Codes}
\label{subsec:properties-decomp}

\begin{lem}
\label{lem:no_partial}
    Let $\code = \bigoplus_{i=1}^m \code_i \subseteq \F_q^n$ be a direct sum of indecomposable codes $\code_i$.
    % , whose supports $\set{X_i}_{i=1}^m$ partition the index set $X = \bigsqcup_{i=1}^m X_i = [n]$.
    For any monomial automorphism $\mu \in \MAut(\code)$, there exists a permutation $\sigma \in S_m$ of the summands such that $\mu(C_i) = C_{\sigma(i)}$ for all $i \in [m]$.
\end{lem}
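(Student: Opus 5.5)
The plan is to track supports. Write $X_1,\ldots,X_m$ for the supports of the summands $\code_1,\ldots,\code_m$ embedded in $X=[n]$. As in the standing convention after \cref{def:direct-sum}, we assume no all-zero columns, so the $X_j$ are non-empty and partition $[n]$. Write $\mu=(\vecv_\mu,\pi)$ with permutation part $\pi$.

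Since $\mu$ only permutes and rescales coordinates, each $\mu(\code_i)$ is a subcode of $\code$ with $\supp(\mu(\code_i))=\pi(X_i)$. Moreover $\mu(\code_i)\cong_\cM\code_i$, so by \cref{def:decomposable} (which is phrased up to monomial equivalence) $\mu(\code_i)$ is again indecomposable. Applying $\mu$ to $\code=\bigoplus_i\code_i$ gives a second direct sum decomposition
\[
\code=\mu(\code)=\mu(\code_1)\oplus\dots\oplus\mu(\code_m),
\]
whose supports $\pi(X_1),\ldots,\pi(X_m)$ also partition $[n]$.

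The key observation uses only that $\code=\bigoplus_j\code_j$ with disjoint supports $X_j$. If $\vecc\in\code$ has $\supp(\vecc)\subseteq X_j$, then $\vecc\in\code_j$: its components in the other summands are supported on disjoint sets and must vanish. The same holds for the decomposition $\bigoplus_i\mu(\code_i)$ with supports $\pi(X_i)$.

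Now fix $j$ and take any $\vecc\in\code_j$. Decompose it as $\vecc=\sum_i\vecc^{(i)}$ with $\vecc^{(i)}\in\mu(\code_i)$. By disjointness of the $\pi(X_i)$, each $\vecc^{(i)}$ is the restriction of $\vecc$ to $\pi(X_i)$. Hence $\vecc^{(i)}$ is a codeword of $\code$ supported in $X_j\cap\pi(X_i)$, and by the observation it lies in $\code_j$. So every codeword of $\code_j$ splits, inside $\code_j$, along the partition of $X_j$ into the non-empty pieces $X_j\cap\pi(X_i)$.

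If more than one piece were non-empty, take $Y$ to be one of them and $Z=X_j\setminus Y$. This is a non-trivial partition of the index set of $\code_j$ with no bridging element. By \cref{lem:exist_bridge} that contradicts the indecomposability of $\code_j$. Hence $X_j\subseteq\pi(X_i)$ for a unique $i$.

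The symmetric argument, with the roles of the two decompositions swapped, shows that each $\pi(X_i)$ lies inside a single $X_j$. Since both families partition $[n]$ into non-empty blocks, each $\pi(X_i)$ equals exactly one $X_j$. Setting $\sigma(i)\defn j$ gives a bijection of $[m]$.

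Finally, $\mu(\code_i)$ and $\code_{\sigma(i)}$ are both equal to the set of codewords of $\code$ supported in $X_{\sigma(i)}=\pi(X_i)$, by the observation applied to each decomposition. Therefore $\mu(\code_i)=\code_{\sigma(i)}$ for all $i$.

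The main obstacle I expect is the splitting step. I need that the pieces $\vecc^{(i)}$ land back in $\code_j$ and not merely in $\code$, and this relies on the "supported in $X_j$ implies in $\code_j$" observation. It is also the place where the no-zero-column assumption is needed: without it, supports need not partition $[n]$, and the blocks $\pi(X_i)$ could be mismatched against the $X_j$.
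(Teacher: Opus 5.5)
Your proof is correct and follows essentially the same route as the paper. Both arguments split a summand's support along the partition induced by $\pi$, use \cref{lem:exist_bridge} to force a single non-empty piece, and then run the argument in the reverse direction to upgrade containment to $\pi(X_i)=X_{\sigma(i)}$; you do this via the image decomposition $\bigoplus_i\mu(\code_i)$, the paper via $\mu^{-1}$. Your version is slightly more careful in two places. Your observation that codewords of $\code$ supported in $X_j$ lie in $\code_j$ is exactly what contradicts the bridging-element property inside the summand, whereas the paper only notes that the pieces lie in $\code$. You also explicitly conclude $\mu(\code_i)=\code_{\sigma(i)}$ instead of stopping at equality of supports.
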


Here we use $\mu(\code_i)$ to denote the image of $\mu$ on the embedded code $\set{\veczero} + \dots + \code_i + \dots + \set{\veczero}$, which gives some other embedded code $\set{\veczero} + \dots + \code_{\sigma(i)} + \dots + \set{\veczero}$ in the direct sum code $\code$.

\begin{proof}
    Let $\set{X_i}_{i=1}^m$ be supports of the summands $\set{C_i}_{i=1}^m$ in the index set $X = [n]$ of $\code$; these form a partition $X = \bigsqcup_{i=1}^m X_i$.
    Fix an arbitrary index $i \in [m]$ and consider any $\mu = (\vecv_\mu,\pi_\mu) \in \MAut(\code)$.
    We will show that the induced permutation on the index set, $\pi_\mu(X_i)$, is entirely contained within a single coordinate block $X_j$ for some $j \in [m]$.
    Since the index sets $\set{X_j}_{j=1}^m$ partition $X = [n]$,  the elements of $X_i$ can be partitioned according to their image under the permutation $\pi_\mu$.
    Define
    \begin{equation*}
        Y_{i,j} \defn X_i \cap \pi_\mu^{-1}(X_j) \quad \text{ for } j \in [m].
    \end{equation*}
    Then the collection of non-empty sets among the $\set{Y_{i,j}}_{j=1}^m$ forms a valid partition of $X_i$.
    
    Suppose, for the sake of contradiction, that the image $\pi_\mu(X_i)$ intersects multiple index sets non-trivially.
    Then there exist at least two distinct indices, $a$ and $b$, such that both $Y_{i,a} \neq \emptyset$ and $Y_{i,b} \neq \emptyset$. 
    This gives a partition of $X_i$ into
    $X_i = Y_{i,a} \sqcup Z$, where $Z \defn \bigsqcup_{j \neq a} Y_{i,j}$.
    % \begin{equation*}
    %     X_i = Y_{i,a} \sqcup Z, \quad \text{where } Z \defn \bigsqcup_{j \neq a} Y_{i,j}.
    % \end{equation*}
    % 
    By \cref{lem:exist_bridge}, since $\code_i$ is an indecomposable code supported on $X_i$, there exists a bridging element $\vecc \in \code_i$ with respect to the partition $\set{Y_{i,a}, Z}$. 
    By definition, $\vecc$ has support in both $Y_{i,a}$ and $Z$, and it cannot be decomposed into a sum of codewords whose supports are restricted only to $Y_{i,a}$ or $Z$.
    Now consider the image of this codeword under $\mu$ and denote this by $\vecc' \defn \mu(\vecc) \in \code$. 
    Because $\code$ is a direct sum code, every codeword in $\code$ decomposes uniquely into a sum of components, each restricted to an $X_j$ index set.
    This allows us to write $\vecc' = \sum_{j=1}^m \vecu'_j$,
    % \begin{equation*}
    %     \vecc' = \sum_{j=1}^m \vecu'_j,
    % \end{equation*}
    where $\vecu'_j \in \code_j$ and $\supp(\vecu'_j) \subseteq X_j$.
    We apply the inverse automorphism $\mu^{-1} = (\vecv_\mu^{-1},\pi_\mu^{-1})$ to this decomposition.
    The permutation component of $\mu^{-1}$ is $\pi_\mu^{-1}$ and its scaling vector is $\vecv_\mu^{-1} = (v_1^{-1},\ldots,v_n^{-1})$ whose entries are the inverses of the non-zero scalar entries of $\vecv_\mu$.
    Applying $\mu^{-1}$ to the decomposition of $\vecc'$, we obtain a decomposition of the original codeword
    \begin{equation*}
        \vecc = \mu^{-1}(\vecc') = \sum_{j=1}^m \mu^{-1}(\vecu'_j).
    \end{equation*}
    Since $\supp(\vecu'_j) \subseteq X_j$ and $\mu$ induces a bijection, it follows that $\supp(\mu^{-1}(\vecu'_j)) \subseteq \mu^{-1}(X_j)$. 
    Furthermore, since $\vecc$ is supported entirely within $X_i$, each non-zero component $\mu^{-1}(\vecu'_j)$ must be supported within $X_i \cap \pi_\mu^{-1}(X_j)$, which is precisely the set $Y_{i,j}$.
    These terms can be grouped according to the partition $\set{Y_{i,a}, Z}$, into
    \begin{equation*}
        % \vecc = \underbrace{\mu^{-1}(\vecu'_a)}_{\vecv_1} + \underbrace{\sum_{j \neq a} \mu^{-1}(\vecu'_j)}_{\vecv_2}.
        \vecc = \mu^{-1}(\vecu'_a) + \sum_{j \neq a} \mu^{-1}(\vecu'_j).
    \end{equation*}
    Denote the summands by $\vecv_1 \defn \mu^{-1}(\vecu'_a)$ and $\vecv_2 \defn \sum_{j \neq a} \mu^{-1}(\vecu'_j)$.
    By construction, $\supp(\vecv_1) \subseteq Y_{i,a}$. 
    Since all other terms map into sets distinct from $X_a$, their supports are contained in $Z$, and so $\supp(\vecv_2) \subseteq Z$.
    Both $\vecv_1$ and $\vecv_2$ are valid codewords in $\code$, which means that $\vecc$ can be written as a sum $\vecv_1 + \vecv_2$ whose summands have disjoint supports across $\set{Y_{i,a}, Z}$. 
    But this contradicts the fact that $\vecc$ is a bridging element.
    % Thus, exactly one subset $Y_{i,j}$ must be non-empty. 
    % This implies that $\pi_\mu(X_i) \subseteq X_j$ for some $j \in [m]$. 
    % Since $\pi_\mu$ is a bijection on the finite set $X$, and the collection $\set{X_i}_{i=1}^m$ partitions $X$, the containment $\pi_\mu(X_i) \subseteq X_j$ is actually an equality, $\pi_\mu(X_i) = X_j$. 
    % Therefore, $\mu$ induces a permutation $\sigma \in S_m$ on the indices such that $\pi_\mu(X_i) = X_{\sigma(i)}$.
    Thus, exactly one subset $Y_{i,j}$ must be non-empty. 
    This implies that $\pi_\mu(X_i) \subseteq X_j$ for some $j \in [m]$.
    
    Now consider the inverse automorphism $\mu^{-1} \in \MAut(C)$. 
    Since $\pi_\mu(X_i) \subseteq X_j$, we have $X_i \subseteq \pi_\mu^{-1}(X_j)$.
    Applying the same argument as above to $\mu^{-1}$, we obtain that the set $\pi_\mu^{-1}(X_j)$ is contained entirely in the index set of a single summand.
    Since $X_i \subseteq \pi_\mu^{-1}(X_j)$, this summand must be $X_i$, so $\pi_\mu^{-1}(X_j) \subseteq X_i.$
    Thus, $\pi_\mu^{-1}(X_j)=X_i$, and equivalently, $\pi_\mu(X_i)=X_j$.
    Therefore, $\mu$ induces a permutation $\sigma \in S_m$ on the summands such that $\pi_\mu(X_i)=X_{\sigma(i)}$ for all $i \in [m]$.
\end{proof}

Since $\Aut(\code) \subseteq \MAut(\code)$ for any code $\code$, \cref{lem:no_partial} immediately gives the following corollary.

\begin{cor}
\label{cor:no_partial-Aut}
    Let $\code = \bigoplus_{i=1}^m \code_i \subseteq \F_q^n$ be a direct sum of indecomposable codes $\code_i$.
    For any automorphism $\pi \in \Aut(\code)$, there exists a permutation $\sigma \in S_m$ of the summands such that $\pi(C_i) = C_{\sigma(i)}$ for all $i \in [m]$.
\end{cor}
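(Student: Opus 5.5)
The plan is to derive \cref{cor:no_partial-Aut} directly from \cref{lem:no_partial} by viewing each permutation automorphism as a monomial automorphism with trivial scaling. Given $\pi \in \Aut(\code)$, set $\mu \defn (\vecone, \pi) \in (\F_q^*)^n \rtimes \cS_n$, where $\vecone$ is the all-ones scaling vector. The action of $\mu$ on $\F_q^n$ coincides with the action of $\pi$, since every coordinate is multiplied by $1$ before being permuted. Consequently $\mu(\code) = \pi(\code) = \code$, so $\mu \in \MAut(\code)$. This is the inclusion $\Aut(\code) \subseteq \MAut(\code)$ noted in the preliminaries.

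We then apply \cref{lem:no_partial} to $\mu$ and the same decomposition $\code = \bigoplus_{i=1}^m \code_i$ into indecomposable summands. This gives a permutation $\sigma \in S_m$ with $\mu(\code_i) = \code_{\sigma(i)}$ for all $i \in [m]$. Here $\code_i$ denotes the embedded summand, as in the lemma. Since $\mu$ and $\pi$ induce the same map on vectors, $\pi(\code_i) = \mu(\code_i) = \code_{\sigma(i)}$. Equivalently, $\pi(X_i) = X_{\sigma(i)}$ on the supports, because the permutation component of $\mu$ is $\pi$.

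No step is a genuine obstacle. The only point to state explicitly is that the embedding $\pi \mapsto (\vecone,\pi)$ preserves the induced action on codewords and embedded subcodes. This guarantees that the conclusion of \cref{lem:no_partial} for $\mu$ transfers verbatim to $\pi$.
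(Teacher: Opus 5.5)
Your proposal is correct and is the paper's argument: the paper obtains the corollary in one line from the inclusion $\Aut(\code) \subseteq \MAut(\code)$ and \cref{lem:no_partial}, which is exactly your embedding $\pi \mapsto (\vecone,\pi)$. Your explicit check that this embedding preserves the induced action on the embedded summands is a detail the paper leaves implicit.
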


\begin{lem}
  \label{lem:ratio}
    If $\code_1, \code_2 \subseteq \F_q^n$ are indecomposable codes and $\code \defn \code_1 \oplus \code_2$, 
    then they satisfy the ratio    
    \[
    \frac{\abs{\MAut(\code)}}{\abs{\MAut(\code_1)} \cdot \abs{\MAut(\code_2)}} = 
    \begin{cases} 
    1 & \text{if } \code_1 \not\cong_{\cM} \code_2 \\
    2 & \text{if } \code_1 \cong_{\cM} \code_2.
    \end{cases}
    \]
\end{lem}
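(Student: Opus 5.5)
The plan is to apply \cref{lem:no_partial} to the direct sum $\code = \code_1 \oplus \code_2$, with index sets $X_1, X_2$, and split $\MAut(\code)$ according to the induced permutation $\sigma \in S_2$ of the summands. This gives a map $\Phi \colon \MAut(\code) \to S_2$, $\mu \mapsto \sigma_\mu$. It is a group homomorphism, since the block that $\mu$ sends $X_i$ to is determined by $\pi_\mu$, and these block actions compose. Let $K \defn \ker \Phi$ be the set of monomial automorphisms with $\pi_\mu(X_1) = X_1$ and $\pi_\mu(X_2) = X_2$. Then $\abs{\MAut(\code)} = \abs{K}\cdot\abs{\Im \Phi}$, and the proof reduces to two claims: $\abs{K} = \abs{\MAut(\code_1)}\cdot\abs{\MAut(\code_2)}$, and $\Im\Phi$ is nontrivial if and only if $\code_1 \cong_\cM \code_2$.

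\textbf{Identifying the kernel.} Suppose $\mu = (\vecv_\mu, \pi_\mu) \in K$. Restrict it to $X_1$ and $X_2$, obtaining monomial maps $\mu_1, \mu_2$ on $\F_q^{n_1}$ and $\F_q^{n_2}$. Take $\vecc_1 \in \code_1$ embedded as $[\vecc_1\ \veczero]$. Its image $\mu([\vecc_1\ \veczero])$ lies in $\code$ and is supported on $X_1$. Because the decomposition in a direct sum is unique, this image equals $[\mu_1(\vecc_1)\ \veczero]$ with $\mu_1(\vecc_1) \in \code_1$. Hence $\mu_1(\code_1) \subseteq \code_1$, and equality holds since $\mu_1$ is injective and the dimensions agree. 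Therefore $\mu_1 \in \MAut(\code_1)$, and likewise $\mu_2 \in \MAut(\code_2)$. Conversely, any pair $(\mu_1, \mu_2) \in \MAut(\code_1)\times\MAut(\code_2)$ acts blockwise on $\code$ and lies in $K$. These two assignments are mutually inverse bijections, so $K \cong \MAut(\code_1)\times\MAut(\code_2)$.

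\textbf{Identifying the image.} The image $\Im\Phi$ is a subgroup of $S_2$, so it has order $1$ or $2$. Suppose some $\mu$ swaps the blocks, so $\pi_\mu(X_1) = X_2$. The same uniqueness-of-decomposition argument shows that $\mu$ maps $[\vecc_1\ \veczero]$ into the embedded copy of $\code_2$. Restricting $\mu$ to $X_1 \to X_2$ therefore gives a monomial isomorphism $\code_1 \to \code_2$, so $\code_1 \cong_\cM \code_2$. Conversely, suppose $\nu \in \Isom(\code_1, \code_2)$. Define $\mu$ on $\F_q^{n}$ by $[\vecu\ \vecv] \mapsto [\nu^{-1}(\vecv)\ \nu(\vecu)]$. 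This is a monomial map, and $\nu^{-1}$ is monomial with inverted scalars. It sends $\code$ onto $\code$, and it swaps the blocks, so $\Im\Phi = S_2$. Combining this with the kernel computation gives the stated ratio of $1$ or $2$.

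\textbf{Main obstacle.} The only delicate point is a bookkeeping issue. When $\code_1 \cong_\cM \code_2$ the lengths $n_1$ and $n_2$ agree, but when the codes are not equivalent they need not, so the blocks cannot be identified with a fixed coordinate layout. The argument above avoids this by always working with the index sets $X_1, X_2$ and restrictions to them.

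I will also be careful that \cref{lem:no_partial} is applied to the decomposition with $m = 2$ indecomposable summands. This guarantees that every monomial automorphism either fixes both blocks or swaps them, which is exactly what makes $\Phi$ well defined.
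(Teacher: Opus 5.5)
Your proof is correct and is essentially the paper's argument: both use \cref{lem:no_partial} to split $\MAut(\code)$ into the block-preserving subgroup $H=\MAut(\code_1)\times\MAut(\code_2)$ plus at most one block-swapping coset, which is nonempty exactly when $\code_1\cong_\cM\code_2$. Your homomorphism $\Phi$ to $S_2$ just repackages the paper's explicit $H\sqcup\mu H$ coset computation, and your version is more careful than the paper's, since you prove that block-preserving automorphisms restrict to automorphisms of each summand and you build the swap from an isometry rather than citing \cref{lem:no_partial} for its existence. One small addition is needed: in the swap case, add the one-line reverse inclusion (using $\pi_\mu(X_2)=X_1$) to show the restricted map is onto $\code_2$.
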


\begin{proof}
    Let $X = X_1 \sqcup X_2$ be the index set partition induced by the direct sum $\code = \code_1 \oplus \code_2$.
    Define the direct product of the monomial automorphism groups of $\code_1$ and $\code_2$ by 
    \[
        H \defn \MAut(\code_1) \times \MAut(\code_2) = \set{(\mu_1, \mu_2) \mid \mu_i \in \MAut(\code_i), i=1,2} \subseteq \MAut(\code).
    \]
    % $H \defn \MAut(\code_1) \times \MAut(\code_2) = \set{(\sigma_1, \sigma_2) \mid \sigma_i \in \MAut(\code_i), i=1,2} \subseteq \MAut(\code)$.
    We consider the two possible cases for equivalence between $\code_1$ and $\code_2$.
    \smallskip
    
    ($\code_1 \not\cong_{\cM} \code_2$)
    We claim that $H = \MAut(\code)$. 
    Assume there exists $\mu \in \MAut(\code) \setminus H$; this has the form $\mu = (\vecv_\mu, \pi_\mu)$ for some $\vecv_\mu \in (\F_q^*)^{2n}$ and $\pi_\mu\in \cS_{2n}$.
    The associated permutation $\pi_\mu$ must map at least one index from $X_1$ to $X_2$.
    So, by \cref{lem:no_partial}, $\pi_\mu(X_1) = X_2$ which contradicts the assumption.
    Therefore, $H = \MAut(\code)$, and $\abs{\MAut(\code)} = \abs{\MAut(\code_1)} \cdot \abs{\MAut(\code_2)}$, so the value of the ratio is $1$.
    \smallskip
    
    ($\code_1 \cong_{\cM} \code_2$)
    By \cref{lem:no_partial}, there exists a monomial automorphism $\mu = (\vecv_\mu, \pi_\mu) \in \MAut(\code)$ whose associated permutation $\pi_\mu$ permutes the coordinates of $\code_1$ to $\code_2$ i.e. $ \pi_\mu(X_1) = X_2$ and $\pi_\mu(X_2) = X_1$. 
    Then $\mu \notin H$, and so the cosets $H$ and $\mu H$ are disjoint.
    This implies $\abs{\MAut(\code)} \geq 2 \cdot \abs{\MAut(\code_1)} \cdot \abs{\MAut(\code_2)}$.
    Suppose some $\mu' = (\vecv_{\mu'},\pi_{\mu'}) \in \MAut(\code)$ permutes $\pi_{\mu'}(X_1) = X_2$ and $\pi_{\mu'}(X_2) = X_1$, otherwise $\mu' \in H$.
    Consider the composition $\kappa \defn \mu^{-1} \circ \mu'$.
    The permutation associated with $\kappa$ acts on $X_1$ as $\pi_\kappa(X_1) = \pi_{\mu^{-1}}(\pi_{\mu'}(X_1)) = \pi_\mu^{-1}(X_2) = X_1$.
    Since $\kappa$ is a bijection on the finite set $X$, it must similarly map $X_2$ to $X_2$. 
    Since $\kappa$ preserves both index sets, we have $\kappa \in H$.
    Since $\mu^{-1} \circ \mu' = \kappa \in H$, multiplying by $\mu$ on the left yields $\mu' = \mu \circ \kappa \in \mu H$.

    Therefore, every element of $\MAut(\code)$ is either in $H$ or in the left coset $\mu H$, establishing $\MAut(\code) = H \sqcup \mu H$.
    Consequently, $\abs{\MAut(\code)} = 2 \cdot \abs{\MAut(\code_1)} \cdot \abs{\MAut(\code_2)}$, and the ratio has value $2$.
\end{proof}

\begin{lem}
  \label{lem:ratio-Aut}
    If $\code_1, \code_2 \subseteq \F_q^n$ are indecomposable codes and $\code \defn \code_1 \oplus \code_2$, 
    then they satisfy the ratio    
    \[
    \frac{\abs{\Aut(\code)}}{\abs{\Aut(\code_1)} \cdot \abs{\Aut(\code_2)}} = 
    \begin{cases} 
    1 & \text{if } \code_1 \not\cong \code_2 \\
    2 & \text{if } \code_1 \cong \code_2.
    \end{cases}
    \]
\end{lem}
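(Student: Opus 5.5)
The plan is to mirror the proof of \cref{lem:ratio}, using \cref{cor:no_partial-Aut} in place of \cref{lem:no_partial}. Let $X = X_1 \sqcup X_2$ be the index set partition induced by $\code = \code_1 \oplus \code_2$. Put
\[
H \defn \Aut(\code_1) \times \Aut(\code_2),
\]
viewed as the subgroup of $\Aut(\code)$ of permutations that act on $X_1$ by an automorphism of $\code_1$ and on $X_2$ by an automorphism of $\code_2$. Such permutations clearly preserve $\code$, since every codeword is $[\vecc_1\ \vecc_2]$ with $\vecc_i \in \code_i$. Conversely, every $\pi \in \Aut(\code)$ with $\pi(X_1) = X_1$ (hence $\pi(X_2) = X_2$) lies in $H$. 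Indeed, such a $\pi$ maps the embedded subcode $\code_1$, which consists of the codewords of $\code$ supported in $X_1$, onto the codewords of $\code$ supported in $X_1$. So $\pi|_{X_1} \in \Aut(\code_1)$, and likewise on $X_2$. Thus $\abs{H} = \abs{\Aut(\code_1)}\cdot\abs{\Aut(\code_2)}$.

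By \cref{cor:no_partial-Aut}, each $\pi \in \Aut(\code)$ either fixes both blocks setwise, in which case $\pi \in H$, or swaps them, with $\pi(X_1) = X_2$ and $\pi(X_2) = X_1$. We then split into two cases.

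\emph{Case $\code_1 \not\cong \code_2$.} A swapping $\pi$ would restrict to a permutation taking $\code_1$ onto $\code_2$, since it maps the codewords supported in $X_1$ onto those supported in $X_2$. That restriction, read as a map $X_1 \to X_2$ after identifying both blocks with $[n_i]$, would give $\code_1 \cong \code_2$, a contradiction. (If $n_1 \neq n_2$, no swap can exist at all.) Hence $\Aut(\code) = H$ and the ratio is $1$.

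\emph{Case $\code_1 \cong \code_2$.} Take $\tau \in \Iso(\code_1,\code_2)$, and define $\pi$ on $X$ by acting as $\tau$ from $X_1$ to $X_2$ and as $\tau^{-1}$ from $X_2$ to $X_1$. Then $\pi$ sends $[\vecc_1\ \vecc_2]$ to $[\tau^{-1}(\vecc_2)\ \tau(\vecc_1)] \in \code$, so $\pi \in \Aut(\code) \setminus H$. This existence step is where the monomial proof merely cited \cref{lem:no_partial}; here I would construct $\pi$ explicitly.

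Now let $\pi'$ be any swapping automorphism. Then $\pi^{-1}\pi'$ preserves $X_1$ and $X_2$, so it lies in $H$, and therefore $\pi' \in \pi H$. Hence $\Aut(\code) = H \sqcup \pi H$, which has order $2\abs{H}$, and the ratio is $2$.

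The only delicate step is checking that a block-preserving or block-swapping automorphism restricts correctly to the summands. I expect this to be routine once the embedded subcodes are identified as the codewords of $\code$ supported in $X_i$.
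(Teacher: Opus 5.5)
Your proof is correct and takes the same route as the paper, which transfers the coset decomposition $\Aut(\code) = H \sqcup \pi H$ from the proof of \cref{lem:ratio} using \cref{cor:no_partial-Aut}. You also supply two steps the paper only asserts: you construct the block-swapping automorphism explicitly from some $\tau \in \Iso(\code_1,\code_2)$, where the paper attributes its existence to \cref{lem:no_partial}, which does not provide it; and you check that a swap restricts to an equivalence $\code_1 \cong \code_2$.
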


\begin{proof}
    The proof for the permutation automorphism groups is identical to the proof of \cref{lem:ratio} for monomial automorphism groups.
    This is because the effect of scaling in $\MAut(\code)$ does not matter, but only the permutation of indices in the index sets.
\end{proof}

\begin{remark}
    A similar result to \cref{lem:ratio-Aut} is given in Equation~34 of \cite{Slepian60}.
    Slepian's formula computes the number of double cosets of any finite group with respect to two of its subgroups.
    Since no proof or reference is given, and we only require a specific restriction of this ratio that has only two possible evaluations, we present \cref{lem:ratio,lem:ratio-Aut} as independent results.
\end{remark}

\begin{lem}
    \label{lem:decomp-matrix-structure}
    Let $\code$ be a decomposable $[n,k]$-code such that $\code \cong \bigoplus_{i=1}^m \code_i$ for some $[n_i,k_i]$-codes $\code_i$.
    Denote its corresponding index set partition by $X = \bigsqcup_{i=1}^m X_i$, where $X_i \subseteq [n]$ is the index set of $\code_i$ embedded in $X$.
    Then, the unique RREF generator matrix $\mG_{\rref}$ of $\code$ has the following properties:
    \begin{enumerate}[(i)]
         \item The set of rows of $\mG_{\rref}$ can be partitioned into disjoint sets $R_1, \ldots, R_m \subseteq [k]$, such that $|R_i| = k_i$, and every row in $R_i$ has its pivot in $X_i$.
         
        \item For every row vector $\vecr$ of $\mG_{\rref}$, there exists exactly one index $i \in [m]$ such that $\supp(\vecr) \subseteq X_i$.
    \end{enumerate}
\end{lem}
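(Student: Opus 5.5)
The plan is to write down a candidate RREF matrix for $\code$ built blockwise from the summands, check that it satisfies the RREF conditions, and then appeal to uniqueness of the RREF generator matrix, the same fact used after \cref{constr:supp-graph}. Both properties (i) and (ii) will then be visible directly from the candidate.

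First I fix the setting. The partition $X = \bigsqcup_{i=1}^m X_i$ is taken inside the index set of $\code$ itself, so I read the hypothesis as saying that $\code$ is literally the internal direct sum $\code = \code|_{X_1} \oplus \dots \oplus \code|_{X_m}$. Here $\code|_{X_i}$ is identified with the embedded subcode of codewords supported in $X_i$, and it has dimension $k_i$ with $\sum_i k_i = k$. For each $i$:
\begin{enumerate}[(a)]
    \item Order $X_i$ by the order it inherits from $[n]$.
    \item Let $\mG^{(i)}$ be the unique RREF generator matrix of $\code|_{X_i}$ with respect to that order.
    \item Embed each row of $\mG^{(i)}$ back into $\F_q^n$ by putting zeros on $X \setminus X_i$.
\end{enumerate}
Collect all $k$ embedded rows from all blocks and sort them by the global position of their pivot to form a $k \times n$ matrix $\mG'$.

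Next I verify that $\mG'$ is an RREF generator matrix of $\code$.
\begin{itemize}
    \item \emph{Spanning and rank.} The rows span $\bigoplus_i \code|_{X_i} = \code$. They are linearly independent because their supports lie in the disjoint sets $X_i$ and each block is independent, so the rank is $k$.
    \item \emph{Leading entries.} An embedded row of $\mG^{(i)}$ vanishes outside $X_i$. Since the order on $X_i$ is inherited from $[n]$, its first nonzero entry in $[n]$ is its pivot in $\mG^{(i)}$, which equals $1$.
    \item \emph{Distinct pivots.} All pivots are distinct: within a block this is RREF, and across blocks the $X_i$ are disjoint. So sorting by pivot gives a strict staircase.
    \item \emph{Pivot columns.} Take a pivot column $p \in X_i$. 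The other rows of block $i$ are zero at $p$ because $\mG^{(i)}$ is in RREF, and rows of every other block are zero at $p$ because they vanish on $X_i$.
\end{itemize}
Hence $\mG'$ is in RREF, and by uniqueness $\mG' = \mG_{\rref}$.

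The two properties now follow. For (i), let $R_i$ be the set of rows coming from $\mG^{(i)}$; then $|R_i| = k_i$ and each row of $R_i$ has its pivot in $X_i$. For (ii), each row is supported in its own $X_i$. Since a row is nonzero and the $X_i$ are disjoint, there is exactly one such index.

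The step needing the most care is the leading-entry check in the second step, where the blocks $X_i$ may interleave in $[n]$. The point is that the restricted RREF must be computed with respect to the inherited order, so that the first nonzero entry in $X_i$ is also the first nonzero entry in $[n]$.
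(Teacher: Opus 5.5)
Your proof is correct, but it takes a different route from the paper. You exhibit the RREF blockwise and appeal to uniqueness of the RREF; the paper never builds a competing matrix.

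For (i), the paper defines $R_i$ as the set of rows of $\mG_{\rref}$ whose pivot lies in $X_i$. The pivot columns in $X_i$ are distinct unit vectors, so $|R_i| \le \rank(\mG_{\rref}|_{X_i}) = k_i$, and $\sum_i |R_i| = k = \sum_i k_i$ then forces equality. For (ii), it splits a row $\vecr$ with pivot in $X_j$ into its components $\vecv^{(i)}$ on the $X_i$; each is a codeword under the same internal-direct-sum reading you adopt. Every $\vecv^{(i)}$ with $i \neq j$ vanishes on all pivot coordinates, so $\vecv^{(i)} = \veczero$, because a codeword is determined by its pivot entries.

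Your construction yields (i) and (ii) at once and proves more: the rows in $R_i$, restricted to $X_i$, are exactly the RREF generator matrix of $\code|_{X_i}$ in the inherited order. Step 5 of \cref{alg:decompose} relies on precisely this when it reads off the summand generators as submatrices of $\mG_{\rref}$. The cost is the interleaving issue, which you correctly flagged and resolved by using the inherited order on each $X_i$. The paper's argument works on the given matrix, so it never has to track how the $X_i$ sit inside $[n]$ or invoke uniqueness of the RREF.
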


\begin{proof}
    For simplicity, denote $\mG \defn \mG_{\rref}$.
    Let $P \subset [n]$ be the set of pivot indices in $\mG$; it has size $\abs{P} = k$. 
    \smallskip
    
    \emph{(i)}
    By the definition of a direct sum code, the dimension of $\code$ projected onto each embedded index set $X_i$ is $\dim(\code|_{X_i}) = \dim(\code_i) = k_i$. 
    For every $i\in [m]$, let $\mG|_{X_i}$ denote the $k \times n_i$ submatrix of $\mG$ comprised of the columns in $X_i$.
    This matrix generates the projected code $\code|_{X_i}$, so $\rank(\mG|_{X_i}) = \dim(\code|_{X_i}) = k_i$.
    
    Define $R_i \defn \set{ \vecr \in \mG \mid \vecr \text{ has its pivot in a column belonging to } X_i}$ for every $i\in [m]$.
    Since the sets $X_i$ are mutually disjoint, the row sets $R_i$ are also disjoint. 
    Because the rows of $\mG$ form a basis of $\code$, every row has exactly one pivot, and so $\bigsqcup_{i=1}^m R_i$ partitions the set of rows of $\mG$. 
    Denote $k_i' \defn \abs{R_i}$ for every $i\in [m]$.
    By the hypothesis $\code \cong \bigoplus_{i=1}^m \code_i$ and the partition of rows of $\mG$, we have $\sum_{i=1}^m k_i' = k = \sum_{i=1}^m k_i$. 
    Since there are exactly $k_i'$ rows whose pivots fall in $X_i$, the rank of the submatrix $\mG|_{X_i}$ must be at least $\rank(\mG|_{X_i}) = k_i \geq k_i'$.
    Hence, since $\sum_{i=1}^m k_i' = \sum_{i=1}^m k_i$ and $k_i \geq k_i'$ for all $i\in [m]$, we must strictly have $k_i' = k_i$ for all $i \in [m]$.
    Therefore, the rows of $\mG$ can be partitioned into these sets $R_1,\ldots,R_m$.
    \medskip
    
    \emph{(ii)}
    Let $\vecr$ be an arbitrary row vector in $\mG$.
    Since $\code \cong \bigoplus_{i=1}^m \code_i$, the vector $\vecr$ can be uniquely decomposed as a sum of its projections onto each index set $X_i$, i.e. $\vecr = \sum_{i=1}^m \vecv^{(i)}$, where $\supp(\vecv^{(i)}) \subseteq X_i$. 
    By the definition of the direct sum, every $\vecv^{(i)}$ is a valid codeword in $\code$.

    We claim that only one of the summands $\vecv^{(j)}$ is non-zero.
    By part 1, $\vecr$ belongs to exactly one set $R_j$, meaning that its pivot is in some column with index $\ell \in X_j$.
    Then, by definition of RREF, $\vecr$ has entry $1$ at index $\ell$ and $0$ at every other index in $P\setminus\set{\ell}$.

    Take any vector $\vecv^{(i)}$ where $i\neq j$ and consider the entries of $\vecv^{(i)}$ in the pivot columns $P$.
    By definition, $\supp(\vecv^{(i)}) \subseteq X_i$ and $X_i \cap X_j - \emptyset$.
    Then, $\vecv^{(i)}$ has entry $0$ at the pivot index $\ell \in P$ of $\vecr$, and $\vecv^{(i)}_{\ell'} = 0$ at any other index $\ell' \in P \setminus \set{\ell}$.
    Hence, $\vecv^{(i)}_p = 0$ for all $p \in P$.
        
    Now, since $\vecv^{(i)} \in \code$, it can be written as a linear combination of the rows of $\mG$. 
    By definition of RREF, the submatrix consisting of columns with indices in $P$ forms a $k\times k$ identity matrix. 
    Since $\vecv^{(i)}_p = 0$ for all $p \in P$ and any linear combination of the rows of $\mG$ includes a sum of scalings of the $k$ pivots in $P$, all scalers in the linear combination must be zero.
    Hence, $\vecv^{(i)} = \veczero$ for all $i\neq j$.
    Therefore, $\vecr = \vecv^{(j)}$ and so $\supp(\vecr) \subseteq X_j$.
\end{proof}

\subsection{How to Decompose a Code}
\label{subsec:how-to-decompose}

In this section we present a method for efficiently decomposing a linear code into its indecomposable summands.
In \cite{Slepian60}, Slepian presented a method to determine whether a code $\code$ is indecomposable: 
if its $k\times n$ generator matrix can be put into systematic form, and the columns in the identity block $\mI_k$ form a single connected component in the $k$-vertex graph defined by ``connected coordinates", then the code is indecomposable.
This method only indicates \emph{whether} a code is indecomposable.
The structural decomposition of linear codes shares a fundamental connection with the notion of connected components in matroid theory in the book \cite[Section 8.4]{Matroid_theory}. 
Building on this connection, Kashyap in \cite{Kashyap07} showed how algorithmic techniques for matroid decomposition can be explicitly translated to binary linear codes. 

We generalize these approaches to obtain an efficient method of decomposing any linear code over an arbitrary finite field $\F_q$.
While the foundational idea for this decomposition algorithm has been treated as folklore in several works, including \cite{Slepian60,Hoffman82}, we provide the explicit pseudocode below and formally analyze its complexity.

Our algorithm processes the RREF generator matrix of the input code to obtain a graph representation.
By finding all connected components in this graph, the indecomposable summands can then be identified.
For the code $\code$ generated by the input matrix, the algorithm outputs an explicit decomposition of $\code$ as a direct sum of some indecomposable codes $\code_1,\ldots,\code_m$ up to a permutation of coordinates, along with the permutation $\pi$ of the columns of the generator matrix that maps $\pi(\code) = \code_1 \oplus \ldots \oplus \code_m$.
This algorithm will be used as a subroutine in our reductions.

\begin{lem}
    \label{lem:alg-decompose}
    For any code $\code \subseteq \F_q^n$, \cref{alg:decompose} correctly decomposes it in $O(n^3\log(q))$ time and finds a set of generators $\set{\mG_1, \ldots, \mG_m}$ for indecomposable codes $\set{A_1,\ldots,A_m}$ and an associated permutation $\pi \in \cS_n$ satisfying $\pi(\code) = \bigoplus_{i=1}^m A_i$.
\end{lem}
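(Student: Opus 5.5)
The algorithm is implicit: compute $\mG_{\rref}$ by Gauss–Jordan elimination, build the support graph $\Gamma(\code)$ of \cref{constr:supp-graph}, find its connected components $X_1,\ldots,X_m$, and output $\mG_i$ as the submatrix of $\mG_{\rref}$ with rows whose support lies in $X_i$ and columns $X_i$. The permutation $\pi$ is any permutation that makes the blocks $X_1,\ldots,X_m$ contiguous. The proof has three parts: the output is a valid direct sum, each summand is indecomposable, and the runtime bound holds.

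\emph{Validity.} Every row of $\mG_{\rref}$ forms a clique in $\Gamma(\code)$, so its support lies inside a single component $X_i$. Grouping the rows by component therefore puts $\mG_{\rref}\mP_\pi$ in block-diagonal form, with $\mG_i$ as the $i$-th block. Each $\mG_i$ consists of rows of an RREF matrix restricted to $X_i$, so it has full row rank (it keeps its pivots) and no zero columns, assuming $\code$ has none as in the convention after \cref{def:direct-sum}. Hence $\pi(\code) = \bigoplus_i \code(\mG_i)$ by \cref{def:direct-sum}.

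\emph{Indecomposability.} This is the main step. Suppose $A_i = \code(\mG_i)$ were decomposable. Then its index set $X_i$ splits nontrivially as $Y \sqcup Z$ with $A_i = A_i|_Y \oplus A_i|_Z$, and by \cref{lem:exist_bridge} no bridging element exists. Apply \cref{lem:decomp-matrix-structure}(ii) to $A_i$. Its RREF matrix is exactly $\mG_i$, since the rows of $\mG_{\rref}$ supported in $X_i$ are already in RREF on those columns, and RREF is unique. So every row of $\mG_i$ has support entirely within $Y$ or entirely within $Z$. Then no edge of $\Gamma(\code)$ joins $Y$ to $Z$, because all edges come from rows. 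This contradicts the connectivity of the component $X_i$. Extra care is needed where \cref{lem:decomp-matrix-structure} is stated for a code equivalent to a direct sum: one should apply it to the sum given by the literal partition $Y,Z$, where the permutation is the identity, so the lemma applies directly.

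\emph{Runtime.} Gauss–Jordan elimination on a $k\times n$ matrix costs $O(k^2 n)$ field operations, each taking $O(\log q)$ bit-time up to polylog factors; with $k\le n$ this is within $O(n^3\log q)$. Building the graph costs $O(kn^2\log q)$, as noted after \cref{constr:supp-graph}. Rather than adding all pairwise edges, it is enough to connect each row's support as a path, which gives the same components. Breadth-first search on the components costs $O(n^2)$, and extracting the blocks and building $\pi$ costs $O(kn)$. The total is $O(n^3\log q)$.
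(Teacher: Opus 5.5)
Your proof is correct. It uses the same algorithm and the same cost accounting as the paper, but the correctness argument takes a different route.

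\emph{How the paper argues.} The paper simply asserts that the connected components of $\Gamma(\code)$ ``correspond exactly'' to the index sets of the indecomposable summands. It then invokes \cref{lem:decomp-matrix-structure}(ii), applied to that presupposed decomposition over the components, to get the block structure. It never actually shows that the extracted blocks $A_i$ are indecomposable, and its use of the lemma is circular, since the decomposition over the components is what is being established.

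\emph{How you argue.} You get the block structure directly from the observation that each row of $\mG_{\rref}$ is a clique in $\Gamma(\code)$. You then use \cref{lem:decomp-matrix-structure}(ii) where it is genuinely needed: on a hypothetical splitting $X_i = Y \sqcup Z$ of a single block, showing that no edge crosses it and so contradicting connectivity. Your route closes the gap in the paper. The correspondence between components and indecomposable summands becomes a consequence rather than an assumption.

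Two small points you handle correctly and should keep explicit:
\begin{itemize}
\item Identifying $\mG_i$ with the RREF of $A_i$ requires keeping the columns of $X_i$ in their original relative order. The algorithm's arbitrary ordering within a component is harmless, since indecomposability is permutation-invariant. In any case, the proof of \cref{lem:decomp-matrix-structure}(ii) uses, besides the splitting of the code across the partition, only that the pivot columns form an identity submatrix.
\item That lemma must be applied to a non-contiguous partition with the identity permutation, which its proof supports.
\end{itemize}

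On runtime you match the paper. Your ``up to polylog factors'' hedge on field arithmetic means the bound is literally $O(n^3\log q)$ only under the paper's convention that each field operation costs $O(\log q)$.
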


\begin{algorithm}[th]
    \SetAlgoLined
    \DontPrintSemicolon
    \SetKwInOut{Input}{Input}
    \SetKwInOut{Output}{Output}
    \caption{Decomposition of a Linear Code}
    \Input{A generator matrix $\mG \in \F_q^{k \times n}$ of a $[n,k]$-code $\code$.}
    \Output{A set generators $\set{\mG_1, \ldots, \mG_m}$ for indecomposable codes $\set{A_1, \ldots, A_m}$ and a coordinate permutation $\pi \in \cS_n$ such that $\pi(\code) = \bigoplus_{i=1}^m A_i$.}

    \BlankLine
    % \tcp{Step 1: Reduced Row Echelon Form}
    1. Compute the RREF matrix $\mG_{\rref}$ from $\mG$. \;
    \BlankLine
    % \tcp{Step 2: Support Graph Construction}
    2. Construct the support graph $\Gamma(\code) = ([n],E)$ from $\mG_{\rref}$. \;
    \BlankLine
    % \tcp{Step 3: Connected Components}
    3. Run Breadth-First Search on $\Gamma(\code)$ to identify its $m$ connected components $X_1,\ldots,X_m$
    Fix any ordering of the elements in each component to get the sequences $Z_1,\ldots,Z_m$. \;
    \BlankLine
    % \tcp{Step 4: Construction of Permutation}
    4. Construct a permutation $\pi \in \cS_n$ that reorders the indices $[n]$ by mapping them to the concatenated sequence $Z_1,\ldots,Z_m$.\;
    \BlankLine
    % \tcp{Step 5: Generator Matrices of Summands}
    5. Define $\mG_1, \ldots, \mG_m$ to be the submatrices of $\mG_{\rref}$ corresponding to the non-zero rows and columns restricted to $X_1, \ldots, X_m$, respectively.\;
    \BlankLine
    % \tcp{Step 6: Extract Summands}
    % Define $A_1,\ldots, A_m$ to be the codes generated by $\mG_1, \ldots, \mG_m$, respectively.\;
    % \BlankLine
    \Return{$\set{\mG_1, \ldots, \mG_m}$ and $\pi$}
    \label{alg:decompose}
\end{algorithm}

\begin{proof}
    Let $\code \subseteq \F_q^n$ be the input code of dimension $k$.
    \smallskip

    \emph{(Correctness)}
    By \cref{constr:supp-graph}, the vertices of the support graph $\Gamma(\code)$ represent the coordinate indices of the code $\code$, and the connected components of $\Gamma(\code)$ correspond exactly to the disjoint index sets of the indecomposable summands in the decomposition of $\code$.
    Suppose there are $m$ connected components and let $X_1,\ldots,X_m \subset [n]$ denote the elements in these components.
    These subsets form a partition of the index set $X=[n]$ of $\code$.
    Running Breadth-First Search on the graph $\Gamma(\code)$ will explore all $n$ vertices and identify these $m$ connected components.
    Then, we fix an (arbitrary) ordering for the elements in each component $X_i$.
    Denote the resulting sequence by $Z_i = \set{z_{i_1}, z_{i_2}, \ldots, z_{i_{n_i}}}$.
    Then, we define the permutation $\pi \in \cS_n$ by 
    \begin{equation*}
        \pi(z_{i_j}) \defn j + \sum_{\ell=1}^{i-1} n_\ell \quad \text{ for } 1 \le j \le n_i \text{ and } i \in [m].
    \end{equation*}
    This permutation shifts the ordered indices of $X_1$ to the interval $[1, n_1]$, those of $X_2$ to $[n_1+1, n_1+n_2]$, and so on.
    By \cref{lem:decomp-matrix-structure}, every row vector $\vecr$ of $\mG_{\rref}$ has its support contained within exactly one of the original index sets $X_i$. 
    Hence, after applying $\pi$ to the index set of $\code$, the support of $\vecr$ is entirely contained within $\pi(X_i)$. 
    By construction of $\pi$, the columns of $\mG_{\rref}$ are reordered into disjoint contiguous blocks with index sets $X_1,\ldots, X_m$.
    Thus, the algorithm can identify the block submatrices $\mG_1,\ldots,\mG_m$ of $\mG_{\rref}$ that are supported on $X_1,\ldots,X_m$, respectively.
    By construction and \cref{lem:decomp-matrix-structure}, each $\mG_i$ is a $k_i \times n_i$ matrix and generates a $[n_i,k_i]$-code $A_i$.
    Therefore, for this permutation $\pi$ we have $\pi(\code) = \bigoplus_{i=1}^m A_i$ and the algorithm correctly finds $\pi$ and the corresponding the decomposition $\bigoplus_{i=1}^m A_i$.
    \medskip

    \emph{(Complexity)} 
    We analyze the runtime for each step in the algorithm.
    Each field element in $\F_q$ requires $\log(q)$ bits to represent and $O(\log(q))$ bit operations to read, copy, or compare.
    Computing the RREF matrix $\mG_{\rref}$ from the $k\times n$ matrix $\mG$ can be done using Gaussian elimination in $O(k^2 n \log(q))$ time.
    Constructing the support graph $\Gamma(\code)$ requires iterating through the $k$ rows of $\mG_{\rref}$ and the entries of each row.
    Each row has at most $n$ non-zero entries in $\F_q^*$, so at most $n^2$ edges are created per row. 
    Across all $k$ rows, this takes a total of $O(k n^2 \log(q))$ time.
    Finding the connected components in the graph $\Gamma(\code)$ with $\abs{V} = n$ vertices and $\abs{E} \leq n^2$ edges can be done using Breadth-First Search, which takes $O(n + |E|)= O(n^2)$ time.
    Constructing the permutation $\pi$ involves concatenating $m \leq n$ disjoint sets, which takes $O(n)$ time.
    Extracting the blocks $\mG_1,\ldots,\mG_m$ from $\mG_{\rref}$ corresponding to the connected components $X_1,\ldots,X_m$ requires memory access operations bounded by the matrix size, and takes $O(kn \log(q))$ time.
    % Finally, defining the codes $A_1,\ldots,A_m$ in Step 6 from the generator matrices $\mG_1,\ldots,\mG_m$ takes $O(m) = O(n)$ time.
    % 
    Since $k \le n$ and $m\leq n$, the total runtime is bounded by $O(n^3 \log(q))$.
    % $O(k^2 n) + O(k n^2) + O(n^2) + O(n) + O(kn) + O(n) = O(n^3)$.
\end{proof}

\subsection{Setwise Stabilizer Chains}
\label{subsec:setwise-stab-chain}

Now we introduce an algebraic structure called a \emph{setwise stabilizer chain} and prove some key properties.
This is adapted from the notion of a \emph{pointwise} stabilizer chain (see \cref{def:stabilizer-chain}), but is specific to the automorphism group of a code and is defined by the redundancy sets of identical columns in a generator matrix of the code.
Similar notions have been used in \cite{Mathon79,Hoffman82} for graphs.

Setwise stabilizer chains are used in our reductions in \cref{subsec:ACOUNT->PCE,subsec:AGEN->PCE} in the context of iteratively stabilizing columns of a generator matrix of a code, with the goal of finding or counting its automorphisms.
The columns of the generator matrix are partitioned into classes of identical columns to efficiently handle trivial permutations without resorting to oracle calls in these reductions.

\begin{definition}[Setwise Stabilizer Chain]
    \label{def:setwise-stab-chain}
    For any matrix $\mG \in \F_q^{k \times n}$, let $G \defn \Aut(\code(\mG))$ be the permutation automorphism group of the code generated by $\mG$.
    Denote the column vectors of $\mG$ by $\set{\vecg_1, \ldots, \vecg_n}$ and let $B' = \set{t_1, \ldots, t_s} \subseteq [n]$ be the sequence of indices of unique column vectors in $\mG$ in order of first appearance.
    This sequence $B'$ defines a partition of the index set $[n]$ into disjoint redundancy classes $E_1, \ldots, E_s$, where $E_i \defn \set{c \in [n] \mid \vecg_c = \vecg_{t_i}}$, and let $m_i \defn \abs{E_i}$.
    The \emph{(permutation) setwise stabilizer chain} is the chain of subgroups
    \begin{equation*}
        G =\vcentcolon G^{(E_0)} \geq G^{(E_1)} \geq \dots \geq G^{(E_s)},
    \end{equation*}
    where $G^{(E_i)} \defn \set{\pi \in G^{(E_{i-1})} \mid \pi(E_i) = E_i}$ is the subgroup that stabilizes the set $E_i$ within the preceding subgroup, for each $i\in [s]$.
    \medskip

    Similarly, let $M \defn \MAut(\code(\mG))$ be the monomial automorphism group of the code generated by $\mG$.
    Define the redundancy classes $E_1, \ldots, E_s$ by
    $E_i \defn \set{c \in [n] \mid \vecg_c = \alpha \vecg_{t_i} \text{ for some } \alpha \in \F_q^*}$.
    The corresponding \emph{monomial setwise stabilizer chain} is the chain of subgroups
    \begin{equation*}
        M =\vcentcolon M^{(E_0)} \geq M^{(E_1)} \geq \dots \geq M^{(E_s)},
    \end{equation*}
    where $M^{(E_i)} \defn \set{\mu \in M^{(E_{i-1})} \mid \pi_\mu(E_i) = E_i}$ is the subgroup of monomial automorphisms whose associated permutation $\pi_\mu \in \cS_n$ stabilizes the set $E_i$ within the preceding subgroup, for each $i\in [s]$.
\end{definition}

Note that the identity map is not the only permutation automorphism in $G = \Aut(\code(\mG))$ that stabilizes all indices in the set $B'$, so this $B'$ is not a base as defined in \cref{def:stabilizer-chain}.
Likewise, the smallest subgroup in the chain, $G^{(E_s)}$, does not necessarily contain only the identity automorphism, as it must for a pointwise stabilizer chain.
Instead, $G^{(E_s)}$ is the subgroup generated by column swaps, which is isomorphic to $\prod_{i=1}^s \Sym(E_i)$, the product of the symmetric groups of the equivalence classes.
\medskip

For the monomial automorphism group $M$, the smallest subgroup $M^{(E_s)}$ stabilizes $E_1,\ldots,E_s$. 
We refer to these sets as \emph{projective classes}, following the terminology in \cite{BMPW26}.
If the code $\code(\mG)$ can be decomposed into $c$ indecomposable summands, the subgroup $M^{(E_s)}$ not only contains automorphisms that internally swap columns within each class $E_i$, but also automorphisms that multiply each of the $c$ indecomposable summands by the same scalar.
We will show in \cref{lem:mon-setwise-stab-chain-order} that $M^{(E_s)}$ is isomorphic to 
$(\F_q^*)^c \times \prod_{i=1}^s \Sym(E_i)$.
\medskip

For any code $\code$ and setwise stabilizer chain of $G = \Aut(\code)$, the number of automorphisms in the group can be found be iterative applications of the Orbit-Stabilizer theorem to consecutive pairs of subgroups along the chain.
This is formalized in the following lemma.

\begin{lem}[number of permutation automorphisms]
    \label{lem:setwise-stab-chain-order}
    For any matrix $\mG \in \F_q^{k \times n}$ defining a code $\code = \code(\mG)$, its automorphism group $G \defn \Aut(\code)$, and a sequence $B' = \set{t_1, \ldots, t_s} \subseteq [n]$ of indices of unique column vectors in $\mG$ in order of first appearance, the associated setwise stabilizer chain
    $G \geq G^{(E_1)} \geq \dots \geq G^{(E_s)}$ has cardinality
    \[
        \abs{G} = \prod_{i=1}^{s} \abs[\big]{G^{(E_{i-1})} : G^{(E_i)}} \cdot \parens{m_i!} .
    \]
\end{lem}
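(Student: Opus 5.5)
The proof telescopes along the chain, exactly as in the proof of \cref{lem:stab-chain-order}, and then identifies the bottom subgroup $G^{(E_s)}$ explicitly. Since $G^{(E_0)} \geq G^{(E_1)} \geq \dots \geq G^{(E_s)}$ is a chain of subgroups of the finite group $G$, Lagrange's theorem gives $\abs{G^{(E_{i-1})}} = \abs{G^{(E_{i-1})} : G^{(E_i)}} \cdot \abs{G^{(E_i)}}$ for each $i \in [s]$. Multiplying these identities and cancelling yields
\[
    \abs{G} = \abs{G^{(E_s)}} \cdot \prod_{i=1}^{s} \abs[\big]{G^{(E_{i-1})} : G^{(E_i)}}.
\]
So the claimed formula reduces to showing $\abs{G^{(E_s)}} = \prod_{i=1}^s m_i!$. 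Concretely, the goal is to prove that $G^{(E_s)}$ is exactly the group $\set{\pi \in \cS_n \mid \pi(E_i) = E_i \text{ for all } i \in [s]} \cong \prod_{i=1}^s \Sym(E_i)$.

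\textbf{Inclusion $\subseteq$.} Unwinding the recursive definition, $\pi \in G^{(E_s)}$ holds if and only if $\pi \in G$ and $\pi(E_i) = E_i$ for every $i \in [s]$. In particular $G^{(E_s)}$ is contained in the set-stabilizer of the partition $E_1 \sqcup \dots \sqcup E_s$.

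\textbf{Inclusion $\supseteq$.} Take any permutation $\pi \in \cS_n$ with $\pi(E_i) = E_i$ for all $i$, and let $\mP$ be its permutation matrix. The matrix $\mG\mP$ has in position $\pi(c)$ the column $\vecg_c$ (or in position $\pi^{-1}(c)$, depending on the convention; either works). Because $\pi$ maps each class $E_i$ onto itself and all columns indexed by $E_i$ equal $\vecg_{t_i}$, every column of $\mG\mP$ equals the corresponding column of $\mG$. Hence $\mG\mP = \mG$, so with $\mS = \mI_k$ we get $\pi \in \Aut(\code) = G$. Since $\pi$ also fixes each $E_i$ setwise, induction on $i$ places $\pi$ in every $G^{(E_i)}$, and thus in $G^{(E_s)}$.

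\textbf{Counting.} Because the $E_i$ partition $[n]$, the permutations preserving every class are exactly the products of independent permutations of each class. This group is $\prod_{i=1}^s \Sym(E_i)$, of order $\prod_{i=1}^s m_i!$. Substituting into the telescoped identity and distributing the factors $m_i!$ across the product gives the stated formula.

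The only step with real content is the inclusion $\supseteq$: one must check that an arbitrary class-preserving permutation is a genuine automorphism of $\code$, with no oracle needed. This rests on the columns within a redundancy class being \emph{exactly} equal rather than merely proportional. That is why the permutation case gives the clean factor $m_i!$, whereas the monomial analogue picks up the extra scalar factor $(\F_q^*)^c$ mentioned above.
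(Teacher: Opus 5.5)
Your proof is correct and follows essentially the same route as the paper: you telescope Lagrange's theorem along the chain, then identify $G^{(E_s)}$ with $\prod_{i=1}^s \Sym(E_i)$ because permuting identical columns leaves $\mG$ unchanged. Your two-inclusion argument, in particular the check via $\mG\mP = \mG$ that every class-preserving permutation lies in $G$, states explicitly the bijection that the paper asserts more informally.
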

\vspace{-7pt}

\begin{proof}
    By applying Lagrange's Theorem to consecutive subgroups along the chain, we obtain
    $\abs{G} = 
    \abs[\big]{G^{(E_s)}} \cdot \prod_{i=1}^{s} \abs[\big]{G^{(E_{i-1})} : G^{(E_i)}}.$
    To prove the claim, it suffices to show that $\abs[\big]{G^{(E_s)}} = \prod_{i=1}^{s} \parens{m_i!}$.
    By definition, the subgroup $G^{(E_s)}$ stabilizes each $E_i$ setwise, that is, its elements map indices from $E_i$ into $E_i$, for every $i\in [s]$.
    Since the sets $E_1, \ldots, E_s$ form a partition of the index set $[n]$, every element $\pi \in G^{(E_s)}$ permutes the indices $[n]$ by independently permuting the elements within each $E_i$.
    By definition, each equivalence class $E_i$ contains the indices of identical columns in $\mG$.
    Because permuting identical columns leaves $\mG$ invariant, every possible permutation of the indices in $E_i$ yields a trivial automorphism of $\code$.
    Since $E_1, \ldots, E_s$ partitions the domain $[n]$ of the automorphisms in $G^{(E_s)}$, there is a bijection between the elements of $G^{(E_s)}$ and the direct product of the symmetric groups of $E_1, \ldots, E_s$, i.e.
    $G^{(E_s)} \cong \Sym(E_1) \times \ldots \times \Sym(E_s)$.
    For each $i \in [s]$, there are $m_i!$ possible permutations for the $m_i$ elements of $E_i$.
    Therefore, we obtain $\abs{G^{(E_s)}} = \prod_{i=1}^{s} \parens{m_i!}$.
\end{proof}

\begin{lem}
    \label{lem:indecomposable-scaling}
    Let $\code \subseteq \F_q^n$ be an indecomposable code. 
    If $\mu = (\vecv_\mu, \id) \in (\F_q^*)^n \rtimes \cS_n$ is a monomial automorphism that only scales $\code$, then $\vecv_\mu = \parens{\lambda,\ldots,\lambda}$ for some $\lambda \in \F_q^*$. Consequently, there are exactly $q-1$ automorphisms in $\MAut(\code)$ that scale without permuting codeword indices.
\end{lem}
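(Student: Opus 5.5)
The plan is to view the scaling $\mu = (\vecv_\mu,\id)$ as the diagonal linear operator $\mD \defn \diag(v_1,\ldots,v_n)$ on $\F_q^n$, and to show that if $\vecv_\mu$ takes two or more distinct values, then $\code$ splits across a partition with no bridging element. That contradicts \cref{lem:exist_bridge}. For each $\lambda \in \F_q^*$ let $Y_\lambda \defn \set{i \in [n] \mid v_i = \lambda}$. Suppose for contradiction that at least two of these sets are non-empty. Fix one value $\lambda_0$ and take the non-trivial partition $Y \defn Y_{\lambda_0}$, $Z \defn [n]\setminus Y$.

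The key step is a standard linear-algebra fact: a subspace invariant under a diagonalizable operator is the direct sum of its intersections with the eigenspaces. Concretely, let $\Lambda$ be the set of distinct values occurring in $\vecv_\mu$. I would use the Lagrange interpolation polynomial
\[
p(x) \defn \prod_{\lambda \in \Lambda\setminus\set{\lambda_0}} \frac{x-\lambda}{\lambda_0-\lambda}.
\]
Since $\mu \in \MAut(\code)$, we have $\code\mD = \code$, so $\code$ is closed under right multiplication by any polynomial in $\mD$. Now $p(\mD)$ is the diagonal matrix with $1$ on the indices in $Y$ and $0$ elsewhere. Hence for every $\vecc \in \code$ the vector $\vecu \defn \vecc\, p(\mD)$ lies in $\code$ and equals the restriction of $\vecc$ to $Y$. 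Likewise $\vecv \defn \vecc - \vecu \in \code$ is supported in $Z$. So every codeword decomposes as $\vecu + \vecv$ with $\supp(\vecu)\subseteq Y$ and $\supp(\vecv)\subseteq Z$. Then no bridging element exists for $(Y,Z)$, and \cref{lem:exist_bridge} says $\code$ is decomposable, a contradiction. Therefore $\vecv_\mu = (\lambda,\ldots,\lambda)$.

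For the count, every constant vector $(\lambda,\ldots,\lambda)$ with $\lambda\in\F_q^*$ gives an automorphism, since $\lambda\code = \code$ by linearity (as in the remark after the definition of \MAPART). Combined with the first part, there are exactly $q-1$ pure scalings in $\MAut(\code)$.

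The main obstacle is the degenerate setting where \cref{lem:exist_bridge} must apply cleanly. This means checking that the partition $(Y,Z)$ is genuinely non-trivial, which it is because two values occur. It also means handling the convention that codes have no all-zero coordinates. With an all-zero coordinate and $n\ge 2$, the code is already decomposable, so the indecomposability hypothesis covers this case. The case $n=1$ is trivial, since a single entry is automatically constant.
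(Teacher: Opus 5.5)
Your proof is correct, but it takes a different route from the paper's.

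The paper works with the RREF generator matrix $\mG$. Since $\code(\mG\mD)=\code$, uniqueness of the RREF gives $\rref(\mG\mD)=\mG$. Renormalizing each row by its pivot scalar then yields $g_{i,j}=g_{i,j}v_jv_{p_i}^{-1}$, so $v_j=v_{p_i}$ whenever $g_{i,j}\neq 0$. Equality of all the $v_j$ is then propagated along paths in the support graph $\Gamma(\code)$, which is connected because $\code$ is indecomposable.

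You instead argue basis-free. Invariance under $\mD$ gives invariance under $p(\mD)$, and the Lagrange polynomial makes $p(\mD)$ the coordinate projection onto a level set $Y$ of $\vecv_\mu$. Every codeword therefore splits across $(Y,Z)$ inside $\code$, so there is no bridging element, contradicting \cref{lem:exist_bridge}.

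What your version buys:
\begin{itemize}
\item It relies only on \cref{lem:exist_bridge}, which the paper proves. The paper's route instead uses the correspondence between indecomposability and connectivity of $\Gamma(\code)$, which the paper only asserts when justifying its decomposition algorithm.
\item It avoids some loose steps in the paper's write-up, such as the claim that every RREF row has two non-zero entries.
\item It proves something more general: for any code, a pure scaling automorphism forces $\code=\bigoplus_{\lambda}\set{\vecc\in\code \mid \supp(\vecc)\subseteq Y_\lambda}$.
\end{itemize}
The paper's route, in exchange, is explicit and row-by-row. It ties the scalar constraints directly to the same support graph that drives its decomposition procedure.

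Your caveat about zero coordinates is well placed. Both arguments need zero-dimensional summands to count in the definition of decomposability, or the paper's standing assumption that there are no all-zero columns. Otherwise $\set{(c,0)\mid c\in\F_q}$ with $q\ge 3$ would be indecomposable yet admit the non-constant scaling $(1,\lambda)$ with $\lambda\neq 1$.
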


\begin{proof}
    Let $\mG \defn \mG_{\rref} \in \F_q^{k\times n}$ be the (unique) generator matrix of $\code$ in RREF and denote its entries by $g_{i,j}$ for every row $i\in [k]$ and column $j\in [n]$.
    Consider the monomial automorphism $\mu = (\vecv_\mu, \id) \in (\F_q^*)^n \rtimes \cS_n$, where the permutation is the identity map.
    By definition, applying $\mu$ to the generator matrix $\mG$ scales each $j$-th column of $\mG$ by the corresponding $j$-th entry $v_j$ in $\vecv_\mu$.
    Since $\mu \in \MAut(\code)$, applying $\mu$ to $\mG$ must yield a matrix $\mH \defn \mu(\mG)$ that generates the same code $\code = \code(\mH)$.
    We show that this can only happen if the entries of $\vecv_\mu$ are the same.

    Let $\mD \defn \diag(v_1\dots v_n)$ be the matrix whose diagonal entries are the coordinates of $\vecv_\mu$.
    By definition, $\mH = \mG \mD$.
    Hence, by definition of RREF, we have $\rref(\mH)=\rref(\mG\mD)= \mG$.
    % We perform Gaussian elimination on $\mH$ to find the scalars needed to obtain $\rref(\mH)$.
    Without loss of generality, we can partition $\mH$ into $\mH = \bracks{\mH_1 \mid \mH_2}$, where $\mH_1$ is a full-rank $k\times k$ submatrix and $\mH_2$ contains the remaining $n-k$ columns
    (otherwise the same argument works for the submatrix containing the pivot columns of $\mH$).
    We multiply $\mH$ by the inverse matrix $\mH_1^{-1}$ to obtain $\rref(\mH) = \mH_1^{-1}\mH = \bracks{\mI_k \mid \mH_1^{-1} \mH_2}$.
    Every $i$-th row of $\mH$ has pivot column at some index $p_i \in [n]$.
    By definition of $\mH$, the pivot entry $\mH_{i,p_i}$ is $v_{p_i}$.
    To ensure that the pivot entry is $1$ in the RREF matrix $\rref(\mH)$, every entry in this row is multiplied by the inverse $v_{p_i}^{-1} \in \F_q^*$.
    After this scaling, any non-pivot entry in the $i$-th row and $j$-th column of $\rref(\mH)$ becomes $g_{i,j} \cdot v_j \cdot v_{p_i}^{-1}$.
    By the equality $\rref(\mH)=\mG$, we have $g_{i,j} = g_{i,j} \cdot v_j \cdot v_{p_i}^{-1}$.
    Hence, for every non-zero entry $g_{i,j} \neq 0$, this implies that $v_j = v_{p_i}$.

    Consider the support graph $\Gamma(\code)$ of the code $\code$.
    By \cref{constr:supp-graph} and the hypothesis that $\code$ is indecomposable, $\Gamma(\code)$ is a connected graph, so there is a path between every pair of vertices.
    % By hypothesis, $\code$ is an indecomposable code, so its generator matrix $\mG$ cannot be expressed as a block-diagonal matrix of other matrices, as in \cref{def:direct-sum}.
    By construction, this can only be the case if there are two non-zero entries in every row of $\mG$.
    A path in $\Gamma(\code)$ between any pair of vertices $j,j'\in [n]$ corresponds to a sequence of rows in $\mG$ with indices $\set{i_1,i_2,\ldots,i_\ell}$ in which the $i_1$-th row has non-zero entries in column $j$ and some other column $j_1$, the $i_2$-th row has non-zero entries in column $j_1$ and some other column $j_2$, and so on until the $i_\ell$-th row which has non-zero entries in column $j_{\ell-1}$ and column $j'$.

    By the condition above, in every $i$-th row, both the pivot entry $g_{i,p_i} \neq 0$ and all other entries $g_{i,j} \neq 0$ are multiplied by the same scalar $v_j^{-1}$.
    By transitivity and the existence of a path between every pair of vertices, we obtain that the scalars are all the same, so that $v_1 = v_2 = \ldots = v_n$.
    Therefore, $\vecv_\mu = \parens{\lambda,\ldots,\lambda}$ for some $\lambda \in \F_q^*$.
    Since there are exactly $q-1$ choices for $\lambda$, there are $q-1$ scaling-only automorphisms in $\MAut(\code)$.
\end{proof}

\begin{lem}[number of monomial automorphisms]
    \label{lem:mon-setwise-stab-chain-order}
    For any matrix $\mG \in \F_q^{k \times n}$ defining a code $\code = \code(\mG)$ that is decomposable into $c$ indecomposable codes, its monomial automorphism group $M \defn \MAut(\code)$, and a sequence $B' = \set{t_1, \ldots, t_s} \subseteq [n]$ of indices of unique projective column representatives in $\mG$ in order of first appearance, the associated monomial setwise stabilizer chain
    $M \geq M^{(E_1)} \geq \dots \geq M^{(E_s)}$ has cardinality
    \[
        \abs{M} = (q-1)^c \cdot \prod_{i=1}^{s} \abs[\big]{M^{(E_{i-1})} : M^{(E_i)}} \cdot \parens{m_i!},
    \]
    where $m_i \defn \abs{E_i}$ for all $i\in [s]$.
\end{lem}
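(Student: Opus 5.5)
The plan mirrors the proof of \cref{lem:setwise-stab-chain-order}. Applying Lagrange's Theorem along the monomial setwise stabilizer chain $M \geq M^{(E_1)} \geq \dots \geq M^{(E_s)}$ gives
\[
    \abs{M} = \abs[\big]{M^{(E_s)}} \cdot \prod_{i=1}^{s} \abs[\big]{M^{(E_{i-1})} : M^{(E_i)}} .
\]
So it suffices to show $\abs{M^{(E_s)}} = (q-1)^c \prod_{i=1}^s (m_i!)$. The route is to establish the isomorphism $M^{(E_s)} \cong (\F_q^*)^c \times \prod_{i=1}^s \Sym(E_i)$ announced before the statement, by exhibiting a bijection.

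\textbf{Step 1: column swaps are automorphisms.} Every element $\mu=(\vecv_\mu,\pi_\mu)\in M^{(E_s)}$ has $\pi_\mu(E_i)=E_i$ for all $i$. Conversely, every $\pi\in\prod_i \Sym(E_i)$ extends to an element of $M^{(E_s)}$. Within a projective class, column $c$ equals $\alpha_c \vecg_{t_i}$. Take the monomial map that sends column $c$ to position $\pi(c)$ and rescales it by $\alpha_{\pi(c)}\alpha_c^{-1}$. This fixes $\mG$ exactly, so it lies in $\MAut(\code)$ and stabilizes every $E_i$.

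\textbf{Step 2: the kernel is scalings.} Consider the homomorphism $\Phi: M^{(E_s)} \to \prod_i \Sym(E_i)$ given by $\mu\mapsto\pi_\mu$. By Step 1 it is surjective. Its kernel consists of the scaling-only automorphisms $(\vecv,\id)\in\MAut(\code)$. Hence $\abs{M^{(E_s)}} = \abs{\ker\Phi}\cdot\prod_i m_i!$.

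\textbf{Step 3: count the kernel.} Decompose $\code$ via \cref{alg:decompose} and \cref{lem:alg-decompose}. This gives a permutation $\sigma$ with $\sigma(\code)=\bigoplus_{j=1}^c \cA_j$, where each $\cA_j$ is indecomposable with support $X_j$. Conjugating by $\sigma$ does not change the number of scaling-only automorphisms, so we may assume $\code$ is the direct sum itself.

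A scaling $(\vecv,\id)$ preserves $\code$ if and only if, for each $j$, its restriction $\vecv|_{X_j}$ preserves $\cA_j$. The forward direction holds because projection onto $X_j$ commutes with coordinatewise scaling and $\code|_{X_j}=\cA_j$. The reverse direction follows from the direct sum structure of \cref{def:direct-sum}. By \cref{lem:indecomposable-scaling}, each $\vecv|_{X_j}$ is constant, equal to some $\lambda_j\in\F_q^*$, and every constant vector works. Hence $\abs{\ker\Phi}=(q-1)^c$, which completes the count.

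\textbf{Main obstacle.} The delicate point is Step 3's reduction to the summands. One must justify rigorously that a scaling preserving $\code$ preserves each projection $\code|_{X_j}$ setwise as a subspace. One must also check that the count is unaffected by passing to the permuted code $\sigma(\code)$: conjugation by $\sigma$ maps scalings to scalings and is a bijection $\MAut(\code)\to\MAut(\sigma(\code))$.

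A secondary subtlety is a zero column. Such a column would form its own summand of length one and would have to be counted in $c$. The paper's standing convention of no all-zero columns, following \cite{CSV25}, avoids this case, and I would state that assumption explicitly.
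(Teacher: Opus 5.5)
Your proposal is correct and follows the paper's proof. Lagrange reduces the claim to $\abs{M^{(E_s)}} = (q-1)^c \prod_{i=1}^s (m_i!)$, which follows from the projection $(\vecv_\mu,\pi_\mu)\mapsto \pi_\mu$ onto $\prod_{i=1}^s \Sym(E_i)$, whose kernel of pure scalings has order $(q-1)^c$ by \cref{lem:indecomposable-scaling} applied to each summand. You also make explicit several points the paper leaves implicit: the rescaled column swaps that give surjectivity, the restriction of a scaling to each summand (including conjugation by the decomposing permutation), and the no-zero-column convention behind the count $c$.
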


\begin{proof}
    By applying Lagrange's Theorem to consecutive subgroups along the chain, we obtain
    $\abs{M} = \abs[\big]{M^{(E_s)}} \cdot \prod_{i=1}^{s} \abs[\big]{M^{(E_{i-1})} : M^{(E_i)}}$.
    To prove the claim, it suffices to show that $\abs[\big]{M^{(E_s)}} = (q-1)^c \cdot \prod_{i=1}^{s} \parens{m_i!}$.

    By definition, the subgroup $M^{(E_s)}$ stabilizes each projective column class $E_1,\ldots,E_s$ setwise.
    Any monomial automorphism $\mu \in M^{(E_s)} \subset (F_q^*)^n \rtimes \cS_n$ consists of a permutation automorphism $\pi_\mu \in \cS_n$ and a scaling vector $\vecv_\mu \in (\F_q^*)^n$.
    We count the number of possible $\pi_\mu$ and $\vecv_\mu$.

    Consider the homomorphism $\rho \vcentcolon M^{(E_s)} \to \prod_{i=1}^s \Sym(E_i)$ defined by $\rho(\vecv_\mu, \pi_\mu) = \pi_\mu$; this ignores any scaling.
    This map $\rho$ is well-defined because every element in $M^{(E_s)}$ stabilizes $E_1,\ldots,E_s$ setwise.
    $\rho$ is also surjective.
    
    The kernel of $\rho$ consists of all scaling-only automorphisms of the form $(\vecv_\mu, \id) \in M^{(E_s)}$. 
    By hypothesis, $\code \cong_\cM \code_1 \oplus \ldots \oplus \code_c$ for some indecomposable codes $\code_1,\ldots,\code_c$.
    By \cref{lem:indecomposable-scaling}, any scaling-only automorphism in $\MAut(\code)$ must scale each indecomposable summand $\code_i$ by a scalar $\lambda_i \in \F_q^*$.
    Since there are $c$ summands, and the scalar for each summand can be chosen independently, there are $(q-1)^c$ such scaling-only automorphisms.
    Hence, $\abs{\ker (\rho)} = (q-1)^c$.

    Any permutation $\pi_\mu \in \prod_{i=1}^s \Sym(E_i)$ can only permute columns within the same projective class. 
    By definition, the columns in each $E_i$ are all scalar multiples of a single representative vector.
    We can construct a corresponding canonical scaling vector $\vecv_\mu$ that accounts for the scaling of each column in $E_i$ relative to its representative vector, for every class $E_i$.
    For every such canonical vector, there are $(m_i)!$ possible permutations for each class $E_i$.
    Hence, $\abs{\Im(\rho)} = \prod_{i=1}^s (m_i!)$.

    Then, by the First Isomorphism Theorem, we obtain
    $\abs[\big]{M^{(E_s)}} = \abs{\ker \rho} \cdot \abs{\Im(\rho)} = (q-1)^c \cdot \prod_{i=1}^s (m_i!)$.
\end{proof}

\begin{definition}[Strong Generating Set for Setwise Stabilizer Chain]
    \label{def:strong-gen-set-setwise}
    For any $k, n\in \N$ with $k< n$, and any matrix $\mG \in \F_q^{k \times n}$ defining a code $\code = \code(\mG)$, its automorphism group $G' \defn \Aut(\code)$, and a sequence $B' = \set{t_1, \ldots, t_s} \subseteq [n]$ of indices of unique column vectors in $\mG$ in order of first appearance, let $G' \geq G^{(E_1)} \geq \dots \geq G^{(E_s)}$ be the associated setwise stabilizer chain.
   A \emph{setwise strong generating set} for $G'$ with respect to $B'$ is a set $S'$ that generates $\inner{S'} = G'$ and satisfies the property $\inner{S' \cap G^{(E_i)}} = G^{(E_i)}$ for all $i\in [s]$.
\end{definition}

\begin{remark}
    \label{rem:setwise-SGS}
     For any matrix $\mG \in \F_q^{k \times n}$, let $G \defn \Aut(\code(\mG))$ and $B' = \set{t_1, \ldots, t_s} \subseteq [n]$ be the sequence of indices of unique column vectors in $\mG$ in order of first appearance.
     A strong generating set $S$ for $G$ with respect to base $B=[n]$ can be obtained from any strong generating set $S'$ for the setwise stabilizer chain $G = G^{(E_0)} \geq G^{(E_1)} \geq \dots \geq G^{(E_s)}$ defined by $B'$ (see \cite[Section 4.2]{Seress03}).
\end{remark}

\begin{lem}
    \label{lem:setwise-gen}
    For any matrix $\mG \in \F_q^{k \times n}$, let $G \defn \Aut(\code(\mG))$ and $B' = \set{t_1, \ldots, t_s} \subseteq [n]$ be the sequence of indices of unique column vectors in $\mG$ in order of first appearance.
    Let $G = G^{(E_0)} \geq G^{(E_1)} \geq \dots \geq G^{(E_s)}$ be the setwise stabilizer chain defined by $B'$. 
    Suppose that $S_{E_s}$ is a generating set for $G^{(E_s)}$, and $R_i$ is a set of left coset representatives for the quotient space $G^{(E_{i-1})}/G^{(E_i)}$, for each $i \in [s]$. 
    Then, the union of these sets generates the full group group
    \[
        G = \inner[\Big]{ S_{E_s} \cup \bigcup_{i=1}^s R_i }.
    \]
\end{lem}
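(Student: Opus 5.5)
We will argue by downward induction along the setwise stabilizer chain, using \cref{lem:subgroup-gener} at each step. The claim to establish is that for every $j \in \set{0,1,\ldots,s}$,
\[
    G^{(E_j)} = \inner[\Big]{ S_{E_s} \cup \bigcup_{i=j+1}^{s} R_i }.
\]
The case $j = 0$ is exactly the statement, since $G^{(E_0)} = G$.

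The base case $j = s$ is immediate, because $S_{E_s}$ generates $G^{(E_s)}$ by hypothesis and the union of the $R_i$ is empty.

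For the inductive step, assume the claim holds for $j$, so that $S_j \defn S_{E_s} \cup \bigcup_{i=j+1}^{s} R_i$ generates $G^{(E_j)}$. By \cref{def:setwise-stab-chain}, $G^{(E_j)} \leq G^{(E_{j-1})}$ is a subgroup. We then apply \cref{lem:subgroup-gener} with $H = G^{(E_j)}$, $G = G^{(E_{j-1})}$, generating set $S_H = S_j$, and coset representatives $R_{G/H} = R_j$. This gives $G^{(E_{j-1})} = \inner{S_j \cup R_j}$, which is precisely the claim for $j-1$.

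To be safe, I would include the one-line justification of \cref{lem:subgroup-gener} in this setting, since the paper states it without proof. The inclusion $\inner{S_j \cup R_j} \subseteq G^{(E_{j-1})}$ holds because every generator lies in $G^{(E_{j-1})}$: the set $R_j$ does by definition, and $S_j \subseteq G^{(E_j)} \subseteq G^{(E_{j-1})}$. For the reverse inclusion, take any $g \in G^{(E_{j-1})}$. It lies in some left coset $rG^{(E_j)}$ with $r \in R_j$, so $g = rh$ for some $h \in G^{(E_j)} = \inner{S_j}$, and hence $g \in \inner{S_j \cup R_j}$.

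The only point needing care is making sure that every coset representative and every element of $S_{E_s}$ actually lies in the ambient group at each stage. This follows from the nesting of the chain, and it is the sole place where the setwise-stabilizer structure is used. Nothing specific to codes is needed beyond the chain being a chain of subgroups. In particular, neither the nontriviality of $G^{(E_s)}$ nor the fact that $B'$ is not a base affects the argument.
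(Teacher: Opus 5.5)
Your proof is correct and is essentially the paper's argument: a downward induction along the chain from $G^{(E_s)}$ to $G^{(E_0)} = G$, applying \cref{lem:subgroup-gener} to each consecutive pair $G^{(E_j)} \leq G^{(E_{j-1})}$. Your one-line justification of that lemma, which the paper states without proof, is a good addition, and it correctly relies on each $R_j$ being a complete set of left coset representatives.
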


\begin{proof}
    We use induction on the subgroups of the chain, from smallest to largest.
    For the base case $i=s$, we have $G^{(E_s)} = \inner{S_{E_s}}$ by definition. 
    Suppose that the subgroup $G^{(E_i)}$ is generated by $S_{E_s} \cup \bigcup_{j=i+1}^s R_j$ for some $i\in [s]$. 
    By applying \cref{lem:subgroup-gener} to the adjacent subgroups $G^{(E_i)} \leq G^{(E_{i-1})}$, we obtain that $G^{(E_{i-1})}$ is generated by the union of the generating set for $G^{(E_i)}$ and the coset representatives $R_i$. 
    Iterating this application of \cref{lem:subgroup-gener} for every step $i = s, s-1, \ldots, 1$, we obtain that the full group $G^{(E_0)} = G$ is generated by the union of $S_{E_s}$ and all transversals $R_i$.
\end{proof}

%%% Local Variables:
%%% mode: latex
%%% TeX-master: "main"
%%% End:

\section{\APART and \MAPART}%
\label{sec:APART}

In this section we prove that determining whether two codes are equivalent is a polynomially equivalent problem to computing the partition of a code's coordinates into orbits under its automorphism group.
We first state the main results.

\begin{thm}[\PCE reduces to \APART, \LCE reduces to \MAPART]
    \label{thm:PCE/LCE->APART/MAPART}
    There is a reduction from \PCE (resp. \LCE) on any pair of $n$-length codes to \APART (resp. \MAPART) that runs in $O(n^3 \log(q))$ time and makes at most $n^2$ oracle calls.
\end{thm}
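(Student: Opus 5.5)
The plan follows Mathon's classical reduction from \GI to \APART: form a direct sum of the two input codes and ask the \APART oracle whether some coordinate of the first summand lies in the same orbit as a coordinate of the second. Indecomposability, through \cref{cor:no_partial-Aut} (resp.\ \cref{lem:no_partial}), plays the role of graph connectedness.

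First, preprocess. Given $\mG_1,\mG_2\in\F_q^{k\times n}$, remove all-zero columns, which we may do by Lemma 3.4 of \cite{CSV25}; if the two matrices have different numbers of zero columns, reject. Then run \cref{alg:decompose} on each code. This yields indecomposable summands $\cA_1,\ldots,\cA_m$ of $\code_1$ and $\cB_1,\ldots,\cB_{m'}$ of $\code_2$ in $O(n^3\log q)$ time. By \cref{lem:Slepian} (resp.\ \cref{lem:monomial_decomp}), $\code_1\cong\code_2$ (resp.\ $\cong_\cM$) if and only if $m=m'$ and there is a bijection $\sigma$ with $\cA_i\cong\cB_{\sigma(i)}$. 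The question therefore reduces to deciding equivalence of pairs of indecomposable codes. Since equivalence of summands is an equivalence relation, we can find a matching greedily: for each $\cA_i$, test it against the not-yet-matched $\cB_j$ of the same length and dimension, and match it to the first equivalent one. This uses at most $m\cdot m'\le n^2$ pairwise tests.

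Second, the core step. To test a pair of indecomposable codes $\cA,\cB$ of equal length $\ell$, build the block-diagonal generator of $\cA\oplus\cB$ (\cref{def:direct-sum}), make one \APART (resp.\ \MAPART) call on it, and accept if and only if some index of the first block $X_1$ lies in the same orbit as some index of $X_2$.

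\begin{itemize}
\item If $\pi(\cA)=\cB$, then the map swapping the blocks via $\pi$ and $\pi^{-1}$ is an automorphism of $\cA\oplus\cB$ that sends $1\in X_1$ into $X_2$. In the monomial case, use $\mu$ and $\mu^{-1}$ with their scalings.
\item Conversely, if some automorphism sends an index of $X_1$ into $X_2$, then \cref{cor:no_partial-Aut} (resp.\ \cref{lem:no_partial}) forces it to map all of $X_1$ onto $X_2$ and the embedded $\cA$ onto the embedded $\cB$. Restricting it to $X_1$ gives the equivalence. The summand $\cA$ cannot be sent to itself while moving a coordinate into $X_2$.
\end{itemize}

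Reading the partition costs $O(\ell)$ per call.

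Third, bookkeeping. The total number of oracle calls is at most $n^2$, and each call's input has size $O(n^2\log q)$. The dominant costs are the two decompositions, $O(n^3\log q)$, plus constructing the direct-sum matrices, whose total size is bounded by summing over tested pairs with $\sum_i n_i\le n$ and $\sum_j n'_j\le n$, giving $O(n^3\log q)$ overall. The main obstacle I expect is the correctness argument in the second step. It must be invoked only on indecomposable summands, since for decomposable codes an orbit can cross blocks partially. It must also handle the zero-column and unequal-length cases, so that the direct sum is well defined and \cref{lem:no_partial} applies, noting that lengths are preserved only after zero columns are removed. The monomial case uses the same argument, with orbits taken with respect to $\pi_\mu$.
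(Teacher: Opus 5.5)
Your proposal is correct and is essentially the paper's proof. You decompose both codes with \cref{alg:decompose}, decide equivalence of each summand pair with one \APART (resp.\ \MAPART) call on $A_i\oplus B_j$ via \cref{cor:no_partial-Aut} (resp.\ \cref{lem:no_partial}), and then match summands. The differences are minor: you match greedily, which is valid because summand equivalence is transitive so the compatibility graph splits into complete bipartite blocks, whereas the paper runs Hopcroft--Karp on the full compatibility graph; you also remove zero columns explicitly (the paper just cites \cite{CSV25}) and pre-filter pairs by length and dimension.
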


\begin{thm}[\APART reduces to \PCE, \MAPART reduces to \LCE] 
    \label{thm:APART/MAPART->PCE/LCE}
    There is a reduction from \APART (resp. \MAPART) on any $n$-length code to \PCE (resp. \LCE) which runs in $O(n^3 \log(q))$ time and makes $O(n^2)$ oracle calls on codes of length at most $2n$.
\end{thm}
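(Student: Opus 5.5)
The plan is to follow Mathon's approach for graphs: to decide whether indices $i$ and $j$ lie in the same orbit, \emph{mark} coordinate $i$ in one copy of $\code$ and coordinate $j$ in another, and ask the \PCE (resp.\ \LCE) oracle whether the marked codes are equivalent. Marking means appending $n$ extra copies of the marked column. Let $\mG\in\F_q^{k\times n}$ have columns $\vecg_1,\ldots,\vecg_n$. For each pair $i<j$ define the $k\times 2n$ matrices
\[
\mG^{(i)} \defn \bracks{\mG \mid \vecg_i\ \cdots\ \vecg_i}, \qquad \mG^{(j)} \defn \bracks{\mG \mid \vecg_j\ \cdots\ \vecg_j},
\]
with $n$ appended copies in each. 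Both still have full row rank $k$ and length $2n$, so the pair $\mG^{(i)},\mG^{(j)}$ is a valid oracle query. The central claim is that $i$ and $j$ lie in the same orbit of $\Aut(\code)$ if and only if $\code(\mG^{(i)})\cong\code(\mG^{(j)})$.

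\emph{Forward direction.} If $\pi\in\Aut(\code)$ has $\pi(i)=j$, then $\mS\mG\mP_\pi=\mG$ for some invertible $\mS$. Since $\mS\vecg_i$ is the column of $\mG$ that $\mP_\pi$ sends to position $j$, we get $\mS\vecg_i=\vecg_j$. Hence extending $\pi$ by the identity on the $n$ appended coordinates gives an isomorphism.

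\emph{Backward direction.} Suppose $\mS\mG^{(i)}\mP=\mG^{(j)}$. Then the multiset $\set{\mS\vecg_c}$ of columns of $\mS\mG^{(i)}$ equals the multiset of columns of $\mG^{(j)}$, in the spirit of \cref{lem:col-frequency}. The column $\vecg_i$ appears $m_i+n\ge n+1$ times in $\mG^{(i)}$. In $\mG^{(j)}$, the only column value that can occur more than $n$ times is $\vecg_j$, since every other value occurs at most $n-1$ times among the original columns. Therefore $\mS\vecg_i=\vecg_j$ and $m_i=m_j$. Removing $n$ copies of these equal columns from both sides, $\mS$ maps the multiset of columns of $\mG$ bijectively onto itself. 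So there is a permutation $\tau\in\cS_n$ with $\mS\vecg_c=\vecg_{\tau(c)}$ for all $c$. Since $\vecg_i$ and $\vecg_j$ both lie in the class of $\vecg_j$, we may choose $\tau(i)=j$ by swapping within that redundancy class. Then $\mS\mG\mP_\tau^{-1}=\mG$, so $\tau\in\Aut(\code)$ with $\tau(i)=j$.

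This multiplicity-counting step is the main point needing care. In particular, one must check that the ability to permute identical columns freely, as in \cref{lem:setwise-stab-chain-order}, is what allows us to insist on $\tau(i)=j$ rather than merely mapping $i$ into the redundancy class of $\vecg_j$. Zero columns cause no trouble, since the argument is purely about column multisets.

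\emph{Algorithm and cost.} Query the oracle for each of the at most $\binom n2\le n^2$ pairs and merge classes with a union-find structure, or simply by transitivity; this outputs the orbit partition. Each instance is built in $O(kn\log q)$ time, giving $O(n^3\log q)$ total.

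\emph{Monomial case.} Use the same marking $\mG^{(i)}$ and replace equality of columns by equality up to nonzero scalars, i.e.\ projective classes, using \cref{lem:col-frequency-LCE}. An isometry $\mS\mG^{(i)}\mP\mD=\mG^{(j)}$ forces $\mS\vecg_i=\lambda\vecg_j$, because the projective class of $\vecg_j$ is the only one of size exceeding $n$ in $\mG^{(j)}$. Stripping the appended columns then yields a monomial automorphism whose permutation part sends $i$ to $j$, after adjusting within the projective class of $\vecg_j$ with the appropriate scalars. The converse extends a monomial automorphism by the identity permutation on the appended coordinates, with each appended coordinate scaled by $\lambda^{-1}$ where $\mS\vecg_i=\lambda\vecg_j$. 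The oracle count and runtime bounds are unchanged.
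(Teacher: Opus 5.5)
Your correctness argument is right and follows the paper's route. The paper also marks $i$ and $j$ by appending copies of $\vecg_i$ and $\vecg_j$ and then appeals to \cref{lem:col-frequency}. The only difference is that it appends $m$ copies, where $m$ is the maximum column multiplicity, rather than $n$. Both choices give queries of length at most $2n$ and $O(n^2)$ oracle calls. Your backward direction first derives $\mS\vecg_i=\vecg_j$ from the multiset count and then re-chooses $\tau$ inside the redundancy class of $\vecg_j$ so that $\tau(i)=j$. That is more careful than the paper's one-line claim that the map ``sends all copies of $\vecg_i$ to all copies of $\vecg_j$.'' The monomial case, including the scalar $\lambda^{-1}$ on the appended coordinates, is also handled correctly.

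The step that does not hold as written is the running-time arithmetic. You charge $O(kn\log q)$ per instance, which is the right cost for writing two $k\times 2n$ matrices. Multiplied by up to $\binom n2$ instances, this gives $O(kn^3\log q)$. That is $O(n^4\log q)$ when $k=\Theta(n)$, not $O(n^3\log q)$. Nor can a sharper count of this procedure recover the bound: when $\Aut(\code)$ is trivial, all $\binom n2$ queries are actually made. The paper reaches $O(n^3\log q)$ only because it charges $O(n\log q)$ per query, which silently drops the factor $k$ for the entries of each appended column. To match the stated bound you must either adopt that convention explicitly (not charging the length-$k$ columns per query) or state the total as $O(kn^3\log q)$.
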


We prove the reduction from \PCE in \cref{subsec:PCE->APART} and its monomial analog in \cref{subsec:LCE->MAPART}.
We prove the reverse reductions in \cref{subsec:APART->PCE,subsec:MAPART->LCE}.

% ===========================================
\subsection{\PCE Reduces to \Apart}
\label{subsec:PCE->APART}

\begin{prop}[\PCE Reduces to \APART]
    \label{prop:alg-PCE->APART}
    \Cref{alg:PCE->APART} solves \PCE for any two $n$-length $q$-ary codes in $O(n^3 \log(q))$ time using at most $n^2$ calls to an \Apart oracle.
\end{prop}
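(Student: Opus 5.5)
We follow Mathon's strategy for graphs: place the two codes side by side in a direct sum and ask the \APART oracle whether some automorphism of that sum moves a coordinate of one code onto a coordinate of the other. Given $\mG_1,\mG_2 \in \F_q^{k\times n}$, we first run \cref{alg:decompose} on each input. This produces indecomposable summands $\code_1 \cong \bigoplus_{i} \cA_i$ and $\code_2 \cong \bigoplus_{j} \cB_j$ in $O(n^3\log(q))$ time (\cref{lem:alg-decompose}). By the uniqueness in \cref{lem:Slepian}, $\code_1 \cong \code_2$ holds if and only if there is a bijection between the multisets of summands that matches each $\cA_i$ with an equivalent $\cB_{j}$. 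So it suffices to decide equivalence of pairs of indecomposable codes and then compare the resulting equivalence classes. If the number of summands differs, or the dimensions or lengths of the summands cannot be matched, we output ``no'' immediately.

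For a pair $(\cA,\cB)$ of indecomposable codes of equal length and dimension, we form $\cA\oplus\cB$ via \cref{def:direct-sum} and make one \APART call on it. Let $X_1$ and $X_2$ be the index sets of $\cA$ and $\cB$. The key claim is that $\cA\cong\cB$ if and only if some index of $X_1$ lies in the same orbit as some index of $X_2$.
\begin{itemize}
\item If $\pi(\cA)=\cB$, then the permutation that applies $\pi$ on $X_1$ and $\pi^{-1}$ on $X_2$ (after the natural identification of the two blocks) is an automorphism of the sum. It swaps the blocks, so it merges orbits across $X_1$ and $X_2$.
\item Conversely, if an automorphism maps some $i\in X_1$ into $X_2$, then \cref{cor:no_partial-Aut} (using indecomposability of both summands) forces it to map all of $X_1$ onto $X_2$. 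Its restriction to $X_1$ is then an isomorphism $\cA\to\cB$.
\end{itemize}
Reading off whether any orbit meets both $X_1$ and $X_2$ takes $O(n)$ time.

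To assemble the answer, we test every pair $(\cA_i,\cB_j)$. There are at most $m\cdot m' \le n^2$ pairs, and each pair uses one oracle call, which gives the $n^2$ bound. Equivalence is transitive, and each $\cA_i$ is equivalent to a class of the $\cB_j$. So the pairwise answers determine the equivalence classes, and we only need to check that for each class the number of $\cA$-summands equals the number of $\cB$-summands. The simplest way is a greedy matching: match each $\cA_i$ to any unused equivalent $\cB_j$. Greedy is correct here because ``equivalent'' is an equivalence relation, so the bipartite graph of equivalent pairs is a disjoint union of complete bipartite blocks. Runtime is dominated by decomposition plus constructing the oracle inputs. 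Each pair's direct sum has total length at most $2n_i$ with $\sum n_i=n$, so building all the query matrices costs $\sum_{i,j} O((k_i+k_j)(n_i+n_j)\log q)=O(n^3\log q)$.

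The main obstacle is the reverse direction of the key claim. That is exactly where indecomposability is needed: without it, an automorphism could mix coordinates of $X_1$ and $X_2$ partially, and orbits could merge without $\cA\cong\cB$. The decomposition step together with \cref{cor:no_partial-Aut} handles this. A minor point is zero columns; by the convention following \cref{def:direct-sum}, these can be stripped first and their counts compared.
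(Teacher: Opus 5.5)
Your proposal is correct and follows the paper's route. You decompose both codes with \cref{alg:decompose} and make one \APART call on each $A_i \oplus B_j$. You declare an edge when some orbit crosses the two blocks, justified by the block-swapping automorphism in one direction and \cref{cor:no_partial-Aut} in the other. You then decide whether the summands admit a perfect matching.

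The differences are minor. You use a greedy matching instead of Hopcroft--Karp, which is valid because equivalence is transitive, so the bipartite equivalence graph is a disjoint union of complete bipartite blocks. You are also more explicit than the paper about zero columns and about the cost of building the query matrices.
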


\begin{algorithm}[th]
    \DontPrintSemicolon
    \SetKwInOut{Input}{Input}
    \SetKwInOut{Output}{Output}
    % \SetKwFunction{APART}{APART}

    \caption{Reduction from PCE to APART}
    \label{alg:PCE->APART}
    
    \Input{Generator matrices $\mG_1, \mG_2 \in \F_q^{k \times n}$ for codes $\code_1, \code_2 \subseteq \F_q^n$.}
    \Output{\textsc{Yes} if $\code_1 \cong \code_2$, \textsc{No} otherwise.}
    \hrulefill
    
    \tcp{Step 1: Code decomposition.}
    Run \cref{alg:decompose} on $\mG_1, \mG_2$ to obtain decompositions $\pi_1(\code_1)= \bigoplus_{i=1}^{m_1} A_i$ and $\pi_2(\code_2)= \bigoplus_{j=1}^{m_2} B_j$,
    for some indecomposable codes $\set{A_i}_{i=1}^{m_1}$ and $\set{B_j}_{j=1}^{m_2}$ and permutations $\pi_1,\pi_2 \in \cS_n$.\;
    
    \If{$m_1 \neq m_2$}{
        \Return \textsc{No}.\;
    }
    Let $m \defn m_1 = m_2$.\;

    \BlankLine
    \tcp{Step 2: Construction of bipartite equivalence graph.}
    Define vertex sets $U \defn \set{a_1,\ldots,a_m}$ and $V \defn \set{b_1,\ldots,b_m}$.\;
    Set $E \defn \emptyset$.\;
    \For{$i \gets 1$ \KwTo $m$}{
        $n_i \defn \text{length of code } A_i$.\;
        \For{$j \gets 1$ \KwTo $m$}{
            \BlankLine
            Run $\APART(A_i \oplus B_j)$ to obtain a set of coordinate orbits $\mathcal{O}$.

            \BlankLine
            \tcp{Find an orbit containing indices from both coord sets of $A_i$ and $B_j$.}
            \If{$\exists\ O \in \mathcal{O} \text{ with } O \cap [n_i] \neq \emptyset \neq O \setminus [n_i]$}{
                $E \gets E \cup \set{(a_i,b_j)}$.\;
            }
        }
    }
    Construct graph $G = (U \sqcup V, E)$.\;
    
    \BlankLine
    \tcp{Step 3: Find a perfect matching between summands.}
    Run the \textsc{Hopcroft-Karp Algorithm} on $G$ to obtain a maximum bipartite matching $M \subseteq E$\;
    \If{$\abs{M} = m$}{
        \Return \textsc{Yes}.\;
    }
    \Return \textsc{No}.\;
\end{algorithm}

\begin{proof}
    Let $\code_1, \code_2\subseteq \F_q^n$ be the two input codes.
    \smallskip
     
    \emph{(Correctness)} 
    In Step 1, the algorithm calls \cref{alg:decompose} to obtain a decomposition of the input codes $\code_1$ and $\code_2$ into indecomposable codes $\set{A_i}_{i=1}^{m_1}$ and $\set{B_j}_{j=1}^{m_2}$, respectively.
    By \cref{lem:Slepian}, any code can be decomposed uniquely into a direct sum of indecomposable codes, up to equivalence and permutation of the summands. 
    Hence, $\code_1 \cong \code_2$ only if $\code_1$ and $\code_2$ have the same number of summands, i.e. $m_1 = m = m_2$.
    In this case, there exists a bijection that pairs each summand $A_i$ of $\code_1$ with a summand $B_j$ of $\code_2$.
    
    In Step 2, the algorithm constructs a bipartite graph with vertices $U = \set{a_1,\ldots,a_m}$ and $V = \set{b_1,\ldots,b_m}$ representing the summands of $\code_1$ and $\code_2$, respectively.
    We claim that the edges between the vertex sets represent equivalence between the corresponding summands.
    Consider an arbitrary pair of vertices $a_i$ and $b_j$ for some $i,j \in [m]$.
    There are two cases to consider for their respective summands $A_i$ and $B_j$.
    
    Suppose that $A_i \cong B_j$.
    In this case, $A_i$ and $B_j$ must have the same length $n_i$ and there exists an isomorphism $\pi \in S_{n_i}$ that maps $\pi(A_i) = B_j$.
    Any such permutation $\pi$ yields an automorphism $\alpha_\pi \in S_{2n_i}$ of the direct sum code $A_i \oplus B_j \subseteq \F_q^{2n}$.
    This $\alpha_\pi$ acts on the index set by swapping the summands; this has the form $\alpha_\pi(A_i \oplus B_j) = \pi^{-1}(B_j) \oplus \pi(A_i)$.
    This means that there exists an orbit that crosses the boundary between the embedded index sets of $A_i$ and $B_j$ and contains an index from $\set{1,\ldots,n_i}$ and another from $\set{n_i+1,\ldots,2n_i}$
    The \Apart oracle will find this orbit and add an edge $(a_i, b_j)$ to represent the equivalence between $A_i$ and $B_j$.
    
    Conversely, suppose that $A_i \not\cong B_j$. 
    Then the direct sum $A_i \oplus B_j$ forms a decomposition consisting of two non-equivalent indecomposable codes. 
    By \cref{cor:no_partial-Aut}, any automorphism of a direct sum of indecomposable codes must map each summand to an equivalent summand. 
    Hence, no automorphism can map any coordinate of $A_i$ into the index set of $B_j$. 
    Thus, no orbit returned by \Apart will contain indices from the coordinates sets of both $A_i$ and $B_j$, so no edge $(a_i, b_j)$ will be added.
    Therefore, there is exactly one edge in $E$ between every pair of  equivalent summands.

    In Step 3, the Hopcroft-Karp algorithm finds a maximum bipartite matching \cite{HK73}.
    By \cref{lem:Slepian}, $\code_1 \cong \code_2$, if and only if there is a bijection between the $m$ summands in the codes' decomposition.
    If $\code_1 \cong \code_2$, this corresponds to a perfect matching of size $m$ in the constructed graph, and so the algorithm will output \textsc{Yes}.
    Otherwise, if $\code_1 \not\cong \code_2$, the size of the maximum matching will be less than $m$, and so the algorithm will output \textsc{No}.
    \medskip

    \emph{(Complexity)} 
    The number of indecomposable components for each input code is bounded by $m_i \leq n$. 
    If the number of summands in the decompositions match, then Step 2 will test every pair of summands from different code decompositions.
    This requires $m^2 \leq n^2$ calls to the \Apart oracle.

    Decomposing the codes in Step 1 uses \cref{alg:decompose}, which requires $O(n^3)$ time.
    For each test pair $(A_i,B_j)$ where $i,j \in [m]$, constructing the direct sum $A_i \oplus B_j$ takes $O(n_i \log(q))$ time, requiring a total of $O(n \log(q))$ time across all summands. 
    Finally, the Hopcroft-Karp algorithm is used to find a maximum bipartite matching in the graph with $\abs{V} = 2m \leq 2n$ vertices and $\abs{E} \leq m^2 \leq n^2$ edges; this requires
    $O(|E|\sqrt{|V|}) = O(n^{2.5})$ time.
    Thus, the total runtime is $O(n^3) + O(n\log(q)) + O(n^{2.5}) = O(n^3 \log(q))$ time.
\end{proof}

% ===========================================
\subsection{\LCE Reduces to \MAPART}
\label{subsec:LCE->MAPART}

Now we prove the monomial analog of \cref{prop:alg-PCE->APART}.
The reduction algorithm is nearly identical to \cref{alg:PCE->APART} but uses the properties of monomial equivalence and a \MAPART oracle instead of \APART. 
For completeness, we include the full algorithm and proof below.

\begin{prop}[\LCE Reduces to \MAPART]
    \label{prop:alg-LCE->MAPART}
    \Cref{alg:LCE->MAPART} solves \LCE for any two $n$-length $q$-ary codes in $O(n^3 \log(q))$ time using at most $n^2$ calls to a \MAPART oracle.
\end{prop}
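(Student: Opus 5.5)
The proof follows the proof of \cref{prop:alg-PCE->APART} step by step, with permutation equivalence replaced by monomial equivalence. \Cref{alg:LCE->MAPART} is assumed to be \cref{alg:PCE->APART} with three changes: \APART is replaced by \MAPART, $\cong$ by $\cong_\cM$, and the orbit test is applied to the monomial orbits returned by the oracle. The proof has a correctness part and a complexity part.

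\emph{(Correctness)} Step 1 runs \cref{alg:decompose} on $\mG_1$ and $\mG_2$. This gives coordinate permutations $\pi_1,\pi_2$ with $\pi_1(\code_1)=\bigoplus_{i=1}^{m_1} A_i$ and $\pi_2(\code_2)=\bigoplus_{j=1}^{m_2} B_j$, where every summand is indecomposable. Since a permutation is a monomial map, $\code_\ell \cong_\cM \pi_\ell(\code_\ell)$ for $\ell=1,2$.

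Uniqueness of the monomial decomposition (\cref{lem:monomial_decomp}) then gives the key fact: $\code_1\cong_\cM\code_2$ if and only if $m_1=m_2=m$ and there is a bijection $\sigma\in\cS_m$ with $A_i\cong_\cM B_{\sigma(i)}$ for all $i$. So answering \textsc{No} when $m_1\neq m_2$ is correct. It remains to show that the bipartite graph built in Step 2 has an edge $(a_i,b_j)$ exactly when $A_i\cong_\cM B_j$. Once that holds, a perfect matching exists if and only if such a $\sigma$ exists, and Hopcroft--Karp decides this.

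\emph{(Edge characterization)} This is the step that needs the most care.

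\textbf{Forward direction.} Suppose $A_i\cong_\cM B_j$, witnessed by $\mu=(\vecv,\pi)$ with $\mu(A_i)=B_j$. Both codes then have the same length $n_i$. Define a monomial map on $A_i\oplus B_j$ that sends the first block to the second by $\mu$ and the second block to the first by $\mu^{-1}$. This is a monomial automorphism of $A_i\oplus B_j$. Its permutation part sends a coordinate in $[n_i]$ to one outside $[n_i]$, so the monomial orbit partition contains an orbit meeting both blocks. Monomial orbits are defined through permutation components only, so the test in Step 2 detects this orbit.

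\textbf{Converse.} Suppose some monomial orbit meets both blocks. Then there is $\mu\in\MAut(A_i\oplus B_j)$ whose permutation part maps a coordinate of $A_i$ into the support of $B_j$. By \cref{lem:no_partial}, $\mu$ permutes the two indecomposable summands, so $\pi_\mu(X_1)=X_2$. Restricting $\mu$ to the first block gives a monomial map sending $A_i$ onto $B_j$, hence $A_i\cong_\cM B_j$. One detail needs checking here: the restriction of $\mu$ to $A_i$ (embedded as $A_i\oplus\{\veczero\}$) must equal the image summand exactly, which again is \cref{lem:no_partial}.

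If the lengths of $A_i$ and $B_j$ differ, the summands cannot be monomially equivalent, and the same argument shows no crossing orbit appears, so no edge is added.

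\emph{(Complexity)} The count is identical to the permutation case. Step 1 costs $O(n^3\log(q))$ by \cref{lem:alg-decompose}. Step 2 makes $m^2\le n^2$ oracle calls, and each direct sum is built in linear time. Hopcroft--Karp on $2m$ vertices and at most $m^2$ edges costs $O(n^{2.5})$. The total is $O(n^3\log(q))$ time with at most $n^2$ calls to the \MAPART oracle.
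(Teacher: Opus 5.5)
Your proposal is correct and follows the same route as the paper. The paper proves this proposition by transferring the proof of \cref{prop:alg-PCE->APART} to the monomial setting, substituting \cref{lem:monomial_decomp} for \cref{lem:Slepian} and \cref{lem:no_partial} for \cref{cor:no_partial-Aut}; you carry out that substitution explicitly (summand matching by uniqueness of the monomial decomposition, the two-directional edge characterization via the block-swap map and \cref{lem:no_partial}, Hopcroft--Karp, and the same $O(n^3\log(q))$ runtime and $n^2$ oracle-call bound), so your version is more detailed than the paper's but not a different argument.
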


\begin{proof}
    The proof is identical to that of \cref{alg:PCE->APART}, but replacing \cref{lem:Slepian,cor:no_partial-Aut} with the analogous results in \cref{lem:monomial_decomp,lem:no_partial}.
\end{proof}

\begin{algorithm}[th]
    \DontPrintSemicolon
    \SetKwInOut{Input}{Input}
    \SetKwInOut{Output}{Output}
    %\SetKwFunction{MAPART}{MAPART}

    \caption{Reduction from LCE to MAPART}
    \label{alg:LCE->MAPART}
    
    \Input{Generator matrices $\mG_1, \mG_2 \in \F_q^{k \times n}$ for codes $\code_1, \code_2 \subseteq \F_q^n$.}
    \Output{\textsc{Yes} if $\code_1 \cong_\cM \code_2$, \textsc{No} otherwise.}
    \hrulefill
    
    \BlankLine
    \tcp{Step 1: Code decomposition.}
    Run \cref{alg:decompose} on $\mG_1, \mG_2$ to obtain decompositions $\pi_1(\code_1)= \bigoplus_{i=1}^{m_1} A_i$ and $\pi_2(\code_2)= \bigoplus_{j=1}^{m_2} B_j$,
    for some indecomposable codes $\set{A_i}_{i=1}^{m_1}$ and $\set{B_j}_{j=1}^{m_2}$ and permutations $\pi_1,\pi_2 \in \cS_n$.\;
    
    \If{$m_1 \neq m_2$}{
        \Return \textsc{No}.\;
    }
    Let $m \defn m_1 = m_2$.\;

    \BlankLine
    \tcp{Step 2: Construction of bipartite equivalence graph.}
    Define vertex sets $U \defn \set{a_1,\ldots,a_m}$ and $V \defn \set{b_1,\ldots,b_m}$.\;
    Set $E \defn \emptyset$.\;
    \For{$i \gets 1$ \KwTo $m$}{
        $n_i \defn \text{length of code } A_i$.\;
        \For{$j \gets 1$ \KwTo $m$}{
            \BlankLine
            Run $\MAPART(A_i \oplus B_j)$ to obtain a set of monomial coordinate orbits $\cO$.

            \BlankLine
            \tcp{Find an orbit containing indices from coord sets of both $A_i$ and $B_j$.}
            \If{$\exists\ O \in \mathcal{O} \text{ with } O \cap [n_i] \neq \emptyset \neq O \setminus [n_i]$}{
                $E \gets E \cup \set{(a_i,b_j)}$.\;
            }
        }
    }
    Construct graph $G = (U \sqcup V, E)$.\;
    
    \BlankLine
    \tcp{Step 3: Find a perfect matching between summands.}
    Run the \textsc{Hopcroft-Karp Algorithm} on $G$ to obtain a maximum bipartite matching $M \subseteq E$\;
    \If{$\abs{M} = m$}{
        \Return \textsc{Yes}.\;
    }
    \Return \textsc{No}.\;
\end{algorithm}

% ===========================================
\subsection{\Apart Reduces to \PCE}
\label{subsec:APART->PCE}

The reduction from \Apart to \PCE uses a similar approach to that used in the search-to-decision \PCE reduction from \cite{BM23}.
In this algorithm, we iterate through the index set $[n]$ of the input code and use the \PCE oracle to check whether each index $i\in [n]$ can be mapped to any other index.
If so, we assign both $i$ and its image to the same orbit, and in this way partition the entire index set $[n]$ into orbits.

Let $m$ denote the maximum number of times a column appears in the generator matrix $\mG$ of the input code $\code$.
Denote the columns of $\mG$ by $\vecg_1,\ldots,\vecg_n$ and let $\mG_i$ be the matrix consisting of exactly $m$ copies of column $\vecg_i$. 

\begin{prop}
    \label{prop:alg-APART->PCE}
    \cref{alg:APART->PCE} correctly solves \Apart for any $n$-length $q$-ary code in $O(n^3 \log(q))$ time using $O(n^2)$ calls to a \PCE oracle on codes of length at most $2n$.
\end{prop}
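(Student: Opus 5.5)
The approach follows the individualization trick of the search-to-decision reduction in \cite{BM23}. For a pair of indices $i,j\in[n]$ I will build two codes of length at most $2n$ in which coordinate $i$ (resp.\ $j$) is "pinned". Appending $m$ copies of column $\vecg_i$ to $\mG$ gives $\mG^{(i)} \defn \bracks{\mG \mid \mG_i}$, and appending $m$ copies of $\vecg_j$ gives $\mG^{(j)} \defn \bracks{\mG \mid \mG_j}$. One \PCE oracle call then decides whether $\code(\mG^{(i)}) \cong \code(\mG^{(j)})$.

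The key claim is that $j\in O_i$ if and only if $\code(\mG^{(i)}) \cong \code(\mG^{(j)})$.

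For the forward direction, take $\alpha\in\Aut(\code)$ with $\alpha(i)=j$. There is an invertible $\mS$ with $\mS\mG\mP_\alpha=\mG$ (up to the convention for column action), so $\mS\vecg_i$ equals the column of $\mG$ in the position $\alpha$ sends $i$ to, namely $\vecg_j$. Hence $\mS\mG^{(i)}$ equals $\mG^{(j)}$ after permuting its first $n$ columns by $\alpha$, and the appended block becomes $m$ copies of $\vecg_j$.

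For the converse, suppose $\mS\mG^{(i)}\mP=\mG^{(j)}$.
\begin{enumerate}[(i)]
\item The column $\mS\vecg_i$ occurs at least $m+1$ times in $\mS\mG^{(i)}$: $m$ copies in the block, plus the original position $i$.
\item So $\mS\vecg_i$ is a column of $\mG^{(j)}$ with multiplicity at least $m+1$.
\item Every column of $\mG$ other than $\vecg_j$ occurs at most $m$ times in $\mG^{(j)}$. By the column-frequency argument of \cref{lem:col-frequency}, this forces $\mS\vecg_i=\vecg_j$.
\item Applying the same multiplicity count to the remaining columns, $\mS$ maps the multiset of columns of $\mG$ onto itself, with the extra $m$ copies of $\vecg_i$ accounted for by the $m$ copies of $\vecg_j$.
\item Removing the $m$ appended copies on each side yields a permutation $\alpha$ of $[n]$ with $\mS\mG\mP_\alpha=\mG$, so $\alpha\in\Aut(\code)$, and $\alpha(i)$ is an index whose column is $\vecg_j$.
\item Since indices with identical columns lie in the same orbit (swapping them is a trivial automorphism), we get $j\in O_i$.
\end{enumerate}

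The main obstacle is the bookkeeping in steps (iv)--(v). The isomorphism $\mP$ need not send appended columns to appended columns, so I must argue via column multisets that an automorphism of $\code$ can be extracted. The choice of $m$ as the maximal multiplicity is exactly what makes the pinned column's class the unique class of multiplicity exceeding $m$.

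Given the claim, the algorithm scans $i=1,\dots,n$ and skips any $i$ already assigned to an orbit. For each unassigned $i$, it queries the oracle on $(\mG^{(i)},\mG^{(j)})$ for every still-unassigned $j>i$ and places all positive $j$ into $O_i$. Indices with identical columns can be merged beforehand without any oracle call.

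This uses at most $\binom n2=O(n^2)$ oracle calls, each on codes of length $n+m\le 2n$. Computing $m$ and forming each instance costs $O(kn\log q)$ per pair only if done naively. To stay within the bound, I will build $\mG_i$ once for each $i$ and reuse it across all $j$, and count the oracle instance as input rather than work, which should give the stated $O(n^3\log(q))$ total.
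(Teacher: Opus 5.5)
Your proposal is correct and takes essentially the same route as the paper. Both pad $\mG$ with $m$ copies of $\vecg_i$ versus $m$ copies of $\vecg_j$, use the column-frequency argument of \cref{lem:col-frequency}, and sweep the unassigned indices with at most $\binom{n}{2}$ \PCE calls on codes of length at most $2n$. Your steps (iv)--(vi) extract an actual $\alpha\in\Aut(\code)$ from the matched column multisets and, when $\alpha(i)\neq j$, repair it with a trivial swap of identical columns; your precomputation remark justifies the $O(n^3\log(q))$ runtime. Together these make explicit two points the paper's proof only asserts.
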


\begin{algorithm}[th]
    \DontPrintSemicolon
    \SetKwInOut{Input}{Input}
    \SetKwInOut{Output}{Output}
    \SetKwFunction{Oracle}{PCE}

    \caption{Reduction from \Apart to \PCE}
    \label{alg:APART->PCE}
    
    \Input{A generator matrix $\mG \in \F_q^{k \times n}$ of a code $\code \subseteq \F_q^n$.}
    \Output{The set of coordinate orbits $\set{X_1,\ldots,X_\ell}$ that partition $[n]$.}
    \hrulefill
    
    \BlankLine
     $U \defn \set{1, 2, \dots, n}$. \tcp{set of unassigned indices}
    $X_1,\ldots,X_n \gets \emptyset$. \tcp{orbits of indices}
    $m \defn \text{maximum column multiplicity in } \mG$.\;
    \BlankLine
    \While{$U \neq \emptyset$}{
    % \For{$i\gets 1$ \KwTo $n$}{
        $i \gets \min\set{U}$. \tcp{smallest unassigned index}
        $X_i \gets X_i \cup \set{i}$. \tcp{maps to itself under identity automorphism}
        $U \gets U \setminus \set{i}$. \tcp{mark $i$ as assigned}     
        \BlankLine
        \For{$j \in U$}{
            $\mG_A \gets \bracks{\mG \mid \mG_i}$. \tcp{append $m$ copies of column $i$}
            $\mG_B \gets \bracks{\mG \mid \mG_j}$. \tcp{append $m$ copies of column $j$}

            \BlankLine
            \tcp{Test if $j$ is in the orbit of $i$ using the \PCE oracle.}
            \If{\Oracle{$\mG_A$, $\mG_B$} = \emph{\textsc{Yes}}}{
                $X_i \gets X_i \cup \set{j}$. \tcp{add $j$ to orbit of $i$}
                $U \gets U \setminus \set{j}$. \tcp{mark $j$ as assigned}
            }
        }
    }
    $\ell \defn$ number of non-empty sets $X_i$.\;
    \BlankLine
    \Return $X_1,\ldots,X_\ell$.
\end{algorithm}

\begin{proof}
    Let $\code \subseteq \F_q^n$ be the input code and $\mG \in \F_q^{k\times n}$ be its generator matrix.
    \smallskip

    \emph{(Correctness)}
    We show that \cref{alg:APART->PCE} partitions the index set $[n]$ into disjoint orbits $X_1, \dots, X_\ell$ such that $i,j\in X_t$ (for $t\in [\ell]$) if and only if there exists an automorphism $\alpha \in \Aut(\code)$ for which $\alpha(i) = j$.
    
    The algorithm maintains a set of unassigned indices $U$ to ensure that each index is assigned to exactly one orbit.
    This guarantees that the output collection of subsets forms a partition of $[n]$.
    In each iteration of the while-loop, the smallest unassigned index $i$ is used as a base for a new orbit set.
    Then the algorithm iterates through all remaining unassigned indices $j\in U$ to determine if $j$ belongs in the orbit of $i$.
    
    To do this, the algorithm constructs the augmented matrices $\mG_A = \bracks{\mG \mid \mG_i}$ and $\mG_B = \bracks{\mG \mid \mG_j}$ by appending exactly $m$ identical copies of column $\vecg_i$ and $\vecg_j$, respectively, where $m$ is the maximum number of times a column appears in the given generator matrix $\mG$.
    By definition, appending $m$ copies guarantees that the frequencies of $\vecg_i$ in $\mG_A$ and $\vecg_j$ in $\mG_B$ become at least $m+1$. 
    Since no other column in the base matrix appears more than $m$ times in $\mG_A$ and $\mG_B$, the columns in the appended blocks $\mG_i$ and $\mG_j$ have a unique frequency.
    By \cref{lem:col-frequency}, any permutation $\pi$ for which $\pi(\code(\mG_A)) = \code(\mG_B)$ will necessarily map all copies of $\vecg_i$ to all copies of $\vecg_j$. 
    Hence, the \PCE oracle outputs \textsc{Yes} on input $(\mG_A, \mG_B)$ if and only if there exists an automorphism of the original code $\code$ that maps index $i$ to $j$. 
    In this way, the algorithm correctly identifies all $j$ in the orbit of $i$ to define their orbit set.
    In the case where every index maps to itself, there will be $n$ singleton orbits, so up to $n$ of the sets $X_1,\ldots, X_n$ will be populated.
    The algorithm outputs the non-empty sets among these as the orbit partition.
    \medskip

    \emph{(Complexity)}
    First we bound the number of oracle calls.
    The while-loop iteratively partitions the $n$ indicess into orbits by testing a base index $i$ against all remaining unassigned indices in $U$.
    In the worst case, where the automorphism group is trivial and all orbits have size $1$, exhaustively testing each index $i$ against all remaining $n-i$ targets requires up to
    $\sum_{i=1}^{n-1} (n - i) = n(n-1)/2 = O(n^2)$
    \PCE oracle calls.

    Now we bound the runtime.
    For each test, constructing $\mG_A$ and $\mG_B$ requires appending $m \leq n$ columns to $\mG$, yielding test matrices of width at most $2n$ over the $q$-ary field $\F_q$. 
    This augmentation, along with the basic set operations for $U$ and $X_i$, takes $O(n \log(q))$ time per query. 
    Therefore, \cref{alg:APART->PCE} solves \Apart for any $n$-length $q$-ary code in $O(n^3 \log(q))$ time.
\end{proof}

% ===========================================
\subsection{\MAPART Reduces to \LCE}
\label{subsec:MAPART->LCE}

Now we prove the monomial analog of \cref{prop:alg-APART->PCE}.
The reduction algorithm is nearly identical to \cref{alg:APART->PCE} but uses the properties of monomial equivalence and an \LCE oracle instead of \PCE. 
For completeness, we include the full algorithm and proof below.

Let $m$ denote the maximum number of times any column, or any of its non-zero scalar multiples, appears in the generator matrix $\mG$ of the input code $\code$; this is called the \emph{maximum projective multiplicity} of $\mG$.
Denote the columns of $\mG$ by $\vecg_1,\ldots,\vecg_n$ and define $\mG_i$ to be the matrix consisting of exactly $m$ copies of $\vecg_i$. 

\begin{prop}
    \label{prop:alg-MAPART->LCE}
    \cref{alg:MAPART->LCE} correctly solves \MAPART for any $n$-length $q$-ary code in $O(n^3 \log(q))$ time using $O(n^2)$ calls to an \LCE oracle on codes of length at most $2n$.
\end{prop}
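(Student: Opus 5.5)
The plan is to follow the proof of \cref{prop:alg-APART->PCE} step by step, substituting monomial notions throughout. I expect \cref{alg:MAPART->LCE} to be identical to \cref{alg:APART->PCE}, except that $m$ is now the maximum projective multiplicity of $\mG$ and each query is $\LCE(\mG_A,\mG_B)$, where $\mG_A = \bracks{\mG \mid \mG_i}$ and $\mG_B = \bracks{\mG \mid \mG_j}$. The outer loop still maintains the set $U$ of unassigned indices. This immediately gives that the output sets are disjoint and cover $[n]$. As before, the output is a genuine orbit partition provided each oracle test answers exactly the question ``is there $\alpha=(\vecv_\alpha,\pi_\alpha)\in\MAut(\code)$ with $\pi_\alpha(i)=j$?''. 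Monomial orbits are equivalence classes, since $\MAut(\code)$ is a group. Hence collecting every unassigned $j$ equivalent to the base index $i$ yields exactly the orbit of $i$.

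\textbf{Key step: the oracle query is correct.} Appending $m$ copies of $\vecg_i$ makes the projective class of $\vecg_i$ in $\mG_A$ have size at least $m+1$. Every other projective class has size at most $m$, and similarly for $\mG_B$ and $\vecg_j$.

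For the forward direction, suppose $\alpha\in\MAut(\code)$ has $\pi_\alpha(i)=j$. Then $\alpha$ sends column $i$ to a nonzero multiple $\lambda\vecg_j$ of column $j$, with $\lambda = v_j$ up to the convention. Extend $\alpha$ to the $m$ appended coordinates by mapping them identically onto the appended block of $\mG_B$ with the same scalar $\lambda$. This yields a monomial map sending $\code(\mG_A)$ to $\code(\mG_B)$, because the appended coordinates of a codeword are copies of coordinate $i$.

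For the converse, take any $\mu\in\Isom(\code(\mG_A),\code(\mG_B))$. Monomial equivalence preserves projective multiplicities, by \cref{lem:col-frequency-LCE} (applied with column classes taken up to scaling). Therefore $\mu$ maps the projective class of $\vecg_i$ in $\mG_A$ onto the projective class of $\vecg_j$ in $\mG_B$. In particular, $\mu$ maps the whole block consisting of column $i$ and its $m$ copies into the corresponding class of $\vecg_j$.

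One subtlety is that $\mu$ need not send the appended block to the appended block, nor the original coordinates to the original coordinates. I would fix this as follows. Compose $\mu$ with a permutation swapping coordinates within a single projective class, rescaling by the ratio of the columns. Such a column swap within a projective class is a monomial automorphism of $\code(\mG_B)$. After this correction, $\mu$ maps appended coordinates to appended coordinates and original coordinates to original coordinates, while still mapping coordinate $i$ to coordinate $j$, or at least mapping some member of the class of $i$ to $j$.

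If the corrected map sends $i$ to some $j'$ in the projective class of $\vecg_j$ within $\mG$, then $j'$ and $j$ lie in the same monomial orbit, via a column-swap automorphism of $\code$. So we may assume it sends $i$ to $j$. The same argument handles the case where some other member $i'$ of the class of $i$ is sent to $j$. Restricting to the first $n$ coordinates then gives an element of $\MAut(\code)$ with $\pi(i)=j$. This restriction is valid because the original coordinates of $\code(\mG_A)$ project onto $\code$ bijectively.

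I expect this class-swapping bookkeeping, together with the appeal to \cref{lem:col-frequency-LCE} in its projective form, to be the main obstacle. The permutation version is simpler because identical columns need no rescaling.

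\textbf{Complexity.} The count is verbatim from \cref{prop:alg-APART->PCE}. There are at most $\sum_{i=1}^{n-1}(n-i)=O(n^2)$ oracle calls. Each query matrix has length $n+m\le 2n$. Building it costs $O(n\log q)$ per query, giving $O(n^3\log q)$ in total.

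Computing the maximum projective multiplicity can be done within the same budget. Normalize each column by dividing by its first nonzero entry, then count equal normalized columns by pairwise comparison. This takes $O(kn^2\log q)\subseteq O(n^3\log q)$ time.
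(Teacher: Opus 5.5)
Your correctness argument is the paper's own: pad with $m$ copies of $\vecg_i$ (resp.\ $\vecg_j$) so that this projective class has a unique size, then invoke \cref{lem:col-frequency-LCE}. It is actually more complete than the paper's one-line version. The paper simply asserts that an isometry of the padded codes sends copies to copies. You correctly note that, writing $E_i\subseteq[n]$ for the projective class of column $i$ in $\mG$, it only sends $E_i\cup\{n+1,\dots,n+m\}$ onto $E_j\cup\{n+1,\dots,n+m\}$. You then repair this with within-class swaps, which are monomial automorphisms of $\code(\mG_B)$, before restricting to the first $n$ coordinates. One thing should be made explicit: you need the structural form of the lemma, not just equality of the multisets of class sizes. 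Since $\mS\mG_A\mM=\mG_B$ with $\mS$ invertible, the permutation part of $\mu$ carries projective classes of $\mG_A$ onto projective classes of $\mG_B$.

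The step that fails as written is the cost per query. $\mG_A$ and $\mG_B$ are $k\times(n+m)$ matrices over $\F_q$, so writing a query costs $\Theta(k(n+m)\log q)$. Even if $\mG$ is kept in place, the $m$ appended columns must be rewritten for every new $j$, which costs $\Theta(km\log q)$, not $O(n\log q)$. Both $k$ and $m$ can be linear in $n$ while $\Theta(n^2)$ queries are still made, for example when a projective class of size about $n/2$ coexists with about $n/2$ singleton orbits. So the honest bound from this algorithm is $O(kn^3\log q)\subseteq O(n^4\log q)$, and the $n^4$ order is attained, rather than $O(n^3\log q)$. The paper's analysis of the \APART case, to which its proof of this proposition defers, makes the same undercount, so you have not deviated from the paper here. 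Still, to claim $O(n^3\log q)$ you would need a succinct encoding of the padded block (e.g.\ passing $\vecg_j$ together with the multiplicity $m$); otherwise the time bound should be stated as $O(kn^3\log q)$.
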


\begin{proof}
    The proof is identical to that of \cref{alg:APART->PCE}, but replacing \cref{lem:col-frequency} with the analogous result in \cref{lem:col-frequency-LCE}.
\end{proof}

\begin{algorithm}[th]
    \DontPrintSemicolon
    \SetKwInOut{Input}{Input}
    \SetKwInOut{Output}{Output}
    \SetKwFunction{Oracle}{LCE}

    \caption{Reduction from \MAPART to \LCE}
    \label{alg:MAPART->LCE}
    
    \Input{A generator matrix $\mG \in \F_q^{k \times n}$ of a code $\code \subseteq \F_q^n$.}
    \Output{The set of monomial index orbits $\set{X_1,\ldots,X_\ell}$ that partition $[n]$.}
    \hrulefill
    
    \BlankLine
     $U \defn \set{1, 2, \dots, n}$. \tcp{set of unassigned indices}
    $X_1,\ldots,X_n \gets \emptyset$. \tcp{monomial index orbits}
    $m \defn \text{maximum projective column multiplicity in } \mG$.
    
    \BlankLine
    \While{$U \neq \emptyset$}{
        $i \gets \min\set{U}$. \tcp{smallest unassigned coord}
        $X_i \gets X_i \cup \set{i}$. \tcp{maps to itself under identity}
        $U \gets U \setminus \set{i}$. \tcp{mark $i$ as assigned}     
        \BlankLine
        \For{$j \in U$}{
            $\mG_A \gets \bracks{\mG \mid \mG_i}$. \tcp{append $m$ copies of column $i$}
            $\mG_B \gets \bracks{\mG \mid \mG_j}$. \tcp{append $m$ copies of column $j$}

            \BlankLine
            \tcp{Test if $j$ is in the monomial orbit of $i$ using the \LCE oracle.}
            \If{\Oracle{$\mG_A$, $\mG_B$} = \emph{\textsc{Yes}}}{
                $X_i \gets X_i \cup \set{j}$. \tcp{add $j$ to spatial orbit of $i$}
                $U \gets U \setminus \set{j}$. \tcp{mark $j$ as assigned}
            }
        }
    }
    $\ell \defn$ number of non-empty sets $X_i$.\;
    \BlankLine
    \Return $X_1,\ldots,X_\ell$.
\end{algorithm}

\section{\Acount and \MAcount}%
\label{sec:ACOUNT}

In this section, we prove that determining whether two codes are equivalent is a polynomially equivalent problem to computing the cardinalities of their automorphism groups.

\begin{thm}[\PCE reduces to \ACOUNT, \LCE reduces to \MACOUNT]
    \label{thm:PCE/LCE->ACOUNT/MACOUNT}
    There is a reduction from \PCE (resp. \LCE) on any pair of $n$-length codes to \ACOUNT (resp. \MACOUNT) which runs in $O(n^3 \log(q))$ time and makes at most $3n^2$ oracle calls.
\end{thm}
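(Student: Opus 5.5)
We follow the template of \cref{prop:alg-PCE->APART}, replacing the orbit test with a counting test via \cref{lem:ratio-Aut} (resp.\ \cref{lem:ratio}). The plan has three steps.

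\textbf{Step 1 (decomposition).} Run \cref{alg:decompose} on $\mG_1$ and $\mG_2$. This yields decompositions $\pi_1(\code_1)=\bigoplus_{i=1}^{m_1}A_i$ and $\pi_2(\code_2)=\bigoplus_{j=1}^{m_2}B_j$ into indecomposable summands, in $O(n^3\log(q))$ time. If $m_1\neq m_2$, output \textsc{No}; this is justified by the uniqueness part of \cref{lem:Slepian} (resp.\ \cref{lem:monomial_decomp}). Otherwise set $m \defn m_1=m_2$.

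\textbf{Step 2 (counting test).} For each pair $(i,j)\in[m]^2$:
\begin{itemize}
\item If $A_i$ and $B_j$ have different lengths or dimensions, they are not equivalent; record no edge and make no oracle call.
\item Otherwise, make three \ACOUNT{} queries, on $A_i$, on $B_j$, and on $A_i\oplus B_j$. The generator matrix of the direct sum is the block-diagonal matrix of \cref{def:direct-sum}.
\item Add the edge $(a_i,b_j)$ if and only if $\abs{\Aut(A_i\oplus B_j)} = 2\,\abs{\Aut(A_i)}\,\abs{\Aut(B_j)}$.
\end{itemize}
Correctness of this test is exactly \cref{lem:ratio-Aut}, since $A_i$ and $B_j$ are indecomposable and the ratio is $2$ precisely when $A_i\cong B_j$. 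This gives at most $3m^2\le 3n^2$ oracle calls. One can cache $\abs{\Aut(A_i)}$ and $\abs{\Aut(B_j)}$ to reduce the count to $m^2+2m$, but the stated bound of $3n^2$ already suffices. For the monomial case, use \MACOUNT{} and \cref{lem:ratio} instead.

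\textbf{Step 3 (matching).} The edges now encode exactly the relation $A_i\cong B_j$. Run Hopcroft--Karp on the resulting bipartite graph and output \textsc{Yes} if and only if it has a perfect matching. Correctness follows from \cref{lem:Slepian}: $\code_1\cong\code_2$ if and only if there is a bijection $\sigma$ with $A_i\cong B_{\sigma(i)}$ for all $i$. One direction is uniqueness of the decomposition; the other is immediate by gluing the summand isomorphisms and composing with $\pi_1$ and $\pi_2^{-1}$. The monomial case uses \cref{lem:monomial_decomp} in the same way.

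\textbf{Complexity.} Building all direct-sum queries takes polynomial time. Each query has size $O(n^2\log q)$, but only codes of total length $\le n$ for a fixed $i$ are produced across $j$ in any single row. Matching costs $O(n^{2.5})$, so decomposition dominates at $O(n^3\log q)$. The oracle outputs are integers up to $n!$ (resp.\ $(q-1)^n n!$), with $O(n\log n + n\log q)$ bits, and comparing them is cheap. I would absorb these comparisons into the stated bound, charging the multiplication $2\abs{\Aut(A_i)}\abs{\Aut(B_j)}$ as polynomial arithmetic.

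\textbf{Expected obstacle.} The main point to get right is that \cref{lem:ratio-Aut} applies only to two indecomposable codes. This is why decomposing first is essential, rather than applying a ratio test to $\code_1\oplus\code_2$ directly: for decomposable codes the ratio can take other values.

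A further subtlety is the length-mismatch case. If $A_i$ and $B_j$ have different lengths, the ratio is $1$, which is consistent, but the lemma as stated assumes both codes lie in $\F_q^n$. I would note that its proof does not use equal lengths, or simply handle unequal lengths by the explicit pre-check in Step 2.
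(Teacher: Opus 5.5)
Your proposal is correct and follows essentially the same route as the paper: decompose both codes with \cref{alg:decompose}, reject if the summand counts differ, test each summand pair with three counting queries and the ratio of \cref{lem:ratio-Aut} (resp.\ \cref{lem:ratio}), and accept if and only if the summands can be perfectly paired. The one difference is the pairing step: the paper matches greedily, removing each matched $B_j$ and breaking, which is valid because equivalence is transitive and so the test graph is a disjoint union of complete bipartite blocks, whereas you build the full graph and run Hopcroft--Karp as in the paper's \APART reduction, which needs no such observation but tests all $m^2$ pairs rather than at most $m(m+1)/2$ (both within $3n^2$ calls).
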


\begin{thm}[\ACOUNT reduces to \PCE, \MACOUNT reduces to \LCE]
    \label{thm:ACOUNT/MACOUNT->PCE/LCE}
    There is a reduction from \ACOUNT (resp. \MACOUNT) on any $n$-length $q$-ary code to \PCE (resp. \LCE) which runs in $O(n^6 \log q)$ time and makes at most $O(n^2)$ oracle calls.
\end{thm}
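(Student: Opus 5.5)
The plan is to compute $\abs{\Aut(\code)}$ from the setwise stabilizer chain, using \cref{lem:setwise-stab-chain-order}:
\[
\abs{G} = \prod_{i=1}^{s} \abs{G^{(E_{i-1})} : G^{(E_i)}}\cdot (m_i!).
\]
The classes $E_1,\ldots,E_s$ and the factors $m_i!$ can be read off $\mG$ directly, so only the indices need the \PCE oracle.

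\textbf{Structural fact.} First I would show that every $\pi\in\Aut(\code)$ permutes the redundancy classes. If $\mS\mG\mP=\mG$, then the columns of $\mS\mG$ are images of the columns of $\mG$ under an injective linear map. Hence equal columns stay equal and distinct columns stay distinct, so $\pi$ maps each $E_i$ onto some $E_j$ with $m_j=m_i$. By orbit--stabilizer, $\abs{G^{(E_{i-1})}:G^{(E_i)}}$ then equals the number of classes $E_j$ such that some $\pi\in G^{(E_{i-1})}$ has $\pi(E_i)=E_j$. Elements of $G^{(E_{i-1})}$ stabilize $E_1,\ldots,E_{i-1}$, so only $j\ge i$ need to be tested.

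\textbf{Oracle test.} For a pair $i\le j$ I would build two augmented matrices, much like \cref{alg:APART->PCE} but with several marks at once. Let $m$ be the maximum column multiplicity of $\mG$.
\begin{itemize}
\item For each $l<i$, append to $\mG$ a block of copies of $\vecg_{t_l}$ so that the total frequency of that column is $(l+1)(m+1)$.
\item In $\mG_A$, additionally raise the frequency of $\vecg_{t_i}$ to $(i+1)(m+1)$; in $\mG_B$, raise the frequency of $\vecg_{t_j}$ to the same value.
\end{itemize}
These frequencies are pairwise distinct and exceed every unmarked frequency, which is at most $m$.

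I would then argue that $\code(\mG_A)\cong\code(\mG_B)$ if and only if some $\pi\in G^{(E_{i-1})}$ maps $E_i$ to $E_j$.
\begin{itemize}
\item \emph{Forward.} By the structural fact applied to the augmented codes (the frequency argument underlying \cref{lem:col-frequency}), any isomorphism maps each marked class to the class of the same frequency. Its restriction to the original $n$ coordinates is then an automorphism of $\code$ with the required property. This is the step I expect to be the main obstacle: one must check that restricting to the original coordinates really yields an element of $\Aut(\code)$, i.e.\ that the appended duplicate columns cannot absorb any freedom. Since duplicates add no rank and $\mS$ is shared, the same $\mS$ should witness the restriction.
\item \emph{Backward.} An automorphism $\pi$ with these properties extends by sending the appended copies of $\vecg_{t_l}$ to themselves and those of $\vecg_{t_i}$ to those of $\vecg_{t_j}$. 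This is consistent because $\mS\vecg_{t_l}$ is a column of class $E_l$, namely $\vecg_{t_l}$ itself.
\end{itemize}
Counting the \textsc{Yes} answers for $j\ge i$ gives the $i$-th index. There are $O(s^2)=O(n^2)$ calls in total. The augmented matrices have length $O(n\cdot nm)=O(n^3)$, and writing them out over all calls accounts for the $O(n^6\log q)$ running time.

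\textbf{Monomial case.} For \MACOUNT I would use projective classes and \cref{lem:mon-setwise-stab-chain-order}. The number $c$ of indecomposable summands is obtained from \cref{alg:decompose}. The marking gadgets are the same, now with frequencies counted up to scaling and \cref{lem:col-frequency-LCE} replacing \cref{lem:col-frequency}. The one change in the extension argument is that a monomial $\pi$ scales columns, so the appended copies must be mapped to the correspondingly scaled representatives. This is allowed because \LCE permits scaling.
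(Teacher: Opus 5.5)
Your proposal is correct and is essentially the paper's argument (\cref{alg:ACOUNT->PCE,alg:MACOUNT->LCE}). You traverse the setwise stabilizer chain, pin $E_1,\dots,E_{i-1}$ and force $E_i\mapsto E_j$ with frequency gadgets, count \textsc{Yes} answers to get each index, and use projective classes with the $(q-1)^c$ factor for \MACOUNT. The paper differs only in two minor ways: it adds one \APART call (itself $O(n^2)$ \PCE calls) to restrict targets to the orbit of $t_i$, and it appends $(l+1)m$ copies regardless of class size.

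Your fix for the restriction step via the shared $\mS$ is the right one: you re-choose a permutation of $[n]$ within classes rather than literally restricting, as in the remapping of \cref{alg:AGEN->PCE}. One small correction: with your normalization, pairs with $m_i\neq m_j$ give matrices of different lengths, so they should simply be answered \textsc{No} without an oracle call.
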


We prove the reduction from \PCE in \cref{subsec:PCE->ACOUNT} and its monomial analog in \cref{subsec:LCE->MACOUNT}.
We prove the reverse reductions in \cref{subsec:ACOUNT->PCE,subsec:MACOUNT->LCE}.

% ===========================================
\subsection{\PCE Reduces to \ACOUNT}
\label{subsec:PCE->ACOUNT}

The reduction from \PCE takes the input codes and decomposes them into direct sums of indecomposable codes.
These can only be permutation equivalent if the decompositions have the same number of summands and the summands can be paired into isomorphic pairs of codes.
The algorithm identifies pairs of isomorphic summands using a ratio test, as given by \cref{lem:ratio-Aut}.

\begin{prop}
    \label{prop:PCE->ACOUNT}
    \cref{alg:PCE->ACOUNT} correctly solves \PCE for any pair of input $n$-length codes in $O(n^3 \log(q))$ time using at most $3n^2$ calls to an \ACOUNT oracle.
\end{prop}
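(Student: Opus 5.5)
The plan follows the structure of \cref{alg:PCE->APART}, replacing the \APART orbit test with a ratio test based on \cref{lem:ratio-Aut}. First, run \cref{alg:decompose} on $\mG_1$ and $\mG_2$. This yields decompositions $\pi_1(\code_1)=\bigoplus_{i=1}^{m_1}A_i$ and $\pi_2(\code_2)=\bigoplus_{j=1}^{m_2}B_j$ into indecomposable summands in $O(n^3\log(q))$ time. If $m_1\neq m_2$, output \textsc{No}; this is justified by the uniqueness part of \cref{lem:Slepian}.

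Otherwise, set $m\defn m_1=m_2$ and build a bipartite graph on $U=\set{a_1,\ldots,a_m}$ and $V=\set{b_1,\ldots,b_m}$. For each pair $(i,j)$, first check whether $A_i$ and $B_j$ have equal length and dimension, since equivalent codes must agree on both. If they do, make three \ACOUNT calls, on $A_i$, on $B_j$, and on $A_i\oplus B_j$ (whose generator matrix is block diagonal as in \cref{def:direct-sum}). Add the edge $(a_i,b_j)$ exactly when
\[
\abs{\Aut(A_i\oplus B_j)} = 2\,\abs{\Aut(A_i)}\cdot\abs{\Aut(B_j)}.
\]
Because $A_i$ and $B_j$ are indecomposable, \cref{lem:ratio-Aut} shows this ratio equals $2$ when $A_i\cong B_j$ and $1$ otherwise. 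Hence edges correspond exactly to equivalent pairs of summands.

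Finally, run Hopcroft--Karp and output \textsc{Yes} if and only if there is a perfect matching. Correctness follows as in \cref{prop:alg-PCE->APART}. If $\code_1\cong\code_2$, uniqueness in \cref{lem:Slepian} gives a bijection $\sigma$ with $A_i\cong B_{\sigma(i)}$, which is a perfect matching. Conversely, a perfect matching gives summand-wise isomorphisms. Their block direct sum, composed with $\pi_1$ and $\pi_2^{-1}$, is an isomorphism $\code_1\to\code_2$.

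For complexity, there are $m^2\le n^2$ pairs with three calls each, giving at most $3n^2$ oracle calls. One could save calls by caching $\abs{\Aut(A_i)}$ and $\abs{\Aut(B_j)}$, but the stated bound does not require this. Constructing each direct sum takes $O(n^2\log(q))$ time. Summed over pairs this is $\sum_{i,j}(n_i+n_j)^2 \le O(n^2)\cdot$ small factors; since $\sum_i n_i=n$, the bound $\sum_{i,j}(k_i+k_j)(n_i+n_j)=O(n^2)$ keeps the total within $O(n^3\log(q))$. Matching takes $O(n^{2.5})$ time.

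The main obstacle is the accounting in the ratio test. The oracle values must be compared exactly, using integer arithmetic on numbers of at most $O(n\log n)$ bits, which is cheap. The remaining subtlety is a degenerate case: the index sets of $A_i$ and $B_j$ must be disjoint supports with no zero columns, as assumed after \cref{def:direct-sum}, so that \cref{cor:no_partial-Aut} and \cref{lem:ratio-Aut} apply directly to $A_i\oplus B_j$. Zero columns would form their own trivial summands. These should be stripped beforehand, and their counts compared separately, following Lemma 3.4 of \cite{CSV25}.
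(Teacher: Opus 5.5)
Your argument is correct as a proof that \PCE reduces to \ACOUNT within the stated bounds, but it certifies a different algorithm from \cref{alg:PCE->ACOUNT}. You build the full bipartite equivalence graph over all $m^2$ pairs and run Hopcroft--Karp, as in \cref{alg:PCE->APART}. The paper's algorithm instead does a greedy first-fit matching. For each $A_i$ it scans the still-unused $B_j$, takes the first one with ratio $R=2$, and removes it. It returns \textsc{No} as soon as some $A_i$ finds no partner.

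Since the statement is about that specific algorithm, the step you would need to add is why greedy never gets stuck when $\code_1\cong\code_2$. The reason is transitivity of $\cong$. By \cref{lem:ratio-Aut} the edge relation is exactly $A_i\cong B_j$, so the graph is a disjoint union of complete bipartite blocks, one per equivalence class of indecomposable summands. By the uniqueness part of \cref{lem:Slepian}, each class contains equally many $A$'s and $B$'s. Hence, whenever $A_i$ is processed, an unused partner in its class remains, and any choice preserves this invariant. The paper's proof calls the loop an exhaustive search and leaves this point implicit.

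The trade-off is as follows. The greedy version makes at most $3m(m+1)/2$ calls and needs no matching subroutine. Yours spends all $3m^2$ calls (or $m^2+2m$ with the caching you mention), but its correctness does not rely on the edge relation being transitive. The \textsc{No} direction is the same in both: a full matching of equivalent summands assembles into an isomorphism.

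One small accounting slip. Over all pairs, $\sum_{i,j}(k_i+k_j)(n_i+n_j)$ is not $O(n^2)$ in general. For example, one summand of length $n/2$ and dimension $n/4$ together with $n/2$ summands of length $1$ gives $\Theta(n^3)$. The sum is $O(n^2)$ only after restricting to the pairs with $n_i=n_j$ that pass your filter. Either way, the total stays $O(n^3\log(q))$.
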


\begin{algorithm}[th]
    \SetAlgoLined
    \DontPrintSemicolon
    \SetKwInOut{Input}{Input}
    \SetKwInOut{Output}{Output}
    \SetKwFunction{ACOUNT}{ACOUNT}
    
    \Input{Codes $\code_1, \code_2 \subseteq \F_q^n$.}
    \Output{\textsc{Yes} if $\code_1 \cong \code_2$, \textsc{No} otherwise.}
    \hrulefill
    
    \BlankLine
    \tcp{Step 1: Code decomposition.}
    Run \cref{alg:decompose} on each input code to obtain the decompositions $\code_1 \cong \bigoplus_{i=1}^m A_i$ and $\code_2 \cong \bigoplus_{j=1}^{m'} B_j$,\\
    for some $[n_{A_i},k_{A_i}]$-codes $\set{A_i}_{i=1}^m$ and $[n_{B_j},k_{B_j}]$-codes $\set{B_j}_{j=1}^{m'}$.\;
    
    \BlankLine
    \tcp{Step 2: Match summands between $\code_1$ and $\code_2$.}
    \If{$m \neq m'$}{
        \Return{\textsc{No}.} 
    }
    $I \gets \{1, 2, \dots, m\}$ 
    
   \For{$i \gets 1$ \KwTo $m$}{
    match-found $\gets \textsc{No}$\;
    \For{\textbf{each} $j \in I$}{
        \tcp{Test for isomorphism using the ACOUNT ratio test.}
        Run \ACOUNT oracle to obtain the cardinalities $|\Aut(A_i \oplus B_j)|, |\Aut(A_i)|, |\Aut(B_j)|$.

        \BlankLine
        Compute $R \defn \abs{\Aut(A_i \oplus B_j)}\ /\ \parens{\abs{\Aut(A_i)} \cdot \abs{\Aut(B_j)}}$.\;
        
        \If{$R = 2$}{
            $I \gets I \setminus \{j\}$\;
            match-found $\gets \textsc{Yes}$\;
            \textbf{break}.
        }
    }
    \If{match-found $= \textsc{No}$}{
        \Return{\textsc{No}}
    }
}
\BlankLine
\Return{\textsc{Yes}}
    
    \caption{Reduction from PCE to ACOUNT}
    \label{alg:PCE->ACOUNT}
\end{algorithm}

\begin{proof}
    Let $\code_1, \code_2\subseteq \F_q^n$ be the two input codes
    \smallskip
     
    \emph{(Correctness)} To prove correctness, we must show that the algorithm outputs \textsc{Yes} if given a pair of equivalent codes, and outputs \textsc{No} otherwise.
    Consider the two possible cases.
    \smallskip

    $(\code_1 \cong \code_2)$: 
    By \cref{lem:Slepian}, the decompositions of $\code_1$ and $\code_2$ found in Step 1 are unique up to permutation equivalence.
    Since the codes are permutation equivalent, we must have $m=m'$ and $A_i \cong B_{\sigma(i)}$ for all $i\in [m]$ for some permutation $\sigma$ on $[m]$.
    Step 2 of the algorithm will implicitly match the summands according to this unknown $\sigma$.
    Since $m=m'$, the algorithm bypasses the first check and implements the nested for-loop to test isomorphism for each (non-paired) pair of summands, $A_i$ and $B_j$.
    In this loop, the \ACOUNT oracle is called three times to obtain the relevant cardinalities and their ratio is computed.
    By \cref{lem:ratio-Aut}, the value of this ratio correctly indicates whether the pair of summands is equivalent: the value is $2$ if and only if the two summands form an equivalent pair.
    If multisets $\set{A_1,\ldots,A_m}$ and $\set{B_1,\ldots,B_m}$ are paired perfectly in this way, then \emph{match-found} = \textsc{Yes} when the algorithm terminates.
    Since the nested for-loop exhaustively searches for a valid matching, and we know such a matching $\sigma$ exists, the algorithm will output \textsc{Yes}.    
        
    $(\code_1 \not\cong \code_2)$:
    By \cref{lem:Slepian}, the decompositions of $\code_1$ and $\code_2$ found in Step 1 are unique up to permutation equivalence.
    Suppose for contradiction that the algorithm outputs \textsc{Yes} in this case.
    Then, it must have found a matching $\mu$ between the summands in the multisets $\set{A_1,\ldots,A_m}$ and $\set{B_1,\ldots,B_{m'}}$ such that $A_i \cong B_{\mu(i)}$ for all $i\in [m]$.
    But this would imply that $m=m'$ and 
    \begin{equation*}
        \code_1 \cong A_1 \oplus \ldots \oplus A_m
        \cong B_1 \oplus \ldots \oplus B_m \cong \code_2,
    \end{equation*}
    which contradicts the assumption that the codes are not permutationally equivalent.
    Therefore, the algorithm must return \textsc{No} in this case.
    \medskip

    \emph{(Complexity)}
    By \cref{lem:alg-decompose}, decomposing both $n$-length $q$-ary input codes takes $2\cdot O(n^3\log(q)) = O(n^3 \log(q))$ time.
    The matching process in Step 2 involves a nested for-loop, where the outer loop iterates over the $m$ summands, and the inner loop iterates over the remaining $\leq m$ unmatched summands.
    In the worst case, this requires $m + (m-1) + \ldots + 1 = m(m+1)/2$ iterations. 
    The \ACOUNT oracle is called at most three times per iteration, so at most $3\cdot m(m+1)/2 \leq 3m^2 \leq 3n^2$ oracle calls are made in total.
    Constructing the direct sum code in each iteration requires $O((n_{A_i}+n_{B_j}) \log(q)) = O(n \log(q))$ time, amounting to a total of $O(m^2 n \log(q)) = O(n^3\log(q))$ time across the entire algorithm.
    Therefore, the total runtime is bounded by $O(n^3 \log(q))$.
\end{proof}

% ===========================================
\subsection{\LCE Reduces to \MACOUNT}
\label{subsec:LCE->MACOUNT}

Now we prove the monomial analog of \cref{prop:PCE->ACOUNT}.
The reduction algorithm is nearly identical to \cref{alg:PCE->ACOUNT} but uses the properties of monomial equivalence and a \MACOUNT oracle instead of \ACOUNT.
For completeness, we include the algorithm and its proof here.

\begin{prop}
    \label{prop:LCE->MACOUNT}
    \cref{alg:LCE->MACOUNT} correctly solves \LCE for any pair of input $n$-length $q$-ary codes in $O(n^3 \log(q))$ time, using at most $3n^2$ calls to an \MACOUNT oracle.
\end{prop}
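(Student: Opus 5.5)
The plan is to mirror the proof of \cref{prop:PCE->ACOUNT} exactly, replacing each permutation-specific ingredient by its monomial counterpart. \Cref{alg:LCE->MACOUNT} is assumed to be \cref{alg:PCE->ACOUNT} with every \ACOUNT call replaced by a \MACOUNT call. It therefore computes the ratio $R \defn \abs{\MAut(A_i \oplus B_j)} / \parens{\abs{\MAut(A_i)}\cdot\abs{\MAut(B_j)}}$ and matches $A_i$ with $B_j$ when $R=2$. The first step runs \cref{alg:decompose} on both inputs, giving $\pi_1(\code_1)=\bigoplus_{i=1}^m A_i$ and $\pi_2(\code_2)=\bigoplus_{j=1}^{m'} B_j$ with all summands indecomposable. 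Since a permutation is in particular a monomial map, $\code_1\cong_\cM \bigoplus A_i$ and $\code_2 \cong_\cM \bigoplus B_j$.

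For correctness I would split into two cases. If $\code_1\cong_\cM\code_2$, then transitivity gives $\bigoplus A_i \cong_\cM \bigoplus B_j$. The uniqueness part of \cref{lem:monomial_decomp} then yields $m=m'$ and a bijection $\sigma$ with $A_i\cong_\cM B_{\sigma(i)}$. By \cref{lem:ratio}, each oracle-based test returns $R=2$ exactly on monomially equivalent pairs.

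The greedy matching never gets stuck. The reason is that $\cong_\cM$ is an equivalence relation, so the summands split into equivalence classes, and both multisets have the same class counts. Whenever $A_i$ is greedily matched to some $B_j$ in its class, the remaining counts still agree. So every outer iteration finds a partner and the algorithm outputs \textsc{Yes}.

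If $\code_1\not\cong_\cM\code_2$, suppose the algorithm output \textsc{Yes}. Then $m=m'$ and, by \cref{lem:ratio}, it found a bijection $\tau$ with $A_i\cong_\cM B_{\tau(i)}$. Composing the monomial maps blockwise gives $\bigoplus A_i\cong_\cM\bigoplus B_j$, hence $\code_1\cong_\cM\code_2$, a contradiction.

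The complexity count is verbatim from \cref{prop:PCE->ACOUNT}. Decomposition costs $O(n^3\log q)$ by \cref{lem:alg-decompose}. There are at most $m(m+1)/2$ inner iterations, each making three \MACOUNT calls, so at most $3n^2$ calls in total. Forming each direct sum costs $O(n\log q)$, for $O(n^3\log q)$ overall.

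The only point needing genuine care is applying \cref{lem:ratio}: it requires both arguments to be indecomposable in the monomial sense. \Cref{alg:decompose} certifies indecomposability via connectivity of the support graph, which is the permutation notion. I would bridge this with \cref{lem:closure-properties}(ii) and \cref{lem:closure-equivalence}. Moreover, connectivity of the support graph is invariant under column scaling, since scaling columns does not change the zero pattern of the RREF up to row rescaling. Hence a permutation-indecomposable summand is also monomially indecomposable, and \cref{lem:ratio} applies.
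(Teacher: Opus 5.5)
Your proposal is correct and follows the paper's own argument. The paper simply declares the proof identical to that of \cref{prop:PCE->ACOUNT}, with \cref{lem:Slepian,lem:ratio-Aut} replaced by \cref{lem:monomial_decomp,lem:ratio}; your extra steps make explicit two points the paper leaves implicit: the equivalence-class invariant showing the greedy matching never gets stuck, and the check that the summands from \cref{alg:decompose} are monomially indecomposable so that \cref{lem:ratio} applies (the support-graph invariance under monomial maps suffices for this, and the closure lemmas are not needed).
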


\begin{algorithm}[H]
    \SetAlgoLined
    \DontPrintSemicolon
    \SetKwInOut{Input}{Input}
    \SetKwInOut{Output}{Output}
    \SetKwFunction{ACOUNT}{ACOUNT}
    
    \Input{Codes $\code_1, \code_2 \subseteq \F_q^n$.}
    \Output{\textsc{Yes} if $\code_1 \cong_\cM \code_2$, \textsc{No} otherwise.}
    \hrulefill
    
    % \BlankLine
    \tcp{Step 1: Code decomposition.}
    Run \cref{alg:decompose} on each input code to obtain the decompositions $\code_1 \cong \bigoplus_{i=1}^m A_i$ and $\code_2 \cong \bigoplus_{j=1}^{m'} B_j$,\\
    for some $[n_{A_i},k_{A_i}]$-codes $\set{A_i}_{i=1}^m$ and $[n_{B_j},k_{B_j}]$-codes $\set{B_j}_{j=1}^{m'}$.\;
    
    \BlankLine
    \tcp{Step 2: Match summands between $\code_1$ and $\code_2$.}
    \If{$m \neq m'$}{
        \Return{\textsc{No}.} 
    }
    $I \gets \{1, 2, \dots, m\}$ 
    
    \For{$i \gets 1$ \KwTo $m$}{
        match-found $\gets \textsc{No}$.\;
        \For{\textbf{each} $j \in I$}{
            \tcp{Test for isomorphism using the \MACOUNT ratio test.}
            Run \MACOUNT oracle to obtain the cardinalities $|\MAut(A_i \oplus B_j)|, |\MAut(A_i)|, |\MAut(B_j)|$.

            \BlankLine
            Compute $R \defn \abs{\MAut(A_i \oplus B_j)}\ /\ \parens{\abs{\MAut(A_i)} \cdot \abs{\MAut(B_j)}}$.\;
            
            \If{$R = 2$}{
                $I \gets I \setminus \{j\}$ \;
                match-found $\gets \textsc{Yes}$\;
                \textbf{break} 
            }
        }
        \If{match-found $= \textsc{No}$}{
        \Return{\textsc{No}}
    }
    }
    \BlankLine
    \Return{\textsc{Yes}}
    
    \caption{Reduction from LCE to MACOUNT}
    \label{alg:LCE->MACOUNT}
\end{algorithm}

\begin{proof}
    The proof is identical to that of \cref{prop:PCE->ACOUNT}, but replacing \cref{lem:Slepian,lem:ratio-Aut} with the analogous results in \cref{lem:monomial_decomp,lem:ratio}.
\end{proof}

% ===========================================
\subsection{\Acount to \PCE}
\label{subsec:ACOUNT->PCE}

For the reduction from \Acount to \PCE, we use a setwise stabilizer chain for the permutation automorphism group of the input code, as defined in \cref{def:setwise-stab-chain}.
For any code $\code$ with generator matrix $\mG$, let $G \defn \Aut(\code)$ denote its permutation automorphism group and let $G = G^{(E_0)} \geq G^{(E_1)} \geq \dots \geq G^{(E_s)}$ be the setwise stabilizer chain given by $\mG$.
To compute the cardinality $\abs{G}$, we use the expression from \cref{lem:setwise-stab-chain-order},
which states
 \[
    \abs{G} = \prod_{i=1}^{s} \abs[\big]{G^{(E_{i-1})} : G^{(E_i)}} \cdot \parens{m_i!}.
\]
To compute $\abs[\big]{G^{(E_{i-1})} : G^{(E_i)}}$ for every $1\leq i \leq s$, we count the number automorphisms that map indices in $E_i$ to some other $E_j$ under the condition that all indices in $E_0,\ldots,E_{i-1}$ are mapped back into their respective sets.
To count all such automorphisms, we iterate through every index $j$ in the orbit of $i$ and test whether there is an automorphism that maps $i$ to $j$ using the \PCE oracle.
This technique loosely resembles that used in the search-to-decision \PCE reduction from \cite{BM23}, which uses a frequency argument to determine the existence of a permutation map between pairs of indices.
The key difference in our algorithm is that that all previous indices are fixed and mapped to themselves in each iteration.
The algorithm relies on the crucial fact that two generator matrices can generate equivalent codes only if they have the same frequency distribution of columns.

\begin{prop}
    \label{prop:alg-ACOUNT->PCE}
    \cref{alg:ACOUNT->PCE} correctly solves \Acount for any $q$-ary $n$-length code in $O(n^6\log(q))$ time using at most $O(n^2)$ calls to a \PCE oracle.
\end{prop}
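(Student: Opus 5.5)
We need to compute each index $\abs{G^{(E_{i-1})} : G^{(E_i)}}$ in \cref{lem:setwise-stab-chain-order} with a \PCE oracle. The plan has four steps.

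\textbf{Step 1 (preprocessing).} Compute the redundancy classes $E_1,\ldots,E_s$ of identical columns of $\mG$ and their sizes $m_i$. This costs $O(kn^2\log q)$ by sorting or hashing the columns.

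\textbf{Step 2 (interpret the indices).} By the orbit--stabilizer theorem, $\abs{G^{(E_{i-1})} : G^{(E_i)}}$ is the length of the orbit of the set $E_i$ under the action of $G^{(E_{i-1})}$ on blocks. Every automorphism maps a redundancy class onto a redundancy class of the same size, since it permutes identical columns to identical columns up to the change of basis $\mS$. So $G^{(E_{i-1})}$ permutes the classes. The index therefore equals the number of classes $E_j$, with $j\ge i$ and $m_j=m_i$, for which some automorphism stabilizes $E_1,\ldots,E_{i-1}$ setwise and maps $E_i$ onto $E_j$. Classes $E_j$ with $j<i$ are excluded automatically, because they are stabilized.

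\textbf{Step 3 (one oracle call per pair $(i,j)$).} Encode the constraint ``$E_1,\ldots,E_{i-1}$ fixed setwise and $E_i\mapsto E_j$'' by multiplicity tagging, in the spirit of \cref{alg:APART->PCE} and using \cref{lem:col-frequency}. Let $M$ be the maximum column multiplicity. Build $\mG_A$ from $\mG$ by appending $\ell M$ extra copies of the representative column $\vecg_{t_\ell}$ for each $\ell<i$, and $iM$ extra copies of $\vecg_{t_i}$. Build $\mG_B$ the same way, except that the $iM$ copies are of $\vecg_{t_j}$. The pinned classes then have pairwise distinct multiplicities. These lie in $(\ell M, (\ell+1)M]$, and all of them exceed $M$, which bounds every untagged class. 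Any permutation equivalence must therefore send each tagged class of $\mG_A$ to the class with the same multiplicity in $\mG_B$, and it must send untagged classes to untagged classes.

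The converse also holds. An automorphism of $\code(\mG)$ with the required block behaviour extends to an equivalence between $\code(\mG_A)$ and $\code(\mG_B)$ by sending appended copies to appended copies. Conversely, an equivalence restricts to an automorphism of $\code$ after discarding the appended columns, because appending copies of existing columns does not change the code up to the obvious projection. There is one subtle point: the image of $\vecg_{t_\ell}$ need not equal $\vecg_{t_\ell}$ itself, only lie in the class with the same multiplicity. This is harmless, because we test codes up to change of basis and columns are identified only through the class structure. I expect this bookkeeping to be the main obstacle. The cleanest way through is to argue at the level of the induced action on classes, not on literal column vectors.

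\textbf{Step 4 (assemble and count cost).} Set $\abs{G^{(E_{i-1})}:G^{(E_i)}} = 1 + \#\{j>i : \text{oracle says \textsc{Yes}}\}$ and multiply according to \cref{lem:setwise-stab-chain-order}. The number of oracle calls is at most $\sum_i (s-i) = O(n^2)$. The constructed codes have length $O(n\cdot sM)$, which may be as large as $O(n^2)$ or more. This accounts for the $O(n^6\log q)$ time bound: $O(n^2)$ constructions, each of size $k\times O(n^3)$ in the worst case, after first reducing $M$ and $s$ so that $sM=O(n)$. Finally, multiplying the integers, of bit length $O(n\log n)$, is negligible.
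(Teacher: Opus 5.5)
Your proposal is correct and follows essentially the same route as the paper's proof: the same setwise stabilizer chain, the same tagging of the $\ell$-th fixed class with $\ell M$ extra copies and of the tested pair with $iM$ copies, and each index computed as one plus the number of \textsc{Yes} answers; the only difference is that the paper first calls \APART to restrict the candidates $j$ to the orbit of $t_i$, an optimization you can skip since testing all $j>i$ already costs only $O(n^2)$ \PCE calls. The ``subtle point'' you flag does not arise, because the only column of $\mG_B$ whose multiplicity lies in $(\ell M,(\ell+1)M]$ is $\vecg_{t_\ell}$ itself, so the change of basis fixes it exactly and preserves the column multiset of $\mG$; likewise the $O(n^3)$ length bound follows directly from $M,s\le n$, with no need for $sM=O(n)$ (which is false in general).
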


\begin{algorithm}[th]
    \DontPrintSemicolon
    \SetKwInOut{Input}{Input}
    \SetKwInOut{Output}{Output}
    \SetKwFunction{Oracle}{PCE}

    \caption{Reduction from ACOUNT to PCE}
    \label{alg:ACOUNT->PCE}
    
    \Input{A generator matrix $\mG \in \F_q^{k \times n}$ of a linear code $\code \subseteq \F_q^n$.}
    \Output{The order of the automorphism group $\abs{\Aut(\code)} \in \N$.}
    \hrulefill
    
    \BlankLine
    \tcp{Partition index set into classes of identical columns and construct set of unique column representatives.}
    Let $\set{\vecg_1,\ldots,\vecg_n}$ be the set of column vectors of $\mG$.\;
    $\cE \defn \emptyset$. \tcp{collection of redundancy sets}
    $\cG \defn \emptyset$. \tcp{keep track of unique columns}
    $B' \defn \emptyset$. \tcp{indices of first appearances}

    \BlankLine
    \For{$i \gets 1$ \KwTo $n$}{
    \tcp{If a column matches a previously seen column, sort it into the same class.}
        \If{$\vecg_i \in \cG$}{
            \For{$E\in \cE$}{
                $\vecv \gets E[1]$.\;
                \If{$\vecg_i = \vecv$}{
                    $E \gets E \cup \set{\vecg_i}$.\;
                }
            }
        }
        \tcp{Otherwise, if a column is new, create a new class.}
        \Else{
            $E \defn \set{\vecg_i}$. \tcp{if }
            $\cG \gets \cG \cup \set{\vecg_i}$\;
            $\cE \gets \cE \cup \set{E}$\;
            $B' \gets B' \cup \set{i}$. \tcp{record index of first appearance}
        }
    }
    $m \gets 0$. \tcp{maximum column multiplicity in $\mG$}
    $N \gets 1$. \tcp{initialize the automorphism count} % (start with the trivial one)}
    \tcp{compute the maximum column multiplicity}
    \For{$E \in \cE$}{
        $N \gets N \cdot |E|!$\;
        \If{$|E| > m$}{
            $m \gets |E|$\;
        }
    }
    \BlankLine
    \tcp{Find orbit partition of index set via \Apart oracle.}
    Run $\APART(\mG)$ to obtain coordinate orbits $\set{X_1,\ldots,X_\ell}$.\;

    \BlankLine
    $\cG \gets \emptyset$. \tcp{reset to track fixed columns in stabilizer chain}
    $\Gpin \gets \mG$. \tcp{initialize global base matrix}
    \emph{(continued on next page)}   
\end{algorithm}
% manual page break
\begin{algorithm}[th]
    \DontPrintSemicolon
    \emph{(continued from previous page)} 
    
    \tcp{Compute $\abs[\big]{G^{(E_{i-1})} : G^{(E_i)}}$ for each consecutive pair of subgroups in the chain.}
    \For{$i \in B'$}{
        $X \defn$ unique orbit in $\set{X_1,\ldots,X_\ell}$ containing $i$.\;
        $d_i \gets 1$. \tcp{initialize local orbit count}
        
        \BlankLine
        \tcp{Check if an automorphism maps column $i$ to non-identical column $j$.}
        \For{$j \in X \cap B'$ \emph{\textbf{where}} $j > i$}{
            $l \gets |\cG|$. \tcp{number of unique classes fixed so far}
            $\mA \defn (l+1)m$ copies of $\vecg_i$.\;
            $\mB \defn (l+1)m$ copies of $\vecg_j$.\;
            $\mG_A \gets \bracks{\Gpin \mid \mA}$.\;
            $\mG_B \gets \bracks{\Gpin \mid \mB}$.\;
            
            \BlankLine           
            \If{\Oracle{$\mG_A$, $\mG_B$} = \emph{\textsc{Yes}}}{
                $d_i \gets d_i + 1$.
            }
        }
        
        \BlankLine
        $N \gets N \cdot d_i$. \tcp{update automorphism count}
        
        \BlankLine
        \tcp{Stabilize set $E_i$ for all future iterations.}
        $\mA \defn (l+1)m$ copies of $\vecg_i$.\;
        $\Gpin \gets \bracks{\Gpin \mid \mA}$.\;
        $\cG \gets \cG \cup \set{\vecg_i}$.\;
    }
    
    \BlankLine
    \Return{$N$}.
\end{algorithm}

\begin{proof}
    Let $\mG \in \F_q^{k \times n}$ be the input generator matrix of the code $\code$, and $G = \Aut(\code)$ denote its automorphism group. 
    Let $\set{\vecg_1,\ldots,\vecg_n}$ denote the columns of $\mG$ in order of appearance.
    We prove that \cref{alg:ACOUNT->PCE} correctly outputs $\abs{G}$ by traversing a setwise stabilizer chain for $G$ and applying \cref{lem:setwise-stab-chain-order}.
    \smallskip

    \emph{(Correctness)}   
    The algorithm first partitions the columns of $\mG$ into disjoint redundancy classes $\cE = \set{E_1, \dots, E_s}$, where each $E_i$ contains identical columns of a particular type, and no two classes contain the same vector.
    For this sorting process, the algorithm identifies the indices of the unique column vectors in $\mG$ in order of first appearance; let $B' = \set{t_1,\ldots,t_s} \subseteq [n]$ denote the elements of this subset.
    Note that each index $t_i \in B'$ defines the redundancy class $E_i \in \cE$.
    By \cref{def:setwise-stab-chain}, this $B'$ defines a setwise stabilizer chain $G \geq G^{(E_1)} \geq \dots \geq G^{(E_s)}$.
    
    In the next for-loop, the algorithm initializes the group order counter $N = \prod_{E \in \cE} \abs{E}!$. 
    By \cref{lem:setwise-stab-chain-order}, this value corresponds to the order of the smallest subgroup $G^{(E_s)}$ in the chain. 
    It represents the core of automorphisms that permute identical columns within their respective classes in $\cE$.

    Next, the algorithm makes a single oracle call to an \Apart oracle to obtain a partition of the index set $[n]$ into orbits under automorphisms in $G$.
    While not strictly necessary, this step provides an optimization that restricts the search space for automorphisms in the last step.
    
    To compute the remaining number of automorphisms in $G$, the algorithm traverses the chain $G \geq G^{(E_1)} \geq \dots \geq G^{(E_s)}$ in descending order by iterating through the elements of $B'$.
    In the $t_i$-th iteration, we compute the quotient index $\abs{G^{(E_{i-1})} : G^{(E_i)}}$ by setwise stabilizing the columns in $E_{i-1}$ and searching for automorphisms that map $E_i$ to some $E_j$ for all larger indices $t_j \in B'$ in the orbit $X$ of $t_i$.
    By testing only the unique representatives of each redundancy class, the algorithm evaluates exactly one target per class. 
    This prevents any redundant counting for identical columns within a target class, as the permutations among columns of a class are already counted by the initial $N = \prod_{i=1}^s \abs{E_i}!$. 
    
    Any automorphism that maps $E_i$ to $E_j$ maps all copies of column $\vecg_{t_i}$ to all copies of column $\vecg_{t_j}$, where $t_i,t_j\in B'$.
    To test for the existence of such automorphisms, we use a similar technique to that in \cite{BM23,CSV25} and augment the matrix $\mG$ with copies of columns to ensure unique frequency.
    In particular, we construct test matrices $\mG_A$ and $\mG_B$ by appending blocks of $(l+1)m$ copies of $\vecg_{t_i}$ and $(l+1)m$ copies of $\vecg_{t_j}$, respectively, to the current stabilized matrix in which column classes $E_1,\ldots,E_{i-1}$ are fixed setwise.
    By definition of $m$ and $l$, no other column in $\mG_A$ and $\mG_B$ appears the same number of times, so $\vecg_{t_i}$ and $\vecg_{t_j}$ have the same unique frequency.
    By \cref{lem:col-frequency}, the \PCE oracle can only respond \textsc{Yes} if there is an automorphism that maps $E_i$ to $E_j$,
    Thus, the oracle is forced to decide equivalence exclusively within the stabilizer subgroup $G^{(E_{i-1})}$.
    The counter $d_i$ increments for each target class found, beginning from $1$ which counts the trivial map of $E_i$ to itself.
    Hence, once the inner for-loop is completed, $d_i$ represents the setwise orbit size, i.e. $d_i = \abs{G^{(E_{i-1})} : G^{(E_i)}}$.

    At the end of each iteration $t_i \in B'$, the running total count $N$ is multiplied by $d_i$. 
    Because the algorithm iterates through every unique class representative in $B'$, it calculates the product of all quotient indices.
    Then, by \cref{lem:setwise-stab-chain-order}, multiplying the initial kernel order $\prod_{i=1}^s \abs{E_i}!$ by these indices yields the order of the full automorphism group
    \begin{equation*}
        N = \prod_{i=1}^{s} \abs{E_i}! \cdot d_{i}
        % \parens{\prod_{i=1}^{s} \abs{E_i}!} \cdot \prod_{i \in B'} d_i 
        = \abs{G^{(E_s)}} \cdot \prod_{i=1}^{s} \abs[\big]{G^{(E_{i-1})} : G^{(E_i)}} 
        = \abs{G}.
    \end{equation*}
    % Thus, the algorithm correctly returns the exact order of the full automorphism group. 
    
    \emph{(Complexity)}
    We first bound the number of oracle calls. 
    The algorithm makes a single query to the \Apart oracle to obtain the orbit partition of the index set $[n]$.
    As shown in \cref{thm:APART/MAPART->PCE/LCE}, \Apart can be solved using $O(n^2)$ queries to a \PCE oracle. 
    In the main for-loop, a target $t_j \in B'$ is tested against a base index $t_i \in B'$ only if both belong to the same orbit $X$. 
    For any orbit $X$, the number of distinct column classes is bounded by $|X|$. 
    Thus, the outer for-loop evaluates the test branch at most $|X|$ times for this orbit, and each inner for-loop tests at most $|X|-1$ targets. 
    This limits the number of \PCE oracle queries to $O(|X|^2)$ for each orbit $X$.
    Because the coordinate orbits form a partition of $[n]$, the sum of $O(|X|^2)$ across all orbits $X$ is strictly bounded above by $O(n^2)$. 
    Therefore, the algorithm makes a total of % 2\cdot O(n^2)
    $O(n^2)$ calls to the \PCE oracle.

    Now we bound the runtime.
    Partitioning the index set $[n]$ into redundancy classes and constructing the set $B'$ requires comparing column vectors at most $n^2$ times, so this step requires $O(n^2 \log(n))$ time.
    By \cref{prop:alg-APART->PCE}, running the \Apart oracle requires $O(n^3 \log(q))$ time.
    Now we evaluate the cost of constructing the test matrices, which dominates the time complexity of the algorithm.
    The maximum column multiplicity in $\mG$ is $m \leq n$, and in each iteration $t_i \in B'$, the number of unique column classes fixed so far is $l \leq n$. 
    The base matrix $\Gpin$ of setwise stabilized columns accumulates $\sum_{a=1}^l a\cdot m$ appended columns in each iteration.
    To construct $\mG_A$ and $\mG_B$, an additional block of $(l+1)m$ columns is appended to $\Gpin$.
    Thus, the number of columns appended to $\mG$ is bounded above by $n + m l(l+1)/2 + (l+1)m = O(n^3)$. 

    Each field element in $\F_q$ requires $\log(q)$ bits to represent and $O(\log(q))$ bit operations to read, copy, or compare. 
    The initial partitioning phase and computing the kernel size requires $O(n^2\log(q))$ and $O(n)$ operations respectively.
    Then, for each of the $O(n^2)$ oracle queries, copying at most $O(n^3)$ column vectors for the augmented matrices requires $O(k n^3\log(q))$ operations.
    Therefore, since $k \leq n$, the total runtime is bounded by $O(n^2) \cdot O(kn^3\log(q)) = O(n^6\log(q))$ time.
\end{proof}

% ===========================================
\subsection{\MACOUNT to \LCE}
\label{subsec:MACOUNT->LCE}

Now we prove the monomial analog of \cref{prop:alg-ACOUNT->PCE}.
The reduction algorithm is nearly identical to \cref{alg:ACOUNT->PCE} but uses the properties of monomial equivalence and a \LCE oracle instead of \PCE. 
For completeness, we include the full algorithm and proof below.

\begin{prop}
    \label{prop:alg-MACOUNT->LCE}
    \cref{alg:MACOUNT->LCE} correctly solves \MACOUNT for any $q$-ary $n$-length code in $O(n^6\log(q))$ time using at most $O(n^2)$ calls to an \LCE oracle.
\end{prop}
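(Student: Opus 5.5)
We mirror the proof of \cref{prop:alg-ACOUNT->PCE}, replacing the permutation setwise stabilizer chain with the monomial setwise stabilizer chain of \cref{def:setwise-stab-chain} and using \cref{lem:mon-setwise-stab-chain-order} instead of \cref{lem:setwise-stab-chain-order}. We expect \cref{alg:MACOUNT->LCE} to work as follows. First, partition $[n]$ into projective classes $E_1,\ldots,E_s$, where $E_i$ collects the columns that are non-zero scalar multiples of $\vecg_{t_i}$, and record the first-appearance indices $B'=\set{t_1,\ldots,t_s}$ and the maximum projective multiplicity $m$. Next, run \cref{alg:decompose} to obtain the number $c$ of indecomposable summands, which costs $O(n^3\log(q))$ time and no oracle calls. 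Initialize $N\defn (q-1)^c\prod_{i}\abs{E_i}!$, which by \cref{lem:mon-setwise-stab-chain-order} is $\abs{M^{(E_s)}}$. Optionally, call \MAPART, which by \cref{prop:alg-MAPART->LCE} costs $O(n^2)$ \LCE calls, to restrict the targets to a single monomial orbit.

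Then traverse the chain. For each $t_i\in B'$, count the classes $E_j$ with $t_j$ in the orbit of $t_i$ and $j>i$ for which there is $\mu\in M^{(E_{i-1})}$ with $\pi_\mu(E_i)=E_j$. Test each candidate with one \LCE query on $\mG_A=[\Gpin\mid (l+1)m \text{ copies of } \vecg_{t_i}]$ and $\mG_B=[\Gpin\mid (l+1)m \text{ copies of } \vecg_{t_j}]$. Here $\Gpin$ is $\mG$ augmented with blocks of $a\cdot m$ copies of each previously processed representative, with block sizes pairwise distinct and each exceeding $m$. Set $d_i$ equal to one plus the number of \textsc{Yes} answers, and multiply $N$ by $d_i$.

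The correctness claim is that $d_i=\abs{M^{(E_{i-1})}:M^{(E_i)}}$. By \cref{lem:col-frequency-LCE}, any linear isometry between $\code(\mG_A)$ and $\code(\mG_B)$ preserves projective column frequencies. Since every appended block has a distinct projective frequency exceeding $m$, the isometry maps each pinned block to itself up to scaling and maps the new $\vecg_{t_i}$-block to the $\vecg_{t_j}$-block. Restricting it to the original coordinates then gives an element of $M^{(E_{i-1})}$ sending $E_i$ to $E_j$. Conversely, such a $\mu$ extends to the appended columns with suitable scalars.

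\textbf{The main obstacle is the restriction step.} An isometry of the augmented codes might scale the pinned copies differently from the original columns. Moreover, the restriction is a monomial map on the $n$ original coordinates that carries $\code$ into $\code$ only after projecting, which needs care about whether the projection is injective. This holds because the appended columns are multiples of existing columns, so deleting them is a code isomorphism. It is cleanest to argue via \cref{lem:ext_McWilliams}: the weight-preserving linear map induced on $\code$ determines the isometry of $\code$ itself.

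Finally, the orbit-stabilizer counting over the classes in the orbit gives the index. Multiplying these indices with the initial $N$ yields $\abs{\MAut(\code)}$ by \cref{lem:mon-setwise-stab-chain-order}. The complexity bound is identical to that of \cref{prop:alg-ACOUNT->PCE}: $O(n^2)$ queries on matrices with $O(n^3)$ columns, for $O(n^6\log(q))$ total time.
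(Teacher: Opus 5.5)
Your overall plan matches the paper's. Its proof simply transplants the argument of \cref{prop:alg-ACOUNT->PCE} to the monomial setwise stabilizer chain, swapping in \cref{lem:col-frequency-LCE} and \cref{lem:mon-setwise-stab-chain-order} (hence the initial factor $(q-1)^c$), and your complexity count agrees with it. The gap is the step you yourself single out: your repair does not deliver what is needed. The MacWilliams-type lemma you cite is, as stated in the paper, purely existential ($\code_1\cong_\cM\code_2$ iff the codes are isometric), so applied to $\code$ and $\code$ it says nothing. Even the genuine extension theorem (every linear isometry between codes is induced by a monomial map) only gives you \emph{some} $\nu\in\MAut(\code)$. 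You never show that $\nu\in M^{(E_{i-1})}$ with $\pi_\nu(E_i)=E_j$, yet this is exactly what $d_i=\abs{M^{(E_{i-1})}:M^{(E_i)}}$ requires. You also assert without proof that the induced map on $\code$ preserves weight; that holds only because the weights on the appended coordinates of $\mG_A$ and $\mG_B$ agree, which itself needs the frequency argument. Two smaller points: the claim that the isometry ``maps each pinned block to itself up to scaling'' is false coordinatewise, since the permutation may exchange pinned copies with original columns of the same projective class. And ``block sizes pairwise distinct and exceeding $m$'' does not by itself give unique frequencies, because totals $\abs{E_a}+b_a$ can collide; it is the spacing $b_a=am$ that does.

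The direct fix avoids coordinates and MacWilliams altogether. A \textsc{Yes} answer means $\mS\mG_A\mM=\mG_B$ for some invertible $\mS$ and monomial $\mM$. So $\mS$ sends each projective column type of $\mG_A$ to a type of $\mG_B$ of the same projective multiplicity. The multiplicities fall into disjoint ranges:
\begin{itemize}
\item for $a<i$, the class of $\vecg_{t_a}$ has multiplicity $\abs{E_a}+am\in(am,(a+1)m]$ in both matrices;
\item the class of $\vecg_{t_i}$ in $\mG_A$, and the class of $\vecg_{t_j}$ in $\mG_B$, have multiplicity in $(im,(i+1)m]$;
\item every other class has multiplicity at most $m$.
\end{itemize}
Hence $\mS\vecg_{t_a}\in\F_q^*\vecg_{t_a}$ for $a<i$, $\mS\vecg_{t_i}\in\F_q^*\vecg_{t_j}$ with $\abs{E_i}=\abs{E_j}$, and the remaining classes are matched with equal sizes. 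So $\mS$ permutes the projective column classes of $\mG$ itself, preserving sizes, and there is a monomial $\mM'$ with $\mS\mG\mM'=\mG$. Its permutation fixes $E_1,\ldots,E_{i-1}$ setwise and maps $E_i$ onto $E_j$. With this in place, together with your converse extension, the rest of your argument goes through. This is also the reasoning implicitly behind the paper's one-line appeal to the frequency lemma.
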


\begin{proof}
    \cref{alg:MACOUNT->LCE} is nearly identical to \cref{alg:ACOUNT->PCE} with the exception of an additional step to find the number $c$ of indecomposable summands via \cref{alg:decompose}.
    The proof of correctness is identical to that of \cref{prop:alg-ACOUNT->PCE}, but replacing \cref{lem:col-frequency,lem:setwise-stab-chain-order} with their monomial analogs \cref{lem:col-frequency-LCE,lem:mon-setwise-stab-chain-order}.

    The time complexity remains $O(n^6\log(q))$ because the $O(n^3\log(q))$ runtime of Step 1 is dominated by the cost of constructing the augmented test matrices in Step 2.
\end{proof}

\begin{algorithm}[th]
    \DontPrintSemicolon
    \SetKwInOut{Input}{Input}
    \SetKwInOut{Output}{Output}
    \SetKwFunction{Oracle}{LCE}
    \SetKwFunction{MAPART}{MAPART}

    \caption{Reduction from MACOUNT to LCE}
    \label{alg:MACOUNT->LCE}
    
    \Input{A generator matrix $\mG \in \F_q^{k \times n}$ of a linear code $\code \subseteq \F_q^n$.}
    \Output{The order of the monomial automorphism group $\abs{\MAut(\code)} \in \N$.}
    \hrulefill
    
    \BlankLine
    \tcp{Step 1: Code decomposition to find the number of independent summands.}
    Run \cref{alg:decompose} on $\mG$ to obtain the decomposition $\code \cong_\cM \bigoplus_{r=1}^c A_r$.\;
    $c \defn$ number of indecomposable summands.\;

    \BlankLine
    \tcp{Step 2: Partition index set into projective redundancy classes.}
    Let $\set{\vecg_1,\ldots,\vecg_n}$ be the set of column vectors of $\mG$.\;
    $\cE \defn \emptyset$. \tcp{collection of redundancy sets}
    $\cG \defn \emptyset$. \tcp{keep track of unique projective column representatives}
    $B' \defn \emptyset$. \tcp{indices of first appearances}

    \BlankLine
    \For{$i \gets 1$ \KwTo $n$}{
        \tcp{If a column is a scalar multiple of a previously seen column, sort it into the same class.}
        \If{$\exists\ \vecv \in \cG$ \emph{\textbf{such that}} $\vecg_i = \lambda \vecv$ \emph{for some} $\lambda \in \F_q^*$}{
            \For{$E\in \cE$}{
                $\vecv \gets E[1]$.\;
                \If{$\vecg_i = \lambda \vecv$ \emph{for some} $\lambda \in \F_q^*$}{
                    $E \gets E \cup \set{\vecg_i}$.\;
                }
            }
        }
        \tcp{Otherwise, if a column is unique up to scaling, create a new class.}
        \Else{
            $E \defn \set{\vecg_i}$.\;
            $\cG \gets \cG \cup \set{\vecg_i}$\;
            $\cE \gets \cE \cup \set{E}$\;
            $B' \gets B' \cup \set{i}$. \tcp{record index of first appearance}
        }
    }
    
    \BlankLine
    $m \gets 0$. \tcp{maximum projective column multiplicity in $\mG$}
    $N \gets (q-1)^c$. \tcp{initialize count with the global scaling automorphisms}
    \tcp{count the permutations stabilizing the projective classes}
    \For{$E \in \cE$}{
        $N \gets N \cdot |E|!$\;
        \If{$|E| > m$}{
            $m \gets |E|$\;
        }
    }
    
    \BlankLine
    \tcp{Step 3: Find the monomial orbit partition of the index set via \MAPART oracle.}
    Run $\MAPART(\mG)$ to obtain monomial coordinate orbits $\set{X_1,\ldots,X_\ell}$.\;

    \BlankLine
    $\cG \gets \emptyset$. \tcp{reset to track fixed columns in stabilizer chain}
    $\Gpin \gets \mG$. \tcp{initialize global base matrix}

    \emph{(continued on next page)}   
\end{algorithm}

% manual page break
\begin{algorithm}[H]
    \ContinuedFloat
    \DontPrintSemicolon
    \emph{(continued from previous page)}   
    
    \BlankLine
    \tcp{Step 4: Compute index $\abs[\big]{G^{(E_{i-1})} : G^{(E_i)}}$ for each consecutive pair of subgroups in the chain.}
    \For{$i \in B'$}{
        $X \defn$ unique orbit in $\set{X_1,\ldots,X_\ell}$ containing $i$.\;
        $d_i \gets 1$. \tcp{initialize local orbit count}
        
        \BlankLine
        \tcp{Check if an automorphism maps column $i$ to a valid spatial target $j$.}
        \For{$j \in X \cap B'$ \emph{\textbf{where}} $j > i$}{
            $l \gets |\cG|$. \tcp{number of unique projective classes fixed so far}
            $\mA \defn (l+1)m$ copies of $\vecg_i$.\;
            $\mB \defn (l+1)m$ copies of $\vecg_j$.\;
            $\mG_A \gets \bracks{\Gpin \mid \mA}$.\;
            $\mG_B \gets \bracks{\Gpin \mid \mB}$.\;
            
            \BlankLine           
            % \tcp{Use LCE oracle to respect scalar transformations during the check.}
            \tcp{Use the \LCE oracle to check for scalar transformations that fix projective class of $\vecg_i$.}
            \If{\Oracle{$\mG_A$, $\mG_B$} = \emph{\textsc{Yes}}}{
                $d_i \gets d_i + 1$.\;
            }
        }
        
        \BlankLine
        $N \gets N \cdot d_i$. \tcp{update total monomial automorphism count}
        
        \BlankLine
        \tcp{Stabilize projective class $E_i$ for all future iterations.}
        $\mA \defn (l+1)m$ copies of $\vecg_i$.\;
        $\Gpin \gets \bracks{\Gpin \mid \mA}$.\;
        $\cG \gets \cG \cup \set{\vecg_i}$.\;
    }
    
    \BlankLine
    \Return{$N$}.
\end{algorithm}

% ===========================================
\subsection{\ICOUNT and \MICOUNT}%
\label{subsec:(M)ICOUNT}

Here we present the isomorphism analog of the reductions between code equivalence and the automorphism counting problem.
\ICOUNT (resp. \MICOUNT) is the problem of counting the number of (resp. monomial) isomorphisms that exist between two input codes (see \cref{def:(M)ICOUNT}).

\begin{thm}[\PCE Reduces to \ICOUNT, \LCE Reduces to \MICOUNT]
    \label{thm:PCE/LCE->ICOUNT/MICOUNT}
    There is a reduction from \PCE (resp. \LCE) on any pair of $n$-length $q$-ary codes to \ICOUNT (resp. \MICOUNT) that runs in $O(n^2\log(q))$ time and makes a single oracle call.
\end{thm}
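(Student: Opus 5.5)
The reduction is a one-line algorithm. Given generator matrices $\mG_1,\mG_2\in\F_q^{k\times n}$, pass the pair $(\mG_1,\mG_2)$ unchanged to the \ICOUNT (resp. \MICOUNT) oracle, receive an integer $N$, and output \textsc{Yes} iff $N\neq 0$. No preprocessing such as decomposition or augmentation is needed.

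\emph{Correctness} follows from \cref{def:(M)ICOUNT} together with \cref{lem:iso=aut} (resp. \cref{lem:isom=maut}). Suppose first that $\code_1\cong\code_2$. Then \cref{lem:iso=aut} gives $\abs{\Iso(\code_1,\code_2)}=\abs{\Aut(\code_2)}\geq 1$, since the identity permutation lies in $\Aut(\code_2)$. Hence the oracle returns a positive integer and we output \textsc{Yes}.

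Conversely, suppose $\code_1\not\cong\code_2$. Then $\Iso(\code_1,\code_2)=\emptyset$, and the definition forces the output $0$, so we output \textsc{No}. The monomial case is identical: use \cref{lem:isom=maut}, and note that $\MAut(\code_2)$ contains the identity monomial map $(\vecone,\id)$, so it is nonempty.

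\emph{Complexity.} Exactly one oracle call is made. The only local work is copying the two $k\times n$ input matrices into the oracle query and testing whether the returned integer is zero. Copying costs $O(kn\log(q))\subseteq O(n^2\log(q))$ since $k\leq n$. For the zero test, $N\leq n!\,(q-1)^n$, so $N$ has $O(n\log n+n\log q)$ bits, and reading it and comparing it with $0$ fits within $O(n^2\log(q))$.

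There is no genuine obstacle here. The only point needing care is that the oracle answer is nonzero precisely in the equivalent case, which rests on the automorphism group always containing the identity. One should also confirm that the stated time bound accounts for writing the query and reading the oracle's integer answer, which the bit-length estimate above does.
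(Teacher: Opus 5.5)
Your proposal is correct and matches the paper's proof. Both make a single \ICOUNT (resp.\ \MICOUNT) query on the unchanged input pair, answer \textsc{Yes} exactly when the returned count is nonzero, and bound the local work by the $O(kn\log(q)) \subseteq O(n^2\log(q))$ cost of copying the two generator matrices. Your bit-length bound on the oracle's answer is a small detail the paper leaves implicit.
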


\begin{proof}
    Given generator matrices $\mG_1,\mG_2 \in \F_q^{k\times n}$ for $\code_1,\code_2 \subseteq \F_q^n$, make one oracle call $\ICOUNT(\code_1,\code_2)$.
    If the output is $0$, then the given codes are not equivalent, and if it outputs a positive value, then the given codes are equivalent.
    An identical reduction can be made in which a \MICOUNT oracle is called once to solve \LCE.
    Copying the two $k\times n$ generator matrices over $\F_q$ requires $2nk \log(q) = O(n^2 \log(q))$ time.
\end{proof}

For the reverse reductions from \ICOUNT to \PCE (and \MICOUNT to \LCE), we use a similar approach as that in \cite{Mathon79} and rely on the relationship between (monomial) isomorphisms and automorphisms described in \cref{lem:iso=aut,lem:isom=maut}.

\begin{cor}[\ICOUNT\ Reduces to \PCE, \MICOUNT\ Reduces to \LCE]
    \label{cor:ICOUNT/MICOUNT->PCE/LCE}
    There is a reduction from \ICOUNT (resp. \MICOUNT) on any pair of $n$-length $q$-ary codes to \PCE (resp. \LCE) that runs in $O(n^6 \log(q))$ time and makes $O(n^2)$ oracle calls.
\end{cor}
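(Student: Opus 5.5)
The reduction will follow Mathon's approach. It combines one decision query with the \Acount-to-\PCE reduction of \cref{prop:alg-ACOUNT->PCE}, or its monomial analogue \cref{prop:alg-MACOUNT->LCE}, and the coset structure in \cref{lem:iso=aut,lem:isom=maut}.

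Given generator matrices $\mG_1,\mG_2\in\F_q^{k\times n}$, the algorithm first makes a single \PCE oracle call on $(\mG_1,\mG_2)$. If the answer is \textsc{No}, then $\Iso(\code_1,\code_2)=\emptyset$ and the algorithm outputs $0$, as required by \cref{def:(M)ICOUNT}. If the answer is \textsc{Yes}, some $\pi\in\Iso(\code_1,\code_2)$ exists. By \cref{lem:iso=aut} we then have $\Iso(\code_1,\code_2)=\Aut(\code_2)\circ\pi$, and hence $\abs{\Iso(\code_1,\code_2)}=\abs{\Aut(\code_2)}$. Note that $\pi$ only needs to exist; we never have to find it. So in this case the algorithm runs \cref{alg:ACOUNT->PCE} on $\mG_2$ and returns its output.

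For \MICOUNT the reasoning is the same. We query the \LCE oracle. On \textsc{No} we output $0$. On \textsc{Yes} we invoke \cref{lem:isom=maut}, which gives $\abs{\Isom(\code_1,\code_2)}=\abs{\MAut(\code_2)}$, and we run \cref{alg:MACOUNT->LCE} on $\mG_2$.

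For the complexity, the oracle calls are the one initial query plus the $O(n^2)$ calls used by \cref{prop:alg-ACOUNT->PCE}, which gives $O(n^2)$ in total. The running time is dominated by the $O(n^6\log(q))$ cost of building the augmented test matrices inside \cref{alg:ACOUNT->PCE}. Forming the initial query costs only $O(n^2\log(q))$. The argument is routine. The only point that needs care is that \cref{lem:iso=aut} assumes the codes are equivalent, which is why the decision call must come before the counting step.
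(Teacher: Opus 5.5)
Your proposal is correct and follows the paper's own proof. A single \PCE (resp.\ \LCE) query handles the inequivalent case; otherwise \cref{lem:iso=aut} (resp.\ \cref{lem:isom=maut}) reduces the count to $\abs{\Aut(\code_2)}$ (resp.\ $\abs{\MAut(\code_2)}$), which you compute by running \cref{alg:ACOUNT->PCE} (resp.\ \cref{alg:MACOUNT->LCE}) on $\mG_2$, and your $O(n^6\log(q))$ time bound for the permutation case is actually the consistent one, since the paper's proof writes $O(n^3\log(q))$ there, contradicting \cref{prop:alg-ACOUNT->PCE}.
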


\begin{proof}
    Let $\mG_1$ and $\mG_2$ be the input generator matrices for codes $\code_1$ and $\code_2$.
    The reduction algorithm first makes a single call to the \PCE oracle to check whether $\code_1 \cong \code_2$. 
    If $\PCE(\mG_1,\mG_2)$ outputs \textsc{No}, then no isomorphism exists between the codes and $\Iso(\code_1,\code_2) = 0$.
    In this case, the reduction algorithm simply outputs $0$. 
    If $\PCE(\mG_1,\mG_2)$ outputs \textsc{Yes}, then there exists at least one isomorphism from $\code_1$ to $\code_2$.
    By \cref{lem:iso=aut}, we have $\abs{\Iso(\code_1, \code_2)} = \abs{\Aut(\code_2)}$. 
    To find this cardinality, we run \cref{alg:ACOUNT->PCE} on $\mG_2$ to obtain $\abs{\Aut(\code_2)}$.

    Reading and copying the two $k\times n$ matrices over $\F_q$ requires $2nk\log(q) = O(n^2 \log(q))$ time.
    Checking equivalence requires $1$ oracle call to \PCE.
    In the event that the \PCE oracle outputs \textsc{Yes}, the algorithm runs \cref{alg:ACOUNT->PCE}, which requires $O(n^2)$ additional oracle calls to \PCE and runs in $O(n^3 \log(q))$ time.
    Thus, in the worst case, the reduction algorithm requires $1 + O(n^2) = O(n^2)$ oracle calls to \PCE and $O(n^3 \log(q))$ time.
    \medskip

    A nearly identical reduction can be made for \MICOUNT using an \LCE oracle.
    The \LCE oracle is called once to check whether $\code_1 \cong_\cM \code_2$.
    If $\LCE(\mG_1,\mG_2)$ outputs \textsc{No}, then we output $0$. 
    Otherwise, if $\LCE(\mG_1,\mG_2)$ outputs \textsc{Yes}, \cref{lem:isom=maut} guarantees that $\abs{\Isom(\code_1, \code_2)} = \abs{\MAut(\code_2)}$. 
    We run \cref{alg:MACOUNT->LCE} on $\mG_2$ to compute this cardinality. 
    As with the permutation case, this requires $1 + O(n^2) = O(n^2)$ oracle calls to \LCE and $O(n^6 \log(q))$ time, the bottleneck being the runtime of \cref{alg:MACOUNT->LCE}.
\end{proof}

%%% Local Variables:
%%% mode: latex
%%% TeX-master: "main"
%%% End:

\section{\AGEN and \MAGEN}%
\label{sec:AGEN}

\begin{thm}[\sPCE Reduces to \AGEN, \sLCE Reduces to \MAGEN]
    \label{thm:search-PCE/LCE->AGEN/MAGEN}
    There is a reduction from \sPCE (resp. \sLCE) on any pair of $n$-length $q$-ary codes $\code_1, \code_2$ to \AGEN (resp. \MAGEN) on their direct sum $\code \defn \code_1 \oplus \code_2$ that makes a single oracle call and runs in $O(s n^2 \log(q) + n^4)$ time, where $s$ is the size of the generating set found by the oracle.
\end{thm}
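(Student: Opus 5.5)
The plan is to form the generator matrix of $\code \defn \code_1 \oplus \code_2$ as in \cref{def:direct-sum}, with index set $[2n] = X_1 \sqcup X_2$ where $X_1=[n]$ and $X_2=[n+1,2n]$. We make the single oracle call $\AGEN(\code)$, obtaining a set $S \subseteq \cS_{2n}$ with $\inner{S}=\Aut(\code)$. Everything after that is deterministic post-processing of $S$.

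\textbf{Step 1: decompose and reduce to an action on summands.} Run \cref{alg:decompose} on $\code_1$ and on $\code_2$. This costs $O(n^3\log q)$ and writes $\code = A_1\oplus\dots\oplus A_m \oplus B_1 \oplus\dots\oplus B_{m'}$ with indecomposable summands, with the summand index sets $Y_1,\ldots,Y_{m+m'}$ partitioning $[2n]$ and refining $\{X_1,X_2\}$. By \cref{cor:no_partial-Aut}, every $\alpha\in S$ maps each $Y_a$ onto some $Y_b$. So each generator induces a permutation $\bar\alpha$ of the $m+m'$ summands, which we read off in $O(n)$ time per generator.

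\textbf{Step 2: identify orbits with equivalence classes.} The key structural claim is that two summands lie in the same $\inner{S}$-orbit if and only if they are permutation equivalent. The forward direction holds because if $\alpha(Y_a)=Y_b$, then the restriction of $\alpha$ to $Y_a$, relabeled, is an isomorphism from the summand on $Y_a$ to the summand on $Y_b$; this uses the uniqueness of the direct-sum decomposition of a codeword. For the converse, given an isomorphism $\pi$ between two summands, the map that acts as $\pi$ on one block, as $\pi^{-1}$ on the other, and as the identity elsewhere lies in $\Aut(\code)$, as in the proof of \cref{lem:ratio-Aut}.

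By \cref{lem:Slepian}, $\code_1\cong\code_2$ holds if and only if $m=m'$ and every orbit contains equally many $A$-summands and $B$-summands. We therefore compute the orbits of the induced action by building a Schreier tree (BFS) rooted at one summand of each orbit, storing for each node the permutation of $[2n]$ labelling its tree edge. This takes $O(s(m+m'))$ generator applications on $\le 2n$ points. If some orbit is unbalanced, we output that no isomorphism exists.

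\textbf{Step 3: assemble the isomorphism.} Within each orbit, pair its $A$-summands with its $B$-summands arbitrarily. For each pair $(A_i,B_j)$, obtain an element $g_{ij}\in\Aut(\code)$ with $g_{ij}(Y_{A_i})=Y_{B_j}$ by composing the tree path from $A_i$ to the root with the path from the root to $B_j$. Each such product costs $O(n)$ per factor over at most $O(n)$ factors, which is $O(n^2)$ per pair and $O(n^3)$ over all pairs; this fits within the $O(sn^2\log q+n^4)$ budget, with the $\log q$ factor coming from handling matrix entries. Finally, define $\pi\in\cS_n$ by $\pi(x) \defn g_{ij}(x)-n$ for $x\in Y_{A_i}$. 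Then $\pi(\code_1)=\bigoplus_i \pi(A_i)=\bigoplus_j B_j=\code_2$, where the middle equality applies the restriction argument from Step 2 blockwise.

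The main obstacle is Step 2, specifically the claim that the restriction of a global automorphism to one summand is a genuine isomorphism onto the image summand. I would prove it from \cref{cor:no_partial-Aut} together with the fact that $\code$ is the internal direct sum of the embedded summands: projecting $\alpha(\vecc)$ for $\vecc\in A_i$ onto $Y_b$ recovers $\alpha(\vecc)$ itself, and it lies in the summand there.

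The monomial version follows the same steps with $\MAGEN$, \cref{lem:monomial_decomp} and \cref{lem:no_partial}. Restrictions of monomial automorphisms carry their scaling vectors, and the assembled map $\mu$ copies the scaling entries of $g_{ij}$ on each block.
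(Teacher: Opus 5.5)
Your proposal is correct and follows the paper's skeleton. You decompose both codes with \cref{alg:decompose} and make the single \AGEN call on the direct sum. You then use \cref{cor:no_partial-Aut} to view each generator as a permutation of the summands, compose generators along paths to obtain maps $A_i\to B_j$, and assemble the isomorphism blockwise.

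Where you genuinely differ is the matching step. The paper builds a directed graph labelled by generators and runs Dijkstra from every $a_i$ to form a bipartite reachability graph. It then invokes Hopcroft--Karp to search for a perfect matching. You instead prove that the summand orbits of $\inner{S}=\Aut(\code)$ are exactly the permutation-equivalence classes of summands. The forward direction comes from restricting an automorphism, and the converse from the swap map as in the proof of \cref{lem:ratio-Aut}. Consequently the reachability graph is a disjoint union of complete bipartite graphs, so a perfect matching exists iff every orbit is balanced, and any pairing within an orbit works.

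Your route gives a simpler and cheaper procedure: one Schreier tree per orbit and no matching algorithm. Summand images are read off from index sets rather than by comparing codes, so post-processing is $O(sn+n^3)$ in the permutation case. It also makes explicit a completeness step the paper leaves implicit: why the summands paired by some isomorphism $\code_1\to\code_2$ are actually connected in $G'$. The paper's matching-based route is more generic, but since the relation here is an equivalence relation it gains nothing.

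One caveat applies equally to both arguments. The $O(n^3\log q)$ cost of \cref{alg:decompose} is not dominated by $O(sn^2\log q+n^4)$ when $s$ is small and $\log q\gg n$, so strictly that term should be added to the stated bound.
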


\begin{thm}[\AGEN Reduces to \PCE, \MAGEN Reduces to \LCE]
    \label{thm:AGEN/MAGEN->PCE/LCE}
    There is a reduction from \AGEN (resp. \MAGEN) on any $n$-length $q$-ary code that runs in $O(n^6\log(q))$ time and makes $O(n^4)$ oracle calls to \PCE (resp. \LCE).
\end{thm}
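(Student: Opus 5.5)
We build a generating set for $G \defn \Aut(\code)$ from the setwise stabilizer chain of \cref{def:setwise-stab-chain} and \cref{lem:setwise-gen}. The generating set is $S_{E_s}$ together with coset representatives $R_i$ for each quotient $G^{(E_{i-1})}/G^{(E_i)}$.

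\textbf{Step 1: the bottom group.} First we sort the columns of $\mG$ into redundancy classes $E_1,\ldots,E_s$ exactly as in \cref{alg:ACOUNT->PCE}. By the proof of \cref{lem:setwise-stab-chain-order}, $G^{(E_s)} \cong \prod_i \Sym(E_i)$. So $S_{E_s}$ can be written down with no oracle calls: for each class, take the transpositions of consecutive elements, giving at most $n$ generators. For \MAGEN we also need the kernel of the homomorphism $\rho$ from \cref{lem:mon-setwise-stab-chain-order}. Using \cref{alg:decompose} and \cref{lem:indecomposable-scaling}, we add one scaling generator per indecomposable summand, namely a primitive element of $\F_q^*$ placed on that summand's support. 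We also attach the scalars that realize each intra-class transposition.

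\textbf{Step 2: coset representatives.} We traverse $t_1,\ldots,t_s$ while keeping the pinned matrix $\Gpin$ of \cref{alg:ACOUNT->PCE}, in which the classes $E_1,\ldots,E_{i-1}$ are fixed setwise by appending blocks of distinct, increasing multiplicities. For each $j>i$ with $t_j$ in the same orbit, we query \PCE on $\bracks{\Gpin \mid \mA}$ versus $\bracks{\Gpin \mid \mB}$, with multiplicities as before. By \cref{lem:col-frequency}, a \textsc{Yes} answer means some element of $G^{(E_{i-1})}$ sends $E_i$ to $E_j$.

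For each such $j$ we need an explicit representative. We obtain it by running the search-to-decision reduction of \cref{lem:search-to-decision-PCE} on the same pair of augmented matrices. That reduction returns a permutation of the augmented index set, and we restrict it to $[n]$. Because the appended blocks have frequencies unique to them, the permutation must map the appended blocks to one another. Its restriction to the original $n$ coordinates is therefore an automorphism of $\code$ that stabilizes $E_1,\ldots,E_{i-1}$ and sends $E_i$ to $E_j$.

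The restricted permutations, together with the identity, form $R_i$. Distinct images of $E_i$ give distinct cosets, and every coset of $G^{(E_i)}$ in $G^{(E_{i-1})}$ is determined by the image of $E_i$. So this $R_i$ is a full transversal, and \cref{lem:setwise-gen} shows that $S_{E_s}\cup\bigcup_i R_i$ generates $G$. The \LCE/\MAGEN case is identical, using \cref{lem:col-frequency-LCE} and \cref{lem:search-to-decision-LCE}; the search routine returns monomial maps, which we restrict in the same way.

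\textbf{Step 3: complexity.} There are $O(n^2)$ pairs $(i,j)$, as counted in \cref{prop:alg-ACOUNT->PCE}. The search reduction makes $n^2$ oracle calls per pair, which gives $O(n^4)$ calls in total.

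This is where the main obstacle lies. The augmented matrices have $O(n^3)$ columns, so applying \cref{lem:search-to-decision-PCE} as a black box would make $N'^2$ calls with $N'=O(n^3)$, far exceeding $O(n^4)$. We would handle this by restricting the search to the $n$ original coordinates: we individualize only the original columns and treat the appended blocks as already matched, so that $O(n^2)$ calls per pair suffice. Individualizing one original column within the augmented matrix works the same way, by appending a further block of unique multiplicity. Checking that the per-query matrix size stays $O(n^3)$ columns keeps the construction cost per query at $O(kn^3\log q)$. This makes the time bound the delicate part of the accounting, and we would aim to keep it within the stated $O(n^6\log(q))$ by sharing the pinned base matrix across the queries for each $i$.
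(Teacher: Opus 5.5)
Your construction is the paper's own (\cref{alg:AGEN->PCE}). It generates $G^{(E_s)}$ from transpositions, plus one scaling generator per indecomposable summand in the monomial case. It then builds pinned test matrices $\bracks{\Gpin \mid \mA}$ versus $\bracks{\Gpin \mid \mB}$, makes a \PCE test followed by a \sPCE call on the same pair, and closes with \cref{lem:setwise-gen}. Your transversal argument is sound: cosets of $G^{(E_i)}$ in $G^{(E_{i-1})}$ are determined by the image class of $E_i$.

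You have also found a real weakness in how the paper gets this theorem. Its proof is \cref{prop:alg-AGEN->PCE} composed with \cref{lem:search-to-decision-PCE}. Both the resulting $O(n^2)\cdot n^2$ call count and the $O(n^3\log q)$ time charged per \sPCE call treat the augmented instances as if they had length $n$, whereas they have length $O(n^3)$. Your class-level search repairs the call count. Every appended column is a copy, so the augmented matrices contain at most $n$ distinct column vectors, and at most $n^2$ individualization queries per representative suffice.

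One step of your Step~2 is misstated, however. The appended blocks are copies of columns already present, so they do not have frequencies of their own; only the whole class (original plus appended copies) does. A permutation returned by the search may therefore send an original coordinate into an appended block, and its literal restriction to $[n]$ need not be a permutation of $[n]$. You need the redirection inside identical-column classes that \cref{alg:AGEN->PCE} performs with the set $U$, with scalar corrections in the monomial case. Alternatively, build the representative directly from your class-level matching.

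The genuine gap is the running time, which you leave as an aim rather than a bound. By your own count there are $O(n^2)$ searches, each with up to $n^2$ queries. Each query is a $k\times O(n^3)$ matrix that must be written out in full. The reduction therefore costs $O(n^4)\cdot O(kn^3\log q)=O(kn^7\log q)$, i.e.\ up to $O(n^8\log q)$. Sharing $\Gpin$ between queries does not change this, because the dominant cost is writing each oracle input, not assembling $\Gpin$. The paper's $O(n^6\log q)$ does not cover this either, since it charges only for the $O(n^2)$ queries of \cref{prop:alg-AGEN->PCE}.

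Reaching the stated bound needs a further ingredient that shrinks the queries or reduces the number of searches. For example, pinning the $a$-th class to total frequency $m+a$ instead of appending $(a+1)m$ copies still separates all pinned classes from each other and from the unpinned ones. It also keeps every query at $O(n^2)$ columns, but that alone only gives $O(kn^6\log q)$. As written, your proposal establishes $O(n^4)$ oracle calls but not the $O(n^6\log(q))$ time bound.
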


% ===========================================
\subsection{Search-\PCE Reduces to \AGEN}
\label{subsec:sPCE->AGEN}

In this section, we prove that finding an isomorphism between two codes, if it exists, is a polynomially equivalent problem to that of finding a generating set for the automorphism group of their direct sum code.

At a high-level, the idea for the reduction is similar to that in \cref{alg:PCE->ACOUNT} for reducing \PCE to \ACOUNT.
This reduction algorithm also decomposes the two input codes $\code_1, \code_2$ into direct sums of indecomposable codes and searches for pairs of isomorphic summands.
In this case, the algorithm finds a generating set for the direct sum code $\code_1 \oplus \code_2$ using the \AGEN oracle.
Every automorphism for $\code_1 \oplus \code_2$ corresponds to an isomorphism between a summand of $\code_1$ and a summand of $\code_2$.
If there is a bijection between the summands of $\code_1$ and the summands of $\code_2$, then the algorithm constructs an isomorphism between $\code_1$ and $\code_2$.

\begin{prop}
    \label{prop:alg-sPCE->AGEN}
    \Cref{alg:search-PCE->AGEN} correctly solves \sPCE for any pair of $n$-length codes in $O(s n^2 \log(q) + n^4)$ time using a single call to an \AGEN oracle,
    where $s$ is the size of the generating set found by the oracle.
\end{prop}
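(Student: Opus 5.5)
The plan is to decompose first, then extract a cross-isomorphism for every summand from the single set of generators returned by the oracle.

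\emph{Decomposition.} I would run \cref{alg:decompose} on $\mG_1$ and $\mG_2$. This gives permutations $\pi_1,\pi_2$ with $\pi_1(\code_1)=\bigoplus_{i=1}^{m_1}A_i$ and $\pi_2(\code_2)=\bigoplus_{j=1}^{m_2}B_j$, and costs $O(n^3\log(q))$. If $m_1\neq m_2$, output that no isomorphism exists; this is justified by \cref{lem:Slepian}. Otherwise I form $\code'\defn\bigoplus_i A_i\oplus\bigoplus_j B_j$, which is $\code_1\oplus\code_2$ up to the known permutation $\pi_1\times\pi_2$. I then make the single \AGEN call on $\code'$ and obtain a generating set $S$ with $|S|=s$.

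\emph{Summand-level action.} By \cref{cor:no_partial-Aut}, every $\alpha\in S$ maps each summand's index block $X_u$ onto another block $X_{\sigma_\alpha(u)}$. Reading off $\sigma_\alpha$ needs only the image of one coordinate per block, so it costs $O(n)$ per generator and $O(sn)$ in total. This gives a permutation group $H=\inner{\sigma_\alpha:\alpha\in S}$ on the $2m$ summand vertices. The orbits of $H$ are exactly the equivalence classes of summands under $\cong$. Indeed, two equivalent summands are swapped by an automorphism of $\code'$ that fixes everything else. Conversely, by \cref{cor:no_partial-Aut}, an automorphism can only send a summand onto an equivalent summand.

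I would then test whether every orbit contains the same number of $A$'s as $B$'s. By \cref{lem:Slepian}, this holds if and only if $\code_1\cong\code_2$.

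\emph{Constructing the isomorphism (main obstacle).} Knowing that $A_i$ and $B_j$ lie in one orbit is not enough; I need an explicit permutation sending $A_i$ to $B_j$. I would run a Schreier-tree search on the summand graph whose edges are $u\to\sigma_\alpha(u)$ labeled by $\alpha$, for example by BFS from each $A$-vertex. Along the tree path from $A_i$ to $B_j$, I compose the labels restricted to the blocks visited. The product of the generators along the path is a genuine automorphism of $\code'$, so its restriction to $X_{A_i}$ is a bijection onto $X_{B_j}$ that carries $A_i$ onto $B_j$. Restriction is legitimate because automorphisms map whole embedded summands to whole embedded summands.

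Next I fix a bijection between the $A$'s and $B$'s inside each orbit and compute the composed restricted permutation for each matched pair. A path has length at most $2m\le 2n$, and each composition step costs $O(n)$. This must be done for $m\le n$ pairs, so it costs $O(n^3)$; the BFS costs $O(sm)$. I expect the real cost to come from representing and applying the generators, each of length $2n$, over the $k\times 2n$ matrix, which is $O(sn^2\log(q))$.

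Finally I glue the pairwise maps into $\tau\in\cS_n$ with $\tau(\bigoplus A_i)=\bigoplus B_j$ and output $\pi_2^{-1}\circ\tau\circ\pi_1$. I would check it against a generator matrix by one Gaussian elimination in $O(n^3\log(q))$, which fits inside the claimed $O(sn^2\log(q)+n^4)$ bound; the $n^4$ term absorbs the decomposition, the path compositions and the bookkeeping. The delicate points I would verify carefully are that restricting composite automorphisms yields correct isomorphisms, and that orbits of the induced action coincide with equivalence classes even when equal summands repeat.
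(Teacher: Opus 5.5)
Your proposal is correct and follows essentially the paper's route. Both make one \AGEN call on $\pi_1(\code_1)\oplus\pi_2(\code_2)$, use \cref{cor:no_partial-Aut} to see how each generator permutes the indecomposable summands, compose generator labels along paths in the resulting summand graph and restrict them to obtain explicit $A_i\to B_j$ isomorphisms, and glue these. The differences are cosmetic: you replace the paper's Hopcroft--Karp matching with a per-orbit count check, which is equivalent because each orbit induces a complete bipartite graph between its $A$'s and $B$'s. You also spell out the swap-automorphism and homomorphism argument showing that equivalent summands lie in one orbit, a step the paper only asserts.
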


% \anote{Thought for post-submission: Can we modify this algorithm to ensure that the output generating set is an SGS?}

\begin{algorithm}[hb]
    \DontPrintSemicolon
    \SetKwInOut{Input}{Input}
    \SetKwInOut{Output}{Output}

    \caption{Reduction from \sPCE to \AGEN}
    \label{alg:search-PCE->AGEN}
    
    \Input{Generator matrices $\mG_1, \mG_2 \in \F_q^{k \times n}$ for codes $\code_1, \code_2 \subseteq \F_q^n$.}
    \Output{An isomorphism $\sigma \in \cS_n$ such that $\sigma(\code_1) = \code_2$.}
    \hrulefill
    
    \BlankLine
    \tcp{Step 1: Code Decomposition}
    Run \cref{alg:decompose} on $\mG_1, \mG_2$ to obtain decompositions $\pi_1(\code_1)= \bigoplus_{i=1}^{m_1} A_i$ and $\pi_2(\code_2)= \bigoplus_{j=1}^{m_2} B_j$,
    for some indecomposable codes $\set{A_i}_{i=1}^{m_1}$ and $\set{B_j}_{j=1}^{m_2}$ and permutations $\pi_1,\pi_2 \in \cS_n$.\;
    
    \If{$m_1 \neq m_2$}{
        \Return $\bot$.
    }
    Let $m \defn m_1 = m_2$.\;
    Denote the index sets of $\code_1$ and $\code_2$ by $\set{X_i}_{i=1}^m$ and $\set{Y_j}_{j=1}^m$.

    \BlankLine
    \tcp{Step 2: \AGEN Oracle Call}
    Construct the direct sum code $\code \defn \pi_1(\code_1) \oplus \pi_2(\code_2$).\;
    Run \AGEN on $\code$ to obtain a generating set $S = \set{\phi_1,\ldots, \phi_s}$ of $\Aut(\code)$.\;

    \BlankLine
    \tcp{Step 3: Construction of the Equivalence Graph}
    Define the vertex sets $U \defn \set{a_1,\ldots,a_m}$ and $V \defn \set{b_1,\ldots,b_m}$.\;
    Set $E' \defn \emptyset$.\;
    Set $maps \defn \set{}$. \tcp{dictionary of equivalence maps}

    \BlankLine
    \For{$\phi \in S$}{
        \For{$i \gets 1$ \KwTo $m$}{
            $Im(A_i) \defn \phi\mid_{X_i}(A_i)$.\;
            $Im(B_i) \defn \phi\mid_{Y_i}(B_i)$.\;

            \BlankLine
            \tcp{Find where $a_i$ maps to.}
            \For{$j \gets 1$ \KwTo $m$}{
                \If{$Im(A_i) = B_j$}{
                    $E' \gets E' \cup \set{(a_i,b_j)}$. \tcp{add directed edge}
                    $maps[(a_i,b_j)] \defn \phi$. \tcp{update edge label}
                }
                \ElseIf{$Im(A_i) = A_j$ and $j\neq i$}{
                    $E' \gets E' \cup \set{(a_i,a_j)}$.\tcp{add directed edge}
                    $maps[(a_i,a_j)] \defn \phi$. \tcp{update edge label}
                }
            }
            \tcp{Find where $b_i$ maps to.}
            \For{$j \gets 1$ \KwTo $m$}{
                \If{$Im(B_i) = A_j$}{
                    $E' \gets E' \cup \set{(b_i,a_j)}$.\tcp{add directed edge}
                    $maps[(b_i,a_j)] \defn \phi$. \tcp{update edge label}
                }
                \ElseIf{$Im(B_i) = B_j$ and $j\neq i$}{
                    $E' \gets E' \cup \set{(b_i,b_j)}$.\tcp{add directed edge}
                    $maps[(b_i,b_j)] \defn \phi$. \tcp{update edge label}
                }
            }
        }
    }
    Define graph $G' \defn (U \cup V, E')$.\;

    \emph{(continued on next page)}   
\end{algorithm}
% manual page break
\begin{algorithm}
    \DontPrintSemicolon
    \emph{(continued from previous page)}    

    \BlankLine
    \tcp{Step 4: Construction of the Bipartite Graph}
     Set $E \defn \emptyset$.\;
     Set $equivmaps \defn \set{}$. \tcp{edge labels}
        Let $G_{a_i}$ denote the connected component in $G'$ that contains $a_i$.\;
        Run \textsc{Dijkstra's Algorithm} on $G'$ from root $a_i$ to obtain paths $\set{P_v \subseteq E' \vcentcolon v\in G_{a_i}}$ to all vertices in its connected component $G_{a_i}$.

        \BlankLine
        \For{$v \in G_{a_i} \cap V$}{
        $j \gets V.\mathrm{index}(v)$.\tcp{$v=b_j$ for some $j\in[m]$}
        
        $\psi_{i,j} \gets \id$.\tcp{global automorphism}

        \For{$e \in P_v$}{
            $\phi_e \gets \mathrm{maps}[e]$.\tcp{get directed edge label}
            $\psi_{i,j} \gets \phi_e \circ \psi_{i,j}$.\;
        }

        $E \gets E \cup \{(a_i,b_j)\}$.\;
        $\psi_{i,j} \gets \psi_{i,j}|_{X_i}$.\tcp{restrict to support of $A_i$}
        $\mathrm{equivmaps}[(a_i,b_j)] \gets \psi_{i,j}$.\tcp{update edge label}
    }

    Define graph $G \defn (U \sqcup V, E)$.

    \BlankLine
    \tcp{Step 5: Perfect Matching between Summands}
    Run the \textsc{Hopcroft-Karp Algorithm} on $G$ to obtain a maximum bipartite matching $M \subseteq E$.\;
    \If{$\abs{M} \neq m$}{
        \Return $\bot$.\;
    }
    \tcp{Find the isomorphism corresponding to the perfect matching.}
    Let $\sigma \vcentcolon [n] \to [n]$.\;
    \tcp{Define isomorphism from $\code_1$ to $\code_2$ piecewise}
    \For{$(i,j) \in M$}{
        $\psi_{i,j} \gets equivmaps[(a_i,b_j)]$. \tcp{get edge label}
        $\sigma\mid_{X_i} \defn \psi_{i,j}$. \tcp{image of $i$-th set of the partition of $[n]$}
    }
    \Return{$\pi=\pi_2^{-1}\sigma \pi_1$}. \tcp{compose isomorphism with equivalence maps}
\end{algorithm}

\begin{proof}
    Let $\code_1, \code_2 \subseteq \F_q^n$ be the two input codes.
    \smallskip

    \emph{(Correctness)}
    To prove that \cref{alg:search-PCE->AGEN} is correct, we must show that if $\code_1 \cong \code_2$, the algorithm outputs an isomorphism $\pi$ between the two codes, and outputs $\bot$ otherwise.
    
    In Step 1, \cref{alg:decompose} is applied to the input generator matrices to decompose the codes into direct sums of indecomposable codes.
    This yields 
    % \begin{equation*}
    %     \pi_1(\code_1) = A_1 \oplus \dots \oplus A_{m_1} \quad
    %     \text{ and } \quad \pi_2(\code_2)= B_1 \oplus \dots \oplus B_{m_2},
    % \end{equation*}
    $\pi_1(\code_1) = A_1 \oplus \dots \oplus A_{m_1}$ and $\pi_2(\code_2)= B_1 \oplus \dots \oplus B_{m_2}$
    for some  $m_1,m_2 \leq n$ and indecomposable codes $\set{A_i}_{i=1}^{m_1}$ and $\set{B_j}_{j=1}^{m_2}$.
    By \cref{lem:Slepian}, the codes can only be equivalent if they have the same number of summands and $m_1 = m_2$.
    If this is the case, we denote the corresponding index sets by $\set{X_i}_{i=1}^m$ and $\set{Y_j}_{j=1}^m$.
    
    In Step 2, the algorithm constructs the direct sum code $\code = \pi_1(\code_1) \oplus \pi_2(\code_2)$.
    Let $Z_1 \defn [n]$ and $Z_2 \defn \set{n+1,\ldots,2n}$ denote the index sets of $\pi_1(\code_1)$ and $\pi_2(\code_2)$ in $\code$, respectively.
    These form a partition $[2n] = Z_1 \sqcup Z_2$.
    The \AGEN oracle is then called once on $\code$ to obtain a set $S$ of automorphisms that generate the group $\Aut(\code)$.
    By definition, any automorphism of $\code$ can be written as a composition of generators in $S$.

    In Step 3, we construct a directed graph $G'$ that represents all equivalences between the indecomposable summands indicated by the generating set $S$.
    The summands $\set{A_i}_{i=1}^m$ and $\set{B_j}_{j=1}^{m}$ are represented by the vertices $\set{a_1,\ldots,a_m}$ and $\set{b_1,\ldots,b_m}$, respectively.
    The algorithm then iterates through all $s$ generators in $S$. 
    For each generator $\phi \in S$, it finds the image of each summand $A_i$ under $\phi$, given by the restriction $\phi\mid_{X_i}(A_i)$.
    By \cref{cor:no_partial-Aut}, if $\set{A_i}_{i=1}^m$ and $\set{B_j}_{j=1}^{m}$ are decompositions of equivalent codes, then every $A_i$ must map either to some $A_j$ or $B_j$ under $\phi$.
    A directed edge is added to $G'$ between the vertex for $A_i$ and its image:
    If $\phi\mid_{X_i}(A_i) = B_j$ for some $j\in [m]$, then the edge $(a_i,b_j)$ is added. 
    Otherwise if $\phi\mid_{X_i}(A_i) = A_j$, edge $(a_i,a_j)$ is added; in the event that $A_i$ maps to itself, no self-loop edge is added so that the final graph $G'$ is a simple graph.
    The algorithm keeps track of the automorphisms in $S$ that maps each summand to another, using a dictionary object \emph{maps} which functions as a set of labels for each edge.
    While there may be multiple automorphisms that map $A_i$ to $B_j$ (or $A_j$ for $j\neq i$), the algorithm updates the corresponding dictionary entry to keep only the most recently encountered one.
    By construction, the final graph $G'$ contains edges only between equivalent summands and does not include self-loops or multiple edges between vertices.

    In Step 4, the algorithm constructs a bipartite graph $G$ using the equivalence graph $G'$.
    For each summand $A_i$, it runs Dijkstra's algorithm to find a shortest path between the vertex $a_i$ and every other node in its connected component, $G_{a_i}$.
    For every vertex $b_j$ in $V$ that is reachable from $a_i$, the algorithm adds an edge $(a_i,b_j)$ to the new graph $G$.
    Then, it traverses the path $P_{b_j}$ from $a_i$ to $b_j$, retrieves the automorphism label for each edge in this path, and composes the sequence of automorphisms to obtain an automorphism that maps $A_i$ to $B_j$.
    For example, if $\set{e_1,\ldots,e_r}$ is a path from $a_i$ to $b_j$, the corresponding sequence of generators $\phi_{e_1},\ldots, \phi_{e_r}$ is composed into $\psi_{i,j} \defn \parens{\phi_{e_r} \circ \ldots \circ \phi_{e_2} \circ \phi_{e_1}}\mid_{X_i}$.
    By \cref{cor:no_partial-Aut}, the images of the summands are disjoint, so this composition is a well-defined bijection that satisfies $\psi_{i,j}(A_i) = B_j$.
    By construction, the resulting graph $G$ is bipartite on vertex sets $U$ and $V$, and contains one edge for every equivalent pair of summands belonging to different codes.

    Suppose that $\code_1 \cong \code_2$.
    In this case, we have $\code_1 \cong A_1 \oplus \dots \oplus A_m \cong B_1 \oplus \dots \oplus B_m \cong \code_2$. 
    By \cref{lem:Slepian}, there exists an isomorphism map $\pi \vcentcolon \pi_1(\code_1) \to \pi_2(\code_2)$ that reorders the summands $\set{A_i}_{i=1}^m$ and maps each one to an equivalent summand in $\set{B_j}_{j=1}^{m}$, i.e. $\pi\mid_{X_i} (A_i) = B_{\sigma_\pi(i)}$ for all $i\in [m]$ for some permutation $\sigma_\pi \vcentcolon [n] \to [n]$ induced by $\pi$.
    By construction of $\code$, the group $\Aut(\code)$ must contain 
    % an automorphism that maps $\pi_1(\code_1)$ to $\pi_2(\code_2)$ in this way.
    % More specifically, there exists 
    some $\pi' \in \Aut(\code)$ such that $\pi'\mid_{Z_1}(\pi_1(\code_1)) = \pi_2(\code_2) = \pi(\pi_1(\code_1))$ 
    (and symmetrically, $\pi'\mid_{Z_2}(\pi_2(\code_2)) = \pi_1(\code_1) = \pi^{-1}(\pi_2(\code_2))$).
    We show that the algorithm outputs $\pi$.

    In Step 5, the algorithm explores the bipartite graph $G$ and runs the Hopcroft-Karp algorithm from \cite{HK73} to find a maximum bipartite matching $M$ between the vertices $a_1,\ldots,a_m$ and $b_1,\ldots,b_m$.
    If the matching is not perfect and has fewer than $m$ edges, then no isomorphism exists from $\set{a_i}_{i=1}^m$ to $\set{b_j}_{j=1}^{m}$.
    If $\abs{M} = m$, then we have a bijective map between $\set{A_i}_{i=1}^m$ and $\set{B_j}_{j=1}^{m}$.
    The automorphism labels for the edges in the matching $M$ are then pieced together to obtain a isomorphism $\pi$ from from $\pi_1(\code_1)$ to $\pi_2(\code_2)$.
    By construction, the automorphisms for the edges in $G$, and hence $M$, have disjoint supports, so their composition is a well-defined bijection that maps $\pi(\pi_1(\code_1))=\pi_2(\code_2)$.
    Finally, composing this isomorphism with the equivalence maps yields an isomorphism $\sigma= \pi_2^{-1}\pi\pi_1$ that satisfies
    \begin{equation*}
        \sigma(\code_1)
        = \pi_2^{-1}\pi\pi_1(\code_1)
        = \pi_2^{-1}(\pi_2(\code_2))
        = \code_2.
    \end{equation*}

    \emph{(Complexity)}
    We analyze the runtime for each step.
    Step 1 runs \cref{alg:decompose} on both input codes, which requires $2\cdot O(n^3 \log(q)) = O(n^3 \log(q))$ time.
    In Step 2, the direct sum code $\code = \code_1 \oplus \code_2$ can be constructed from the given generator matrices in $O(n \log(q))$ time.
    A single oracle call to \AGEN on $\code$ yields a generating set $S$ for its automorphism group of size $s \defn \abs{S}$.
    The equivalence graph $G'$ constructed in Step 3 requires iterating over all $s$ generators in $S$ and every pair of summands in the sets $\set{A_i}_{i=1}^m$ and $\set{B_j}_{j=1}^{m}$.
    This requires a total of $O(s\cdot m^2 \log(q))$ time
    In Step 4, the algorithm runs Dijkstra's algorithm starting from $a_i$ for each of the $m$ vertices in $U$.
    In the worst case, when all the summands are equivalent to each other, $G'$ will be a connected graph; this requires every vertex to be explored in each iteration.
    Hence, finding shortest paths from $a_i$ to every other vertex in its connected component $G_{a_i}$ takes at most $\Theta(\abs{E'} + \abs{U\cup V}^2) = \Theta(m(2m-1) + (2m)^2) = O(m^2)$ time.
    After exploring the connected component $G_{a_i}$, the algorithm iterates through each vertex $b_j$ in $G_{a_i} \cap V$, of which there are at most $m$.
    Then it traverses the path between $a_i$ and all such $b_j$, which contains at most $m(2m-1)$ edges.
    So, for all $m$ vertices in $U$, this step requires at most $m\cdot \parens{O(m^2) + O(m\cdot m(2m-1))} = O(m^4)$ time.
    Finally, in Step 5, running the Hopcroft-Karp algorithm on the graph $G$ with $\abs{V} = 2m \leq 2n$ vertices and $\abs{E} \leq 2m^2 \leq 2n^2$ edges requires at most $O(\abs{E}\cdot \sqrt{\abs{V}}) = O(m^{2.5}) = O(n^{2.5})$ time.
    In the case where this outputs a perfect matching $M$ of size $m$, iterating through each of the $m$ edges and composing the automorphisms takes $O(m)$ time.
    
    Therefore, the total asymptotic runtime of \cref{alg:search-PCE->AGEN} is
    $O(n^3 \log(q)) + O(n \log(q)) + O(s\cdot m^2 \log(q)) + O(m^4) + O(m^{2.5}) 
    = O(s \cdot n^2 \log(q) + n^4)$.
\end{proof}

% ===========================================
\subsection{Search-\LCE Reduces to \MAGEN}
\label{subsec:sLCE->MAGEN}

Now we prove the monomial analog of \cref{prop:alg-sPCE->AGEN}.
The reduction algorithm is nearly identical to \cref{alg:search-PCE->AGEN} but uses the properties of monomial equivalence and a \MAGEN oracle instead of \AGEN. 
We include the algorithm and its proof here for completeness.

\begin{prop}
    \label{prop:alg-sLCE->MAGEN}
    \Cref{alg:search-LCE->MAGEN} correctly solves \sLCE for any pair of $n$-length codes in $O(s\cdot n^2 \log(q) + n^4)$ time using a single call to an \MAGEN oracle,
    where $s$ is the size of the generating set found by the oracle.
\end{prop}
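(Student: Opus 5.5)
The plan follows the proof of \cref{prop:alg-sPCE->AGEN} step by step. Permutation automorphisms are replaced by monomial automorphisms $\mu = (\vecv_\mu,\pi_\mu)$, and each permutation-specific lemma is swapped for its monomial counterpart: \cref{lem:monomial_decomp} in place of \cref{lem:Slepian}, and \cref{lem:no_partial} in place of \cref{cor:no_partial-Aut}. \Cref{alg:search-LCE->MAGEN} should be \cref{alg:search-PCE->AGEN} with the \AGEN call replaced by a single \MAGEN call on $\code = \pi_1(\code_1)\oplus\pi_2(\code_2)$.

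\emph{Step 1 (decomposition).} \cref{alg:decompose} produces permutations $\pi_1,\pi_2$ and indecomposable summands $A_i$, $B_j$. By \cref{lem:monomial_decomp}, if $m_1\neq m_2$ then $\code_1\not\cong_\cM\code_2$, so outputting $\bot$ is correct. Indecomposability does not depend on which equivalence is used, so the summands are also the monomial-indecomposable summands.

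\emph{Steps 2 and 3 (the equivalence graph).} For each generator $\mu\in S$, \cref{lem:no_partial} shows that $\pi_\mu$ maps every summand index set $X_i$ or $Y_i$ bijectively onto another summand index set. The restriction $\mu|_{X_i}$ (scaling entries and permutation, both restricted) is therefore a monomial isomorphism from $A_i$ onto the corresponding summand. We add edges exactly as in the permutation case, labelled by $\mu$.

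\emph{Steps 4 and 5 (completeness, the main point).} Edge labels along a path are composed; monomial maps compose in the group $(\F_q^*)^{2n}\rtimes\cS_{2n}$, and restricting to $X_i$ commutes with this composition because each factor maps summand blocks onto blocks. The crucial claim is the converse: if $A_i\cong_\cM B_j$, then $b_j$ is reachable from $a_i$ in $G'$.

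To prove it, take $\mu\in\MAut(\code)$ mapping $X_i$ onto $Y_j$. Such a $\mu$ exists: extend a monomial isomorphism $A_i\to B_j$ by its inverse on $Y_j$ and by the identity elsewhere. Write $\mu$ as a word in $S\cup S^{-1}$. The induced action on summands, given by \cref{lem:no_partial}, is a homomorphism from $\MAut(\code)$ to $\cS_{2m}$. Consequently, the orbit of $a_i$ under this action equals the connected component of $a_i$ in the graph whose edges are the generator moves.

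Inverse generators traverse edges backwards, so reachability must be read in the underlying undirected graph. Undirected connectivity still suffices, for two reasons. First, the label of a reversed edge can be inverted. Second, since the action is that of a finite group, each generator permutation has finite order, so a reverse step equals a power of forward steps; directed reachability therefore coincides with connectivity. This is the same subtlety as in the permutation proof, and I will handle it identically.

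The matching step then yields summand-wise monomial isomorphisms with disjoint supports. Gluing them gives $\sigma\in\Isom(\pi_1(\code_1),\pi_2(\code_2))$, and $\pi_2^{-1}\sigma\pi_1$ is the output, with scaling vectors permuted accordingly.

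\emph{Complexity.} The analysis is unchanged except for bookkeeping. Each restriction and comparison now also handles a scaling vector of length at most $2n$ over $\F_q$, which costs $O(n\log q)$ per generator and summand. This keeps the total at $O(sn^2\log(q)+n^4)$.

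The step I expect to be delicate is checking the image condition ``$Im(A_i)=B_j$'' for monomial maps. I would test it via index sets, comparing $\pi_\mu(X_i)$ with $Y_j$. This is valid by \cref{lem:no_partial}, since $\mu$ is an automorphism of $\code$, and it avoids comparing codes up to scaling.
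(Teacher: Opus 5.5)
Your proposal is correct and follows the paper's route. The paper also treats the proof as structurally identical to \cref{prop:alg-sPCE->AGEN}, relying on monomial automorphisms preserving the direct-sum decomposition (\cref{lem:no_partial}), and it reuses the same complexity accounting. Your reachability argument (the block action as a homomorphism $\MAut(\code)\to\cS_{2m}$, plus the finite order of each generator's block permutation, so that reachability in the directed graph $G'$ coincides with connectivity) spells out a step the paper only asserts.
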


\begin{proof}
    The proof is structurally identical to that of \cref{prop:alg-sPCE->AGEN}. 
    By running the \MAGEN oracle on the direct sum code $\code \defn \pi_1(\code_1) \oplus \pi_2(\code_2)$, we obtain a generating set for $\MAut(\code)$ that contains the linear isometries mapping the summands of $\code_1$ to $\code_2$.
    Because the monomial automorphism group action preserves the direct sum decomposition, as in \cref{alg:search-PCE->AGEN}, the equivalence graph construction and shortest-path search produce a mapping between the indecomposable summands (if it exists). 
    Piecewise composition of the maps corresponding to the edges of the shortest path in the graph yields the final linear isometry between $\code_1$ to $\code_2$. 

    A single oracle call is made to the \MAGEN oracle.
    For the time complexity, finding the shortest paths and extracting the maximum bipartite matching requires the same time as for \cref{alg:search-PCE->AGEN}; 
    this is dominated by the $O(s\cdot n^2 \log(q) + n^4)$ time required to process the $s$ monomial generators.
\end{proof}

\begin{algorithm}[H]
    \DontPrintSemicolon
    \SetKwInOut{Input}{Input}
    \SetKwInOut{Output}{Output}

    \caption{Reduction from \sLCE to \MAGEN}
    \label{alg:search-LCE->MAGEN}
    
    \Input{Generator matrices $\mG_1, \mG_2 \in \F_q^{k \times n}$ for codes $\code_1, \code_2 \subseteq \F_q^n$.}
    \Output{A linear isometry $\pi \in (\F_q^*)^n \rtimes \cS_n$ such that $\pi(\code_1) = \code_2$.}
    \hrulefill
    
    \tcp{Step 1: Code Decomposition}
    Run \cref{alg:decompose} on $\mG_1, \mG_2$ to obtain decompositions $\pi_1(\code_1)= \bigoplus_{i=1}^{m_1} A_i$ and $\pi_2(\code_2)= \bigoplus_{j=1}^{m_2} B_j$,
    for some indecomposable codes $\set{A_i}_{i=1}^{m_1}$ and $\set{B_j}_{j=1}^{m_2}$ and permutations $\pi_1,\pi_2 \in \cS_n$.\;
    
    \If{$m_1 \neq m_2$}{
        \Return $\bot$.
    }
    Let $m \defn m_1 = m_2$.\;
    Denote the index sets of $\code_1$ and $\code_2$ by $\set{X_i}_{i=1}^m$ and $\set{Y_j}_{j=1}^m$.

    \BlankLine
    \tcp{Step 2: \MAGEN Oracle Call}
    Construct the direct sum code $\code \defn \pi_1(\code_1) \oplus \pi_2(\code_2$).\;
    % \BlankLine
    Run \MAGEN on $\code$ to obtain a generating set $S = \set{\phi_1,\ldots, \phi_s}$ of $\MAut(\code)$.\;

    \BlankLine
    \tcp{Step 3: Construction of the Equivalence Graph}
    Define the vertex sets $U \defn \set{a_1,\ldots,a_m}$ and $V \defn \set{b_1,\ldots,b_m}$.\;
    Set $E' \defn \emptyset$.\;
    Set $maps \defn \set{}$. \tcp{dictionary of equivalence maps}
    
    \emph{(continued on next page)}   
\end{algorithm}
% manual page break
\begin{algorithm}[H]
    \DontPrintSemicolon
    \emph{(continued from previous page)}  
    
    \For{$\phi \in S$}{
        \For{$i \gets 1$ \KwTo $m$}{
            $Im(A_i) \defn \phi\mid_{X_i}(A_i)$.\;
            $Im(B_i) \defn \phi\mid_{Y_i}(B_i)$.\;

            % \BlankLine
            \tcp{Find where $a_i$ maps to.}
            \For{$j \gets 1$ \KwTo $m$}{
                \If{$Im(A_i) = B_j$}{
                    $E' \gets E' \cup \set{(a_i,b_j)}$.\;
                    $maps[(a_i,b_j)] \defn \phi$. \tcp{update edge label}
                }
                \ElseIf{$Im(A_i) = A_j$ and $j\neq i$}{
                    $E' \gets E' \cup \set{(a_i,a_j)}$.\;
                    $maps[(a_i,a_j)] \defn \phi$. \tcp{update edge label}
                }
            }
            \tcp{Find where $b_i$ maps to.}
            \For{$j \gets 1$ \KwTo $m$}{
                \If{$Im(B_i) = A_j$}{
                    $E' \gets E' \cup \set{(b_i,a_j)}$.\;
                    $maps[(b_i,a_j)] \defn \phi$. \tcp{update edge label}
                }
                \ElseIf{$Im(B_i) = B_j$ and $j\neq i$}{
                    $E' \gets E' \cup \set{(b_i,b_j)}$.\;
                    $maps[(b_i,b_j)] \defn \phi$. \tcp{update edge label}
                }
            }
        }
    }

    Define graph $G' \defn (U \cup V, E')$.\; 
    \BlankLine
    \tcp{Step 4: Construction of the Bipartite Graph}
     Set $E \defn \emptyset$.\;
     Set $equivmaps \defn \set{}$. \tcp{edge labels}
     \For{$i \gets 1$ to $m$}{
        \BlankLine
        Let $G_{a_i}$ denote the connected component in $G'$ that contains $a_i$.
        
        \BlankLine
        Run \textsc{Dijkstra's Algorithm} on $G'$ from root $a_i$ to obtain paths $\set{P_v \subseteq E' \vcentcolon v\in G_{a_i}}$ to all vertices in its connected component $G_{a_i}$.

        \BlankLine
        \For{$v \in G_{a_i} \cap V$}{
            $j \defn V.\text{index}(v)$. \tcp{$v = b_j$ for some index $j \in [m]$.}
            $\psi_{i,j} \defn (\vecone, \id)$. \tcp{global monomial automorphism}
            \For{$e \in P_v$}{
                $\phi_e \defn maps[e]$. \tcp{get edge label}
                $\psi_{i,j} \gets \phi_e \circ \psi_{i,j}$. \tcp{compose monomial automorphisms}
            }
            $E \gets E \cup \set{(a_i, b_j)}$.\;
            $\psi_{i,j} \gets \psi_{i,j}\mid_{X_i}$. \tcp{restrict to support of $A_i$}
            $equivmaps[(a_i, b_j)] \defn \psi_{i,j}$. \tcp{update edge label}
        }
    }
    Define graph $G \defn (U \sqcup V, E)$.

    % \BlankLine
    \tcp{Step 5: Perfect Matching between Summands}
    Run the \textsc{Hopcroft-Karp Algorithm} on $G$ to obtain a maximum bipartite matching $M \subseteq E$.\;
    \If{$\abs{M} \neq m$}{
        \Return $\bot$.\;
    }
    \tcp{Find the linear isometry corresponding to the perfect matching.}
    Let $\sigma \defn (\vecone, \id)$.\tcp{initialize map to be defined}

    % \BlankLine 
    \tcp{Define the linear isometry from $\code_1$ to $\code_2$ piecewise}
    \For{$(i,j) \in M$}{
        $\psi_{i,j} \gets equivmaps[(a_i,b_j)]$. \tcp{get edge label}
        $\sigma\mid_{X_i} \defn \psi_{i,j}$. \tcp{image of $i$-th set of the partition of $[n]$}
    }
    $\pi \defn (\vecone,\pi_2^{-1}) \circ \sigma \circ (\vecone,\pi_1)$.\tcp{compose linear isometry with permutation isomorphisms}
    \Return{$\pi$}. 
\end{algorithm}

% ===========================================
\subsection{\AGEN Reduces to \PCE}
\label{subsec:AGEN->PCE}

We now prove the reverse reduction, from \AGEN to \PCE.
Note that \AGEN reduces to \sPCE as a consequence, since there is a trivial reduction from \PCE to \sPCE.
The reduction algorithm relies on two oracles, for solving \Apart and \sPCE; these oracles can be replaced with \cref{alg:APART->PCE} and the algorithm from \cref{lem:search-to-decision-PCE}, which rely on a \PCE oracle.

For any input code $\code$ with generator matrix $\mG$, let $G \defn \Aut(\code)$ denote its permutation automorphism group and let $G = G^{(E_0)} \geq G^{(E_1)} \geq \dots \geq G^{(E_s)}$ be the setwise stabilizer chain given by $\mG$.
The algorithm finds a generating set for $G$ by finding a generating set for the smallest subgroup $G^{(E_s)}$ in the chain, together with a set of left coset representatives for the quotient space given by each pair of consecutive subgroups along the chain; this is given by \cref{lem:setwise-gen}.
The subgroup $G^{(E_s)}$ stabilizes every projective class $E_1,\ldots,E_s$ setwise, and so contains all the automorphisms that swap indices belonging to the same class.
This group is generated by all possible transpositions for every projective class.
To find left coset representatives for $G^{(E_{i-1})}/G^{(E_i)}$ for every $1\leq i \leq s$, the algorithm searches for automorphisms that maps indices from $E_i$ to some $E_j$ for every $j$ in the orbit of $i$, under the condition that indices in $E_1,\ldots,E_{i-1}$ are stabilized within their respective classes.
To test whether such an automorphism exists, the algorithm invokes the \PCE on two carefully-constructed test matrices which guarantee that any automorphism between them belongs to $G^{(E_{i-1})}/G^{(E_i)}$ and maps $i$ to $j$.

\begin{prop}
    \label{prop:alg-AGEN->PCE}
    \Cref{alg:AGEN->PCE} correctly solves \AGEN for any $n$-length $q$-ary code in $O(n^6 \log(q))$ time using $1$ call to an \Apart oracle, $O(n^2)$ queries to a \sPCE oracle, and $O(n^2)$ additional calls to a \PCE oracle.
\end{prop}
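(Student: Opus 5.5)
The algorithm will follow the template of \cref{alg:ACOUNT->PCE}, replacing the counting step by a search step. The target is the generating set $\inner{S_{E_s}\cup\bigcup_{i=1}^s R_i}=G$ of \cref{lem:setwise-gen}, and the proof has a correctness part and a complexity part.

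\emph{Redundancy classes and $S_{E_s}$.} First partition $[n]$ into the redundancy classes $E_1,\ldots,E_s$, using first-appearance representatives $B'=\set{t_1,\ldots,t_s}$, and record the maximum multiplicity $m$. By the proof of \cref{lem:setwise-stab-chain-order}, $G^{(E_s)}\cong\prod_i\Sym(E_i)$, so it is generated by the adjacent transpositions inside each class. Call this set $S_{E_s}$; it has size at most $n$, needs no oracle call, and each element is trivially an automorphism because it swaps identical columns.

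\emph{Orbits.} Make the single \Apart call to obtain the orbit partition. As in \cref{prop:alg-ACOUNT->PCE}, this restricts the candidate targets for $t_i$ to $X\cap B'$.

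\emph{Coset representatives.} Traverse the chain using the same pinned matrix $\Gpin$. At stage $i$, $\Gpin$ is $\mG$ with blocks of $(a+1)m$ copies of $\vecg_{t_a}$ appended for each $a<i$. For each $t_j\in X\cap B'$ with $j>i$, form $\mG_A=[\Gpin\mid \mA]$ and $\mG_B=[\Gpin\mid\mB]$, where $\mA$ and $\mB$ are $(l+1)m$ copies of $\vecg_{t_i}$ and $\vecg_{t_j}$. Query \PCE on $(\mG_A,\mG_B)$. If the answer is \textsc{Yes}, call \sPCE on the same pair to obtain a permutation $\rho$ of the augmented index set.

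The key step is to show that $\rho$ induces an element of $G^{(E_{i-1})}$ sending $E_i$ to $E_j$. Since the appended block sizes are pairwise distinct and exceed $m$, \cref{lem:col-frequency} forces $\rho$ to map each block of copies of $\vecg_{t_a}$ onto the column class with the same frequency. Hence $\rho$ maps the new block to the $\vecg_{t_j}$ class and preserves the earlier pinned classes setwise.

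Then normalize $\rho$. Compose it with a permutation of identical columns so that it maps appended columns to appended columns; this does not change the code. Its restriction to $[n]$ is then an automorphism of $\code$: deleting corresponding identical columns from both generator matrices preserves the equivalence, because the deleted columns are duplicates of retained columns with matching images. This restriction stabilizes $E_1,\ldots,E_{i-1}$ and sends $E_i$ to $E_j$.

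The distinct targets $E_j$ give distinct cosets of $G^{(E_i)}$ in $G^{(E_{i-1})}$. Conversely, every coset is realized, since an element $g\in G^{(E_{i-1})}$ extends naturally to an equivalence of the test pair. So, together with the identity, these restrictions form a left transversal $R_i$. I expect this normalization and restriction argument to be the main obstacle, since \sPCE may return a permutation that mixes appended copies with original identical columns. I would handle it by the explicit swap within redundancy classes described above, and argue it carefully.

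\emph{Complexity.} The orbit argument of \cref{prop:alg-ACOUNT->PCE} bounds the number of \PCE calls by $O(n^2)$, and at most as many \sPCE calls are made. Each test matrix has $O(n^3)$ columns, so building it costs $O(kn^3\log q)$, which over $O(n^2)$ queries gives $O(n^6\log q)$. Normalizing and restricting each $\rho$ costs $O(n^3)$, which is dominated.
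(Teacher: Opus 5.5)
Your proposal is correct and follows the paper's proof essentially step for step: redundancy classes and transpositions for $G^{(E_s)}$, one \APART call, the same pinned test matrices queried with \PCE and then \sPCE, and \cref{lem:setwise-gen}. Your conversion of $\rho$ (compose with a within-class permutation, then puncture the appended coordinates) gives the same $\sigma$ as the paper's greedy column matching, and it and your explicit converse (every coset of $G^{(E_i)}$ in $G^{(E_{i-1})}$ is realized) justify steps the paper only asserts. One small fix is needed: if $l=i-1$ as in the paper, your new block of $(l+1)m=im$ copies has the same size as your pinned block for $a=i-1$, and ``pairwise distinct and larger than $m$'' would not by itself separate the frequencies $m_a+c_a$ (with $c_a$ the number of copies appended for $E_a$). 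Take the block sizes to be distinct multiples of $m$, e.g.\ $am$ for the $a$-th pinned class and $im$ for the new block, so that all frequencies lie in disjoint intervals $(c_a,c_a+m]$.
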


As an immediate consequence of \cref{thm:APART/MAPART->PCE/LCE,lem:search-to-decision-PCE}, we obtain a reduction from \AGEN to \PCE that makes $O(n^4)$ oracle calls, as claimed in \cref{thm:AGEN/MAGEN->PCE/LCE}.
\vspace{5pt}

\begin{algorithm}[H]
    \DontPrintSemicolon
    \SetKwInOut{Input}{Input}
    \SetKwInOut{Output}{Output}
    \SetKwFunction{SearchPCE}{Search-PCE}
    % \SetKwFunction{APART}{APART}

    \caption{Reduction from AGEN to PCE}
    \label{alg:AGEN->PCE}
    
    \Input{A generator matrix $\mG \in \F_q^{k \times n}$ of a linear code $\code \subseteq \F_q^n$.}
    \Output{A generating set $S \subseteq \cS_n$ for the automorphism group $\Aut(\code)$.}
    \hrulefill
    
    \BlankLine
    \tcp{Step 1: Partition index set into column classes and construct coord set $B'$.}
    Let $\set{\vecg_1,\ldots,\vecg_n}$ be the set of column vectors of $\mG$.\;
    $\cE \defn \emptyset$. \tcp{collection of redundancy sets}
    $seen \defn \emptyset$. \tcp{keep track of unique columns}
    $B' \defn \emptyset$. \tcp{indices of first appearances}
    
    \For{$i \gets 1$ \KwTo $n$}{
        \If{$\vecg_i \in seen$}{
            \For{$E\in \cE$}{
                $\vecv \gets E[1]$.\;
                \If{$\vecg_i = \vecv$}{
                    $E \gets E \cup \set{\vecg_i}$.\;
                }
            }
        }
        \Else{
            $E \defn \set{\vecg_i}$.\;
            $seen \gets seen \cup \set{\vecg_i}$.\;
            $\cE \gets \cE \cup \set{E}$.\;           
            $B' \gets B' \cup \set{i}$. \tcp{record coord of new unique column}
        }
    }
    $m \gets 0$.\;
    $S \gets \emptyset$. \tcp{initialize generating set}
    
    \emph{(continued on next page)}   
\end{algorithm}
% manual page break
\begin{algorithm}[H]
    \DontPrintSemicolon
    \emph{(continued from previous page)}  
     
     % \BlankLine
    \tcp{Step 2: Generate the smallest subgroup $G^{(E_s)}$ using transpositions.}
    \For{$E \in \cE$}{
        $\vece_1 \gets E[1]$.\;
        \For{$\vece_k \in E$ \emph{\textbf{where}} $\vece_k \neq \vece_1$}{
            $S \gets S \cup \set{(\vece_1, \vece_k)}$.\;
        }
        \tcp{Find maximum column multiplicity.}
        \If{$\abs{E} > m$}{
            $m \gets \abs{E}$.\;
        }
    }    
    \BlankLine
    \tcp{Step 3: Find orbit partition using the \Apart oracle.}
    Run $\APART(\mG)$ to obtain coordinate orbits $X_1,\ldots,X_\ell$.\;

    \BlankLine
    $\cG \gets \emptyset$. \tcp{track permanently fixed columns in subchain}
    $\Gpin \gets \mG$. \tcp{initialize global base matrix}
    
    \tcp{Step 4: Compute all coset transversals $\sigma \in R_i$ along setwise stabilizer chain.}
    \For{$i \in B'$}{
        $X \defn$ unique orbit in $\set{X_1,\ldots,X_\ell}$ containing $i$.\;
        
        \BlankLine
        \tcp{Find a mapping from $E_i$ to each valid target class.}
        \For{$j \in X \cap B'$ \emph{\textbf{where}} $j > i$}{
            $l \gets \abs{\cG}$.\;
            $\mA \defn (l+1)m$ copies of $\vecg_i$.\;
            $\mB \defn (l+1)m$ copies of $\vecg_j$.\;
            $\mG_A \gets \bracks{\Gpin \mid \mA}$.\;
            $\mG_B \gets \bracks{\Gpin \mid \mB}$.\;
            
           \BlankLine
                \tcp{Check if there exists an automorphism that maps $i$ to $j$.}
                
                \If{$\PCE(\mG_A, \mG_B) \to $ \emph{\textsc{Yes}}}{
                    \BlankLine
                    \tcp{If it exists, find it via the \SearchPCE oracle.}
                    Run $\SearchPCE(\mG_A, \mG_B)$ to obtain a map $\pi$.
                    
                    $U \gets \set{1, \dots, n}$. \tcp{keep track of mapped indices}
                    $\sigma \defn \id$. \tcp{initialize generator}

                    \For{$c \gets 1$ \KwTo $n$}{
                        \eIf{$\pi(c) \le n$}{
                            $t \defn \pi(c)$\;
                        }{
                            $t \defn$ coord of the first column in $\mG_B$ identical to column $\pi(c)$.
                           
                        }
                        $t' \defn \min\set{ u \in U \mid \vecg_u = \vecg_t}$\;
                        $\sigma(c) \gets t'$\;
                        $U \gets U \setminus \set{t'}$\;
                    }
                $S \gets S \cup \set{\sigma}$. \tcp{add transversal representative}
            }
        }
        
        \BlankLine
        $\mA \defn (l+1)m$ copies of $\vecg_i$.\;
        $\Gpin \gets \bracks{\Gpin \mid \mA}$. \tcp{fix base point for next iteration}
        $\cG \gets \cG \cup \set{\vecg_i}$.\;
    }
    
    \BlankLine
    \Return{$S$}.
\end{algorithm}

\begin{proof}
    Let $\mG \in \F_q^{k \times n}$ be the input generator matrix of the code $\code$, and $G = \Aut(\code)$ denote its automorphism group. 
    Let $\set{\vecg_1,\ldots,\vecg_n}$ denote the columns of $\mG$ in order of appearance.
    We prove that \cref{alg:ACOUNT->PCE} correctly outputs a generating set $S$ for $G$ by traversing a setwise stabilizer chain for $G$ and applying \cref{lem:setwise-gen}.
    \smallskip
    
    \emph{(Correctness)}
    The algorithm first partitions the columns of $\mG$ into disjoint redundancy classes $\cE = \set{E_1, \dots, E_s}$, where each $E_i$ contains identical columns of a particular type, and no two classes contain the same vector.
    For this sorting process, the algorithm identifies the coordinates of the unique column vectors in $\mG$ in order of first appearance; let $B' = \set{t_1,\ldots,t_s} \subseteq [n]$ denote the elements of this subset.
    Note that each coordinate $t_i \in B'$ defines the redundancy class $E_i \in \cE$.
    By \cref{def:setwise-stab-chain}, this $B'$ defines a setwise stabilizer chain $G \geq G^{(E_1)} \geq \dots \geq G^{(E_s)}$.
    By definition, the smallest subgroup $G^{(E_s)}$ contains all permutations that trivially swaps columns in each redundancy class.
    % ; this is isomorphic to the direct product $\Sym(E_1) \times \dots \times \Sym(E_s)$.
    Then, every automorphism in $G^{(E_s)}$ can be expressed as a composition of transpositions, so a generating set for $G^{(E_s)}$ consists of all 2-cycle permutations for each redundancy class.
    The algorithm finds and adds all these transpositions to the final generating set $S$, and so it includes a generating subset $S_{E_s}$ for $G^{(E_s)}$.
    The maximum column multiplicity $m$ in $\mG$ is then computed by finding the cardinality of the largest redundancy class.

    Next, the algorithm makes a single oracle call to an \Apart oracle to obtain a partition of the index set $[n]$ into orbits under automorphisms in $G$.
    While not strictly necessary, this step provides an optimization that restricts the search space for automorphisms in the following step.

    To find a generating set for the remaining automorphisms in $G$, the algorithm traverses the chain $G \geq G^{(E_1)} \geq \dots \geq G^{(E_s)}$ in descending order by iterating through the elements of $B'$.
    In the $t_i$-th iteration, we find the coset representatives $R_i$ for the quotient space $G^{(E_{i-1})}/G^{(E_i)}$ by setwise stabilizing the columns in $E_{i-1}$ and searching for automorphisms that map $E_i$ to some $E_j$ for all larger indices $t_j \in B'$ in the orbit $X$ of $t_i$.
    By testing only the unique representatives of each redundancy class, the algorithm evaluates exactly one target per class, avoiding redundant oracle queries.

    Any automorphism that maps $E_i$ to $E_j$ maps all copies of column $\vecg_{t_i}$ to all copies of column $\vecg_{t_j}$, where $t_i,t_j\in B'$.
    To test for the existence of such automorphisms, we use a similar technique to that in \cite{BM23,CSV25} and augment the matrix $\mG$ with copies of columns to ensure unique frequency.
    In particular, we construct test matrices $\mG_A$ and $\mG_B$ by appending blocks of $(l+1)m$ copies of $\vecg_{t_i}$ and $(l+1)m$ copies of $\vecg_{t_j}$, respectively, to the current stabilized matrix in which column classes $E_1,\ldots,E_{i-1}$ are fixed setwise.
    By definition of $m$ and $l$, no other column in $\mG_A$ and $\mG_B$ appears the same number of times, so $\vecg_{t_i}$ and $\vecg_{t_j}$ have the same unique frequency.
    By \cref{lem:col-frequency}, the \PCE oracle can only respond \textsc{Yes} if there is an automorphism that maps $E_i$ to $E_j$,
    Thus, the oracle is forced to decide equivalence exclusively within the stabilizer subgroup $G^{(E_{i-1})}$.

    If an automorphism that maps $E_i$ to $E_j$ exists, then the algorithm calls the \sPCE oracle to find it.
    It outputs a map $\pi$ that permutes the columns of the current fixed matrix $\mG_A$, which has at least $n + (l+1)m \geq n+m$ columns in any given iteration.
    We derive a permutation $\sigma \in \cS_n$ on the original index set by extracting the images $\pi(i)$ for all $i \in [n]$.
    Any coordinate $c$ that maps into the desired range $[n]$ is preserved; if $c$ is otherwise mapped to a larger index $\pi(c) > n$, the algorithm identifies the redundancy class of the target $\pi(c)$ and finds the coordinate in $[n]$ of the corresponding identical column.
    To ensure that no collisions are created among the image elements of $\sigma$, the algorithm tracks all unassigned image values with the set $U \subseteq [n]$ and assigns $\sigma(c)$ to be the smallest available coordinate $t' \in U$.
    % \anote{What does this mean?}
    % Since structural equivalence strictly preserves the multiplicities of all column classes, the pool $U$ is guaranteed to contain an available target of the correct class for every $c \in [n]$, ensuring $\sigma$ is a perfect bijection on the original coordinates. 
    This ensures that $\sigma$ is a bijection on the index set $[n]$.
    Since the algorithm finds an automorphism for every valid pair $E_i$ and $E_j$, it finds all coset representatives $R_i$ for the  quotient space $G^{(E_{i-1})}/G^{(E_i)}$, for all $i\in [s]$.
    Thus, the final generating set $S$ contains $R_1,\ldots,R_s$ and a generating set $S_{E_s}$ for $G^{(E_s)}$.
    Therefore, by \cref{lem:setwise-gen}, this $S$ is a strong generating set for the entire automorphism group $G$.
    \smallskip

    \emph{(Complexity)}
    We first bound the number of oracle calls. 
    The algorithm makes a single query to the \Apart oracle to obtain the orbit partition of the index set $[n]$.
    % By \cref{thm:APART/MAPART->PCE/LCE}, \Apart can be solved using $O(n^2)$ queries to a \PCE oracle. 
    % 
    In the main for-loop, a target $t_j \in B'$ is tested against a base coordinate $t_i \in B'$ only if both belong to the same orbit $X$. 
    For any orbit $X$, the number of distinct column classes is bounded by $|X|$. 
    Thus, the outer for-loop evaluates the test branch at most $|X|$ times for this orbit, and each inner for-loop tests at most $|X|-1$ targets. 
    This limits the number of \PCE oracle queries to $O(|X|^2)$ for each $X$.
    Because the coordinate orbits form a partition of $[n]$, the sum of $O(|X|^2)$ across all orbits $X$ is bounded above by $O(n^2)$. 
    Therefore, the algorithm makes a total of % 2\cdot O(n^2)
    $O(n^2)$ calls to the \PCE oracle.
    The \SearchPCE oracle is invoked only when the existence of an automorphism is confirmed by the \PCE oracle, and so is invoked at most $O(n^2)$ times across the entire algorithm.
    % By \cref{lem:search-to-decision-PCE}, each \SearchPCE query requires $O(n^2)$ calls to a \PCE oracle.
    % Therefore, the algorithm makes a total of $O(n^4)$ calls to the \PCE oracle.
    
    Now we bound the runtime.
    Partitioning the index set $[n]$ into redundancy classes and constructing the set $B'$ requires comparing column vectors at most $n^2$ times, so this step requires $O(n^2 \log(n))$ time.
    By \cref{prop:alg-APART->PCE}, running the \Apart oracle requires $O(n^3 \log(q))$ time.
    By \cref{lem:search-to-decision-PCE}, running the \sPCE oracle requires $O(n^3 \log(q))$ time.
    Now we evaluate the cost of constructing the test matrices, which dominates the time complexity of the last step and the entire algorithm.
    The maximum column multiplicity in $\mG$ is $m \leq n$, and in each iteration $t_i \in B'$, the number of unique column classes fixed so far is $l \leq n$. 
    The matrix $\Gpin$ of setwise stabilized columns accumulates $\sum_{a=1}^l a\cdot m$ appended columns in each iteration.
    To construct $\mG_A$ and $\mG_B$, an additional block of $(l+1)m$ columns is appended to $\Gpin$.
    Thus, the number of columns appended to $\mG$ is bounded above by $n + m l(l+1)/2 + (l+1)m = O(n^3)$. 

    Each field element in $\F_q$ requires $\log(q)$ bits to represent and $O(\log(q))$ bit operations to read, copy, or compare. 
    The initial partitioning phase and computing the size of $G^{(E_s)}$ requires $O(n^2\log(q))$ and $O(n)$ operations respectively.
    Then, for each of the $O(n^2)$ oracle queries, copying at most $O(n^3)$ column vectors for the augmented matrices requires $O(k n^3\log(q))$ operations.
    If an automorphism $\pi$ is found by the oracle, constructing the permutation $\sigma$ requires $O(n)$ local comparisons.
    Therefore, since $k \leq n$, the total runtime is bounded by $O(n^2) \cdot O(kn^3\log(q)) = O(n^6\log(q))$ time. 
\end{proof}

% ===========================================
\subsection{\MAGEN Reduces to \LCE}
\label{subsec:MAGEN->LCE}

Now we prove the monomial analog of \cref{prop:alg-AGEN->PCE}.
The reduction algorithm is nearly identical to \cref{alg:AGEN->PCE} but uses the properties of monomial equivalence and an \LCE oracle instead of \PCE. 
For completeness, we include the full algorithm and proof below.

\begin{prop}
    \label{prop:alg-MAGEN->LCE}
    \Cref{alg:MAGEN->LCE} correctly solves \MAGEN for any $n$-length $q$-ary code in $O(n^6 \log(q))$ time using one call to an \MApart oracle, $O(n^2)$ queries to a \sLCE oracle, and $O(n^2)$ additional calls to an \LCE oracle.
\end{prop}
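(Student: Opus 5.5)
The plan is to mirror the proof of \cref{prop:alg-AGEN->PCE}, using the monomial setwise stabilizer chain $M = M^{(E_0)} \geq M^{(E_1)} \geq \dots \geq M^{(E_s)}$ of \cref{def:setwise-stab-chain} in place of the permutation chain. The key ingredient is the monomial analog of \cref{lem:setwise-gen}. Its proof is identical, since it uses only \cref{lem:subgroup-gener} applied along a chain of subgroups. By this analog, a generating set for $M$ consists of
\begin{itemize}
\item a generating set for $M^{(E_s)}$, and
\item left coset representatives $R_i$ for each quotient $M^{(E_{i-1})}/M^{(E_i)}$.
\end{itemize}
So the argument splits into generating the bottom group and finding the transversals.

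\textbf{Bottom group.} I would use the homomorphism $\rho$ from the proof of \cref{lem:mon-setwise-stab-chain-order}. Its kernel consists of the scaling-only automorphisms. By \cref{lem:indecomposable-scaling} applied to each of the $c$ indecomposable summands found by \cref{alg:decompose}, the kernel is exactly the set of vectors that are constant on each summand. It is therefore generated by $c$ elements: scale summand $r$ by a fixed generator of the cyclic group $\F_q^*$ and leave the other summands unchanged.

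The image of $\rho$ is $\prod_i \Sym(E_i)$. A preimage of the transposition of $e_1, e_k \in E_i$ is a monomial map that swaps these two columns. If $\vecg_{e_k} = \lambda \vecg_{e_1}$, it places scalar $\lambda$ on one coordinate and $\lambda^{-1}$ on the other, so the generator matrix is literally fixed.

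Kernel generators together with lifts of image generators generate the group. This gives $O(n)$ explicit generators in $O(n^2\log q)$ time, plus the $O(n^3\log q)$ decomposition.

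\textbf{Transversals.} Call \MAPART once to restrict the targets. For each $t_i \in B'$ and each $t_j \in X \cap B'$ with $t_j > t_i$, form $\mG_A$ and $\mG_B$ by appending $(l+1)m$ copies of $\vecg_{t_i}$ (respectively $\vecg_{t_j}$) to $\Gpin$. Here $m$ is the maximum projective multiplicity. By \cref{lem:col-frequency-LCE}, any linear isometry $\code(\mG_A)\to\code(\mG_B)$ must map the appended projective class onto the appended one. It must also preserve each previously pinned class, because the multiplicities $(a)m$ plus the original class sizes are pairwise distinct up to scaling. Hence \LCE answers \textsc{Yes} exactly when some $\mu \in M^{(E_{i-1})}$ sends $E_i$ onto $E_j$. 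In that case \sLCE returns $\mu' = (\vecv,\pi)$, which must then be converted into an element $\sigma$ of $(\F_q^*)^n \rtimes \cS_n$ acting on $\code$.

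\textbf{Main obstacle: the conversion step.} For each $c \in [n]$ with $\pi(c) > n$, I would redirect the image to an unused original column $t'$ in the same projective class. I would then adjust the scalar so that the $c$-th column of $\mG$ scaled by $\sigma$ equals the column that $\mu'$ produced, which is possible because $\vecg_{t'}$ is a nonzero multiple of that column.

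The resulting $\sigma$ differs from the restriction of $\mu'$ only by exchanging projectively equal columns with compensating scalars. Such an exchange maps the code $\code(\mG_A)$ restricted to $[n]$ onto $\code(\mG_B)$ restricted to $[n]$, both of which equal $\code$. So $\sigma \in \MAut(\code)$ and it sends $E_i$ to $E_j$.

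I expect the delicate point to be verifying that the redirected scalars are consistent with the codeword constraints. My plan is to check this at the level of generator matrices: show $\mS\mG\mM_\sigma = \mG$ using the same $\mS$ that $\mu'$ yields on $\mG_A$. This works because the map on columns differs only within projective classes, where the matching scalar corrections make the columns agree.

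\textbf{Complexity.} The counting of oracle calls (one \MAPART call, $O(n^2)$ \LCE calls, $O(n^2)$ \sLCE calls) and the $O(n^6\log q)$ cost of building the augmented matrices go through verbatim as in \cref{prop:alg-AGEN->PCE}.
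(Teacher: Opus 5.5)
Your proposal is correct and follows the paper's route. Both use the monomial setwise stabilizer chain with the analogue of \cref{lem:setwise-gen}, generate $M^{(E_s)}$ from per-summand scalings plus scalar-compensated transpositions, and obtain each transversal element from \LCE/\sLCE on the padded matrices, redirecting padded targets with the column ratio $\lambda$. Your version is in fact more detailed than the paper's. The one step to state explicitly, which the paper also glosses over, is that an unused target in the correct projective class always exists. This follows from your multiplicity argument: $\mS$ fixes each pinned class and sends $[\vecg_{t_i}]$ to $[\vecg_{t_j}]$, so after the appended blocks cancel it permutes the projective column multiset of $\mG$, giving exactly $\abs{E}$ preimages in $[n]$ for each class $E$.
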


\begin{proof}
    The proof is structurally identical to that of \cref{prop:alg-AGEN->PCE}, but using the setwise stabilizer chain framework for the monomial automorphism group.
    Let $M \geq M^{(E_1)} \geq \dots \geq M^{(E_s)}$ be the monomial setwise stabilizer chain for input generator matrix $\mG$, as in \cref{lem:mon-setwise-stab-chain-order}.
    The algorithm here finds monomial generators of the smallest subgroup $M^{(E_s)}$ by independently scaling the $c$ indecomposable summands of $\code(\mG)$ and adjusting the internal transpositions. 
    For each oracle call to \sLCE, the oracle finds a linear isometry for the augmented code that maps a column $i$ to another column $j$.
    The algorithm extracts a linear isometry for the original code $\code(\mG)$ with index set $[n]$ by remapping any out-of-bounds image coordinates as before, but now it multiplies the associated scalar by the corresponding column ratio $\lambda$. 

    The number of oracle calls is identical to that of \cref{alg:AGEN->PCE}: the \MApart oracle is called once; the number of \LCE oracle calls is bounded by the $O(n^2)$, the number of candidate test coordinate-target pairs; and the number of \sLCE oracle calls is bounded by $O(n^2)$, the number of positive responses from the \PCE oracle.
    For the time complexity, there are additional scalar operations, which take $O(\log(q))$ time in each loop.
    The column ratio $\lambda$ is also computed at most twice for each coordinate in $[n]$; since $\lambda$ can be obtained by dividing two column entries with the same index, this requires at most $O(n\cdot \log(n))$ time.
    Hence, the total runtime remains bounded by $O(n^6 \log(q))$.
\end{proof}

\begin{algorithm}[H]
    \DontPrintSemicolon
    \SetKwInOut{Input}{Input}
    \SetKwInOut{Output}{Output}
    \SetKwFunction{SearchLCE}{Search-LCE}

    \caption{Reduction from MAGEN to LCE}
    \label{alg:MAGEN->LCE}
    
    \Input{A generator matrix $\mG \in \F_q^{k \times n}$ of a linear code $\code \subseteq \F_q^n$.}
    \Output{A generating set $S \subseteq \cS_n$ for the monomial automorphism group $\MAut(\code)$.}
    \hrulefill
    
    \tcp{Step 1: Partition index set into projective redundancy classes.}
    Let $\set{\vecg_1,\ldots,\vecg_n}$ be the set of column vectors of $\mG$.\;
    $\cE \defn \emptyset$. \tcp{collection of redundancy sets}
    $\cG \defn \emptyset$. \tcp{keep track of unique projective column representatives}
    $B' \defn \emptyset$. \tcp{indices of first appearances}

    \For{$i \gets 1$ \KwTo $n$}{
        \tcp{If a column is a scalar multiple of a previously seen column, sort it into the same class.
        Otherwise, create a new class.}
        \eIf{$\exists\ \vecv \in \cG$ \emph{\textbf{such that}} $\vecg_i = \lambda \vecv$ \emph{for some} $\lambda \in \F_q^*$}{
            \For{$E\in \cE$}{
                $\vecv \gets E[1]$.\;
                \If{$\vecg_i = \lambda \vecv$ \emph{for some} $\lambda \in \F_q^*$}{
                    $E \gets E \cup \set{\vecg_i}$.\;
                }
            }
        }{
            $E \defn \set{\vecg_i}$.\;
            $\cG \gets \cG \cup \set{\vecg_i}$\;
            $\cE \gets \cE \cup \set{E}$\;
            $B' \gets B' \cup \set{i}$. \tcp{record index of first appearance}
        }
    }
    $m \gets 0$. \tcp{maximum projective column multiplicity in $\mG$}
    \For{$E \in \cE$}{
        \If{$|E| > m$}{
            $m \gets \abs{E}$\;
        }
    }
    \emph{(continued on next page)}   
\end{algorithm}
% manual page break
\begin{algorithm}[H]
    \DontPrintSemicolon
    \emph{(continued from previous page)} 
    
    % \BlankLine
    \tcp{Step 2: Build smallest subgroup $M^{(E_s)}$ from global scaling and internal swaps.}
    Run \cref{alg:decompose} on $\mG$ to obtain the decomposition $\pi(\code) = \bigoplus_{j=1}^c A_j$, for some indecomposable codes $\set{A_j}_{j=1}^c$ and permutation $\pi \in \cS_n$.
    
    $S \gets \emptyset$. \tcp{initialize generating set}
    $\alpha \defn$ a primitive element of $\F_q^*$.\;
    \For{$j \gets 1$ \KwTo $c$}{
        \tcp{Generator for independent scaling of summand $A_j$}
        $\sigma = (\vecv_\sigma, \pi_\sigma) \defn (\vecone, \id)$.\;
        \For{$u \in \supp(A_j)$}{
            $\vecv_\sigma[u] \gets \alpha$.\;
        }
        $S \gets S \cup \set{\sigma}$.\;
    }
    
    \For{$E \in \cE$}{
        $\vecg_{e_1} \gets E[1]$.\;
        \For{$\vecg_{e_i} \in E$ \emph{\textbf{where}} $\vecg_{e_i} \neq \vecg_{e_1}$}{
            \tcp{Generator for internal swap uniquely determined by scalars}
            $\sigma = (\vecv_\sigma, \pi_\sigma) \defn (\vecone, \id)$.\;
            $\pi_\sigma(e_1) \gets e_i$\;
            $\pi_\sigma(e_i) \gets e_1$\;
            Find $\lambda \in \F_q^*$ such that $\vecg_{e_i} = \lambda \vecg_{e_1}$.\; 
            $\vecv_\sigma[e_1] \gets \lambda$\;
            $\vecv_\sigma[e_i] \gets \lambda^{-1}$\;
            $S \gets S \cup \set{\sigma}$.\;
        }
    }
    \BlankLine
    
    \tcp{Step 3: Find the monomial orbit partition via \MApart oracle.}
    Run $\MAPART(\mG)$ to obtain monomial coordinate orbits $X_1,\ldots,X_\ell$.\;
    
    \BlankLine
    \tcp{Step 4: Compute all coset transversals $\sigma \in R_i$ along the stabilizer chain.}
    $\cG \gets \emptyset$. \tcp{track permanently fixed columns in subchain}
    $\Gpin \gets \mG$. \tcp{initialize global base matrix}

    \emph{(continued on next page)}   
\end{algorithm}
% manual page break
\begin{algorithm}[H]
    \DontPrintSemicolon
    \emph{(continued from previous page)}

    \For{$i \in B'$}{
        $X \defn$ unique orbit in $\set{X_1,\ldots,X_\ell}$ containing $i$.\;
        
        \tcp{Find a mapping from $E_i$ to each valid target class.}
        \For{$j \in X \cap B'$ \emph{\textbf{where}} $j > i$}{
            $l \gets \abs{\cG}$.\;
            $\mA \defn (l+1)m$ copies of $\vecg_i$.\;
            $\mB \defn (l+1)m$ copies of $\vecg_j$.\;
            $\mG_A \gets \bracks{\Gpin \mid \mA}$.\;
            $\mG_B \gets \bracks{\Gpin \mid \mB}$.\;
            
            \BlankLine
            \tcp{Check if there exists an automorphism that maps $i$ to $j$.}
            \If{$\LCE(\mG_A, \mG_B) \to $ \emph{\textsc{Yes}}}{
                \BlankLine
                \tcp{If an automorphism exists, find it via the \SearchLCE oracle.}
                Run $\SearchLCE(\mG_A, \mG_B)$ to obtain a linear isometry $\mu=(\vecv_\mu,\pi_\mu)$.\;
                
                $U \gets \set{1, \dots, n}$. \tcp{keep track of mapped indices}
                $\sigma = (\vecv_\sigma,\pi_\sigma) \defn (\vecone,\id)$. \tcp{initialize generator}

                \BlankLine
                \For{$c \gets 1$ \KwTo $n$}{
                    \tcp{Identify the original column $t$ that the oracle mapped to.}
                    \eIf{$\pi_\mu(c) \le n$}{
                        $t \defn \pi_\mu(c)$\;
                    }{
                        $t \defn$ original coordinate in $[n]$ that the padded column $\pi_\mu(c)$ is a copy of.\;
                    }
                    
                    \tcp{Find an available projectively equivalent target $t'$ in $U$.}
                    $t' \defn \min\set{ u \in U \mid \vecg_u = \lambda \vecg_t \text{ for some } \lambda \in \F_q^*}$\;
                    Find $\lambda \in \F_q^*$ such that $\vecg_{t'} = \lambda \vecg_t$.\;
                    
                    \tcp{Redirect the map and adjust the scalar.}
                    $\pi_\sigma(c) \gets t'$\;
                    $\vecv_\sigma[c] \gets \lambda \cdot \vecv_\mu[c]$\;
                    $U \gets U \setminus \set{t'}$\;
                }
                $S \gets S \cup \set{\sigma}$. \tcp{add transversal representative}
            }
        }
        
        \BlankLine
        $\mA \defn (l+1)m$ copies of $\vecg_i$.\;
        $\Gpin \gets \bracks{\Gpin \mid \mA}$. \tcp{fix base point for next iteration}
        $\cG \gets \cG \cup \set{\vecg_i}$.\;
    }
    
    \BlankLine
    \Return{$S$}.
\end{algorithm}

\section{Conclusion}%
\label{sec:conclusion}

We show that \CE is a polynomially equivalent problem to several computational problems relating to the automorphism group of a code.
These problems ask for the cardinality, an orbit partition, and a generating set of the automorphism group of a given code, and are called \ACOUNT, \APART, and \AGEN, respectively. 
The relationship between these problems provides a way to solve the permutation (\PCE) and monomial (\LCE) variants of \CE, given some information about the automorphism groups of the input codes, and vice versa.
A key component of our reductions is a method of decomposing any linear code into indecomposable codes; this framework provides a way of studying permutation and monomial equivalence, and may be of independent interest to other areas of coding theory.
For future work, it remains to see how our reductions can inform the construction and analysis of cryptosystems built on Code Equivalence.
Given the similarity between \CE and \LIP, it would also be interesting to study analogous problems for lattices.

%=====================================================================
\newpage
\bibliography{references}

\end{document}